\newtheorem{theorem}{Theorem}[section]
\newtheorem*{theorem*}{Theorem}
\newtheorem{lemma}[theorem]{Lemma}
\newtheorem{proposition}[theorem]{Proposition}
\theoremstyle{definition}
\newtheorem{definition}[theorem]{Definition}
\theoremstyle{remark}
\newtheorem{conjecture}[theorem]{Conjecture}
\tikzstyle{block} = [rectangle, rounded corners, minimum width=3cm, minimum height=1cm, text centered, draw=black]
\tikzstyle{arrow} = [thick,->,>=stealth]
\Crefname{assumption}{assumption}{assumptions}
\newcommand{\overbar}[1]{\mkern 1.5mu\overline{\mkern-1.5mu#1\mkern-1.5mu}\mkern 1.5mu}
\numberwithin{equation}{section} 
\colorlet{yyred}{red!50!black}
  \@citea\NAT@hyper@{%
    \NAT@nmfmt{\NAT@nm}%
    \hyper@natlinkbreak{\NAT@aysep\NAT@spacechar}{\@citeb\@extra@b@citeb}%
    \NAT@date
  }%
  \@citea\NAT@hyper@{%
    \NAT@nmfmt{\NAT@nm}%
    \hyper@natlinkbreak{\NAT@spacechar\NAT@@open\if*#1*\else#1\NAT@spacechar\fi}%
    {\@citeb\@extra@b@citeb}%
    \NAT@date
  }%
\newcommand{\E}{\mathbb{E}}
\title[]{Novel Lower Bounds on M/G/k Scheduling}
\author[Z. Wang]{Ziyuan Wang}
\thanks{Department of Industrial Engineering and Management Sciences, Northwestern University, Evanston, United States. \texttt{ziyuanwang2027@u.northwestern.edu}}
\author[I. Grosof]{Izzy Grosof}
\thanks{Department of Industrial Engineering and Management Sciences, Northwestern University, Evanston, United States.
\texttt{izzy.grosof@northwestern.edu}}
\date{\today}
\subjclass[2010]{}
\begin{document}

\begin{abstract}
In queueing systems, effective scheduling algorithms are essential for optimizing performance. Optimal scheduling for the M/G/k queue has been explored in the heavy traffic limit, but much remains unknown in the intermediate load regime.

In this paper, we give the first framework for proving nontrivial lower bounds on the mean response time of the M/G/k system under arbitrary scheduling policies. Our bounds tighten previous naive lower bounds by more than 60\%, yielding significant improvements particularly for moderate loads. Key to our approach is a new variable-speed queue,  which more accurately captures the work completion behavior of multiserver systems. To analyze the expected work of this queue, we develop a novel manner of employing the drift method or the BAR approach, by developing test functions via the solutions to a differential equation.

We validate our results numerically for systems with up to 5 servers and a range of job size distributions.
\end{abstract}

\maketitle

\section{Introduction}\label{sec:intro}

In queueing systems, effective scheduling algorithms are essential for optimizing performance in a wide range of modern applications. In the analysis of queueing systems, one of the most important performance metrics is the mean response time, denoted as $\E[T]$. An object of particular interest is the optimal scheduling policy, which minimizes the mean response time among all policies. In the single-server setting, optimal scheduling is well understood. 

However, many modern queueing systems are multiserver systems. In particular, many practical systems operate with a small number of parallel servers and preemptible workloads with known and predictable sizes. Examples include personal devices such as cell phones or lower-end laptops with 2–8 cores processing short machine learning tasks, a single GPU partitioned across multiple applications, file transfers over a network with a limited number of concurrent connections, and hospital emergency rooms where triage allows urgent cases to preempt routine consults. Thus, a deeper understanding of optimal scheduling in the multiserver setting with preemption and predictable job sizes is required.

All of the above examples involve preemptible jobs. For instance, training a small machine learning task incurs only minor overhead when pausing and storing model parameters and gradient information. File transfers and downloads can be paused and resumed when needed. Preemption in hospital emergency departments is even more straightforward, as urgent cases clearly take precedence over routine ones.

Moreover, in each of these cases, job duration is at least partially predictable in advance. Machine learning tasks often have relatively predictable durations (\citet{Hutter2014}), as a function of the number and size of predictors as well as the complexity of the model. GPU workloads consist largely of matrix multiplications with with accurately measurable operations per second (\citet{NvidiaGPUPerfGuide}). In most file transfer settings, file sizes are known. The triage in emergency departments provides some indication of how long a treatment is likely to take.

The seminal work of \citet{Schrage1968Proof} proves that the Shortest Remaining Processing Time policy (SRPT) minimizes the mean response time of jobs in an $M/G/1$ queue with preemption and known sizes. \citet{Schrage1966} also characterize the mean response time of SRPT exactly. The single-server optimality of SRPT has been used to prove optimality results in the $M/G/k$ under asymptotic load conditions. When the job sizes are known upon arrival, \citet{Grosof2018} show that the multiserver SRPT  is optimal in the heavy-traffic regime, as the load $\rho$ approaches the capacity of the system. Likewise, when job sizes are unknown, single-server results have been used to prove multiserver optimality.
If the job sizes are unknown but drawn from a known distribution, the optimal scheduling policy in the $M/G/1$ is known to be the Gittins index policy (\citet{Gittins1979}).
Correspondingly, \citet{Scully2020} show that the multiserver Gittins policy is optimal in the heavy-traffic regime. 
At the opposite extreme, in the light traffic limit, it is rare for nontrivial scheduling options to be available, so the choice of scheduling policy is less important.

However, much less is known about optimal $M/G/k$ scheduling under moderate loads. This intermediate regime often reflects the conditions faced by many real-world systems. Recently, the SEK policy introduced by \citet{Grosof2026} is proven to beat SRPT-$k$ in the moderate load regime. The fact that SRPT-$k$ is not optimal raises an important question, 

\begin{quote}
    \textit{How much improvement beyond SRPT-$k$ and SEK is possible under moderate load?}
\end{quote}

Specifically, we aim to prove tighter lower bounds on the mean response time of $M/G/k$ than the existing naive lower bounds systems under arbitrary scheduling policies, see \Cref{sec:main}. We focus on the case where job sizes are known in advance, and discuss how to extend our framework to handle unknown or estimated sizes in \Cref{sec:generalization}.

\subsection{Challenges in analyzing $M/G/k$ scheduling}
\label{sec:challenges}

The analysis of arbitrary scheduling policies in $M/G/k$ systems presents substantial challenges compared to the single-server case, and there have been limited results regarding the mean response time of $M/G/k$ systems under arbitrary scheduling policies.

The \textit{tagged-job} method (\citet{Harchol-Balter2013}) is a classical tool for analyzing single-server queues under many different scheduling policies, including the broad SOAP class of policies (\citet{Scully2018}). However, the tagged-job method breaks down in the analysis of multiserver scheduling. Unlike the single-server setting, the rate at which a multiserver queue completes work varies, making it intractable to quantify the random amount of work encountered by the tagged job in the system.

As such, many results focus on the heavy-traffic limit, where the work completion behavior of an $M/G/k$ system is nearly identical to that of a single-server queue, allowing the tagged-job method to be reintroduced. For example, \citet{Grosof2018} prove that multiserver SRPT is heavy-traffic optimal using a multiserver tagged-job method. Unfortunately, such results are not tight when applied to scheduling in the intermediate load regime. 

Another more flexible tool, the drift method (\citet{Eryilmaz2012}), likewise encounters additional challenges when applied to queues with variable work completion rate, and has not been employed in multiserver systems with arbitrary scheduling.

In addition, there have been approximation results for $M/G/k$ systems under specific policies, such as the First-Come First-Serve (FCFS) policy (\citet{Gupta2011} and \citet{Gupta2010}). There have also been attempts at exact solutions for multiserver queues using matrix analytic methods, but the state of the art is limited to FCFS scheduling and scheduling policies with no more than two priority classes (\citet{Sleptchenko2005}).

\subsection{Our contributions}

In this paper, we present the first nontrivial lower bounds on the mean response time of any $M/G/k$ system.

In the prior literature, only two lower bounds on mean response time for the $M/G/k$ queue under arbitrary scheduling policies have appeared. These bounds are the mean service time and the \emph{resource-pooled} single-server SRPT response time (see \Cref{sec:wine}). 

In this paper, we introduce three new lower bounds that significantly tighten these prior bounds: the \emph{MixEx bound}, the \emph{ISQ bound}, and the \emph{ISQ-Recycling bound}, see \Cref{sec:aux_isq} for our terminology regarding recycling. These bounds progressively improve upon one another and their degree of improvement are numerically explored in detail in \Cref{sec:empirical}. Our bounds hold for any arbitrary scheduling policy and hold across all system loads.

To derive these bounds, we introduce a novel single-server variable-speed queue, the Increasing Speed Queue (ISQ-$k$). ISQ-$k$ allows us to bridge the multiserver and single-server queues in this challenging intermediate-load environment. Moreover, we develop a novel DiffeDrift extension to the drift method to analyze variable speed queues, including ISQ-$k$.

To lower bound mean response time, we start by proving that the mean total work of the ISQ-$k$ lower bounds the mean total work of any $M/G/k$ system under arbitrary scheduling policies and at all loads (see \Cref{sec:isq}).
We build on this result to lower bound the mean \emph{relevant} work of $M/G/k$ systems under arbitrary scheduling policies.

Our analysis of the ISQ-$k$ system is based on the drift method, which relies on the careful selection of test functions (\citet{Eryilmaz2012}). However, most prior applications of drift methods either use standard test functions that are not customized to the system or employ rescaling techniques to eliminate variable rates (\citet{Braverman2017} and \citet{Hurtado-Lange2020}). Neither of these methods can accommodate complex variable work completion rates and give a tight analysis of the ISQ-$k$ system.

To overcome this limitation, we introduce a novel method of deriving test functions, which we term the DiffeDrift method. This consist of formulating our test functions as solutions to differential equations.

We discuss the DiffeDrift method in \Cref{sec:derive}. The test functions derived in this manner are not covered by existing approaches to the drift method, specifically the existing Basic Adjoint Relationship (BAR) results (see \Cref{sub_sec:prior_bar}). The standard approach would involve a sequence of truncations of these test functions, which would be highly complex given their nontrivial form. Instead, we prove a novel BAR result that accommodates our test functions in \Cref{sec:drift}.

Our DiffeDrift method allows us to select the drift first, then come up with the test function. We use this flexibility to handle the variable work completion rate of the ISQ-$k$ system.
We believe the DiffeDrift method and our BAR extension, \Cref{prop:drift}, have broad applicability to queueing systems.

\setlength{\belowcaptionskip}{0pt} 

\begin{figure}[htbp!]
    \centering
    \includegraphics[width=0.85\textwidth]{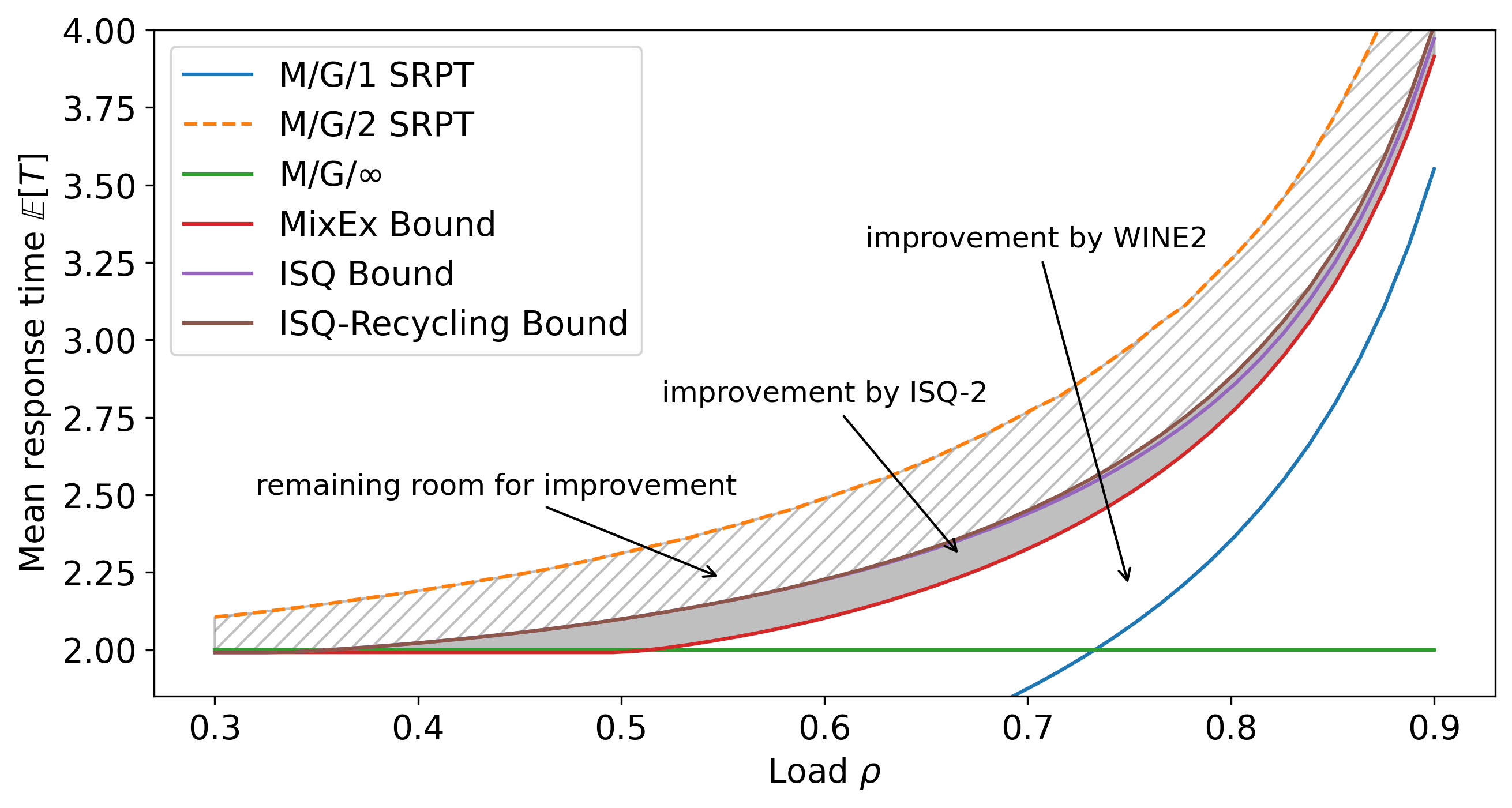}
    \caption{$M/G/2$ mean response time under SRPT-2, compared against naive lower bounds ($M/G/1$/SRPT and $M/G/\infty$, and against our lower bounds. See \Cref{sec:wine} for the definition of MixEX, ISQ and ISQ-Recycling bounds.}
    
    \label{fig:intro_wine}
\end{figure}

\subsection{Improvement upon prior results}
We now show the massive improvement of our lower bounds over the prior naive lower bounds, all of these bounds. Consider a two-server setting with an exponential job size distribution, as shown in \Cref{fig:intro_wine}. The blue line represents the mean response time of the resource pooled SRPT system, and the green line represents the mean service time bound. The empirical mean response time of $M/G/2$-SRPT is given by the dashed orange curve. While SRPT-2 is known to not be optimal in this setting (\citet{Grosof2026}), it gives a useful sense of where the optimal policy may lie. Both lower bounds are observed to be loose in the intermediate load regime. 

The \textit{MixEx bound} given by the red curve, represents our novel combination of the mean service time bound and the resource-pooled SRPT bound via the WINE method (see \Cref{sec:wine}). This bound significantly improves upon the naive bounds, particularly in the intermediate regime. The purple curve is the \textit{ISQ bound}, which incorporates our novel ISQ-$k$ system. Finally, we call our strongest lower bound, represented by the brown curve in \Cref{fig:intro_wine}, the the \textit{ISQ-Recycling bound}.

To quantify the degree of improvement of our novel lower bounds, we introduce a key performance metric, the \textit{Uncertainty Improvement Ratio} (UIR), which we define in \Cref{def:uir}. UIR measures the fraction of the uncertainty region closed by a given bound relative to the best prior lower bound. More concretely, the uncertainty region is defined as the gap between the simulated SRPT-$k$ response time and the strongest available prior bound. A new bound achieves a UIR equal to the percentage of this gap that it eliminates.

With the help of ISQ-$k$, our ISQ-Recycling bound provides a substantial improvement over our MixEx, which itself represents a significant advance over the naive bounds. In \Cref{fig:intro_UIR} with exponential job size distribution, MixEx achieves a maximum UIR of about 50\% relative to the naive bounds. Building on this, the ISQ-Recycling bound attains a UIR exceeding 30\% relative to MixEx in the same setting and for a total UIR of 62\% relative to the naive bound, while maintaining meaningful improvement across the entire medium-load range ($\rho = 0.4$ to $0.95$). Moreover, in the low variability setting, the ISQ-Recycling bound attains a notably high UIR of 70\% relative to MixEx bound (see \Cref{sec:varying_jobsizes}). 

By contrast, the recently introduced SEK policy (\citet{Grosof2026}), which was proven to outperform SRPT-$k$ at all load levels and aims to tighten upper bounds through improved simulated response times, achieves only a single-digit empirical UIR of 3\% (see \Cref{apdx:sek}). This highlights the scale of our improvement: while prior advances in upper bounds have yielded only modest gains, our framework achieves order-of-magnitude more UIR than new upper bounds, substantially narrowing the uncertainty region (see \Cref{sec:upper_bound}).

\begin{figure}[htbp!]
    \centering
    \includegraphics[width=0.85\textwidth]{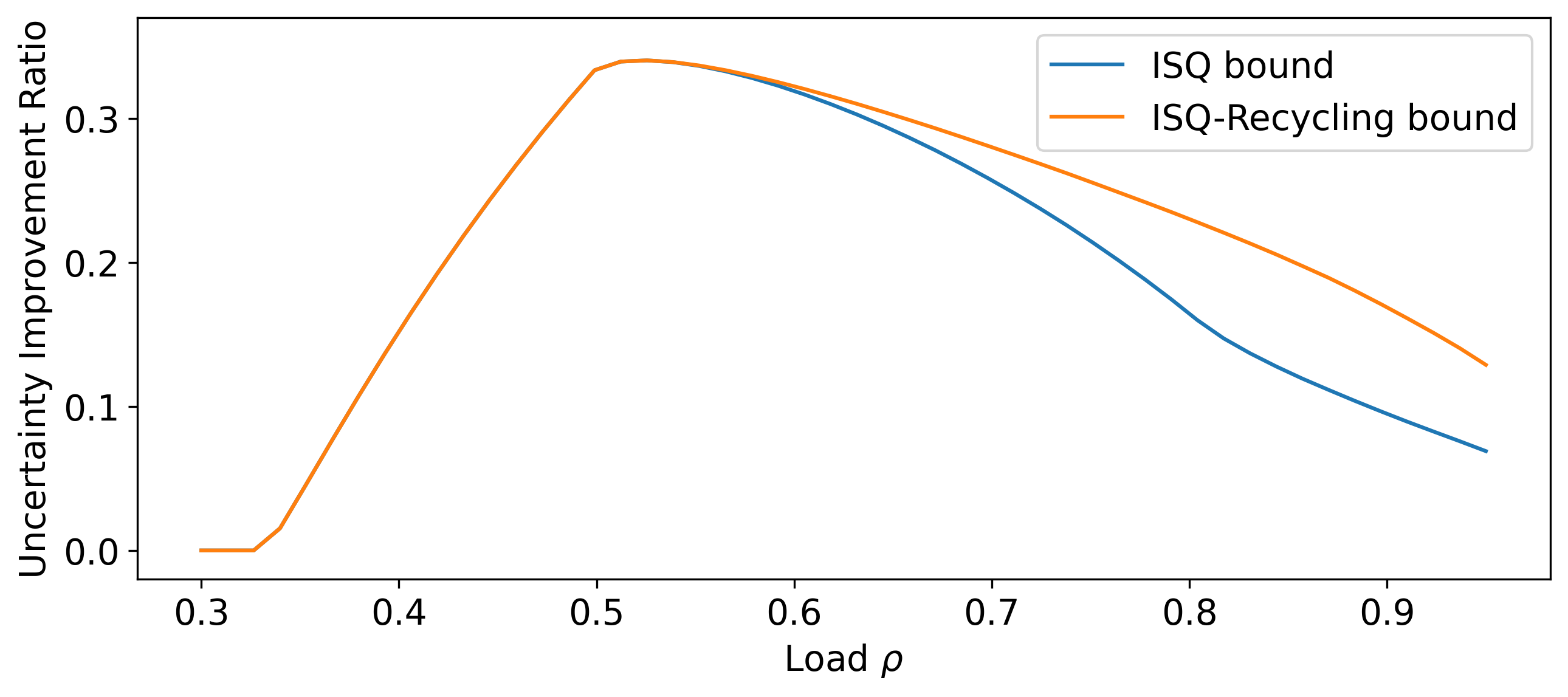}
    \caption{UIR relative to MixEx bound of the ISQ and ISQ-Recycling bound in the $M/G/2$ setting with $\text{Exp}(1)$ job sizes. See \Cref{sec:wine} for definitions of ISQ and ISQ-Recycling bounds.}
    
    \label{fig:intro_UIR}
\end{figure}

Our theoretical framework applies to any job size distribution, any number of servers $k\geq 2$, and any load in the $M/G/k$ system, as well as any scheduling policy.

\subsection{Outline of paper}
This paper is organized as follows:
\begin{itemize}
    \item \Cref{sec:prior_work}: We discuss prior work on multiserver scheduling and the drift method.
    \item \Cref{sec:model}: We introduce the Increasing Speed Queue (ISQ-$k$), related auxiliary systems and notation.
    \item \Cref{sec:main}: We present our main results and provide a proof outline.  
    \item \Cref{sec:wine}: We introduce the WINE-based framework for lower bounding the mean response time of the $M/G/k$ system and combine it with existing lower bounds.
    \item \Cref{sec:isq}: We establish results comparing the work in the ISQ-$k$ system to that in the $M/G/k$ system.
    \item \Cref{sec:drift}: We introduce the drift method and prove a version of the Basic Adjoint Relationship (BAR) applicable to our setting.
    \item \Cref{sec:mg2}: We start by proving our main results in the $k=2$ server case, lower bounding the $M/G/2$ under arbitrary scheduling policies before generalizing our results to the $k$-server setting.
    \item \Cref{sec:derive}: We introduce the DiffeDrift method, which we use to derive the ISQ-2 test functions, and generalize our results to the $k$ server system.
    \item \Cref{sec:mgk}: We prove our main results in the general $k\geq3$ case,  lower bounding $M/G/k$ mean response time under arbitrary scheduling policies.
    \item \Cref{sec:generalization}: We discuss how our framework can be extended to settings with unknown job sizes and to cases where job sizes are estimated. We also discuss computational considerations for large $k$.
    \item \Cref{sec:empirical}: We empirically demonstrate the tightness of our bounds via numerical calculation and simulation. 
\end{itemize}

\section{Prior Work}
\label{sec:prior_work}

\subsection{Shortest remaining processing time}
As discussed in the introduction, the Shortest-Remaining Processing Time (SRPT) policy is optimal in minimizing the mean response time in the $M/G/1$ queue (\citet{Schrage1968Proof} and \citet{Schrage1966}). In the multiserver setting, SRPT-$k$ is heavy-traffic optimal and provides an upper bound on the mean response time that is tight at very high loads (\citet{Grosof2018}).

At the opposite extreme, when there are no arrivals, it is well known that the Shortest Job First (SJF) policy is the optimal nonpreemptive policy in the multiserver setting. Furthermore, \citet{McNaughton1959} proves that no preemptive policy can outperform the optimal nonpreemptive policy. Since SRPT reduces to SJF when there are no arrivals, SRPT is also optimal in the multiserver setting.

However, SRPT-$k$ is not always the optimal policy across all load conditions. Notably, the SEK policy, recently introduced by \citet{Grosof2026}, empirically outperforms the response time of SRPT-$k$ under certain conditions. This raises the question of how much further improvement beyond SRPT-$k$ is possible.

The focus of this paper is on establishing lower bounds for the mean response time of $M/G/k$ systems under arbitrary policies, which has been noted as an open problem (\citet{Grosof2019Open}).

\subsection{Multiserver mean response time analysis}

The analysis of multiserver FCFS systems, particularly in the context of $M/G/k$-FCFS queues, is relatively well understood. Early work by \citet{Kingman1970} and \citet{Daley1998} provides bounds on the mean response time based on the first two moments of the job size distribution. \citet{Harchol-Balter2005} conduct an exact analysis of the mean response time in $M/Ph/k$-FCFS systems. Since any arbitrary distribution can be closely approximated by a phase-type distribution, their results offer a useful approximation for mean response time in $M/G/k$-FCFS systems. 

More recently, \citet{Gupta2011} demonstrate how utilizing the full sequence of moments of the job size distribution can yield 
bounds on mean response time. \citet{Li2024} extend Kingman’s bounds (\citet{Kingman1962}) to the multiserver $GI/GI/k$ queue and provide the first simple and explicit bounds for mean waiting time that scale with $1/(1-\rho)$.  These results allow for bounds on the mean response time in $M/G/k$-FCFS queues across all load levels. In contrast, our work considers a more general setting by allowing for any arbitrary scheduling policies and focuses on establishing lower bounds.

\subsection{Relevant work analysis}
\label{sec:relevant_recycling}

Relevant work is the amount of work in the system that is prioritised ahead of a given job. Relevant-work analysis has been widely used to study the response time of many scheduling policies, such as Preemptive Shortest Job First (PSJF) and Least Attained Service (LAS) (\citet{Harchol-Balter2013}).

Relevant work analysis can be applied to both single server and multiserver systems. The Work Integral Number Equality (WINE) formula, originally introduced by \citet{Scully2020} and discussed in detail in \Cref{sec:wine}, provides a new method for computing mean response time using SRPT-based relevant work. This formula holds for any queueing system under any scheduling policy not just the SRPT-based policy. Our approach builds on this framework, using WINE as the foundation for developing lower bounds on the mean response time.

Relevant work analysis was specifically developed for index policies. These are policies that assign each job a numerical index computed from the job’s own state, and select the job with the best index. \citet{Scully2018} introduce the terms ``recycling" and ``recycled jobs" to describe jobs whose original index exceeds a cutoff and later decreases below that cutoff, in order to keep track of more complex index functions such as the Gittins index. Since then, this concept and terminology have been used in several subsequent works, including the Gittins-$k$ policy (\citet{Scully2020}) and ServerFilling-SRPT (\citet{Grosof2023Optimal}). In our setting, we preserve this terminology and refer to jobs with an initial size larger than $x$ that have received service to bring their remaining size below $x$ as recycled jobs and such events as recyclings.

Finally, \citet{Scully2020} prove that the Gittins-$k$ policy is nearly optimal in the $M/G/k$ under extremely general conditions. The Gittins-$k$ policy coincides with SRPT-$k$ when there is perfect information about job sizes, which is the setting we study in this paper.

\subsection{Basic adjoint relationship (BAR) and drift method}
\label{sub_sec:prior_bar}

The well known Basic Adjoint Relationship (BAR) equation states that the stationary distribution $\pi$ of a continuous-time Markov chain $\{Z_t\}_{t\geq 0}$ with instantaneous generator $G$ satisfies
\begin{align}
\label{eqn:steady_state}
    \E_{\pi}[G\circ g(Z)]=0,
\end{align}
under suitable conditions on the Markov chain $\{Z_t\}_{t\geq 0}$ and the function $g$ (\citet{Glynn2008}). By carefully designing the test function $g$, one can solve the BAR equation \eqref{eqn:steady_state} to obtain either exact expressions for the moments of stationary variables or (asymptotically tight) bounds on these moments, depending on the system. This method is commonly referred to as the drift method (\citet{Eryilmaz2012,Grosof2023TRaM,Hong2023} and \citet{Maguluri2016}). We use the drift method to obtain the exact expected total work of the ISQ-$k$ system,
and exact results and bounds for further systems we introduce in \Cref{sec:model}. We use these exact results and bounds to lower bound mean response time in the $M/G/k$ under an arbitrary scheduling policy.
\subsubsection{Choosing test functions}

Recent applications of the drift method use simple exponential test functions of the form $e^{tQ}$, where $Q$ is the queue length, or $e^{tW}$, where $W$ is the total work in the system. For example, \citet{Braverman2017} and \citet{Braverman2024} use these test functions to prove heavy-traffic and steady-state approximations in both single-class and multi-class queueing systems. The use of exponential test functions is also known as the transform method (\citet{Hurtado-Lange2020} and \citet{jhunjhunwala2023}). 

However, these methods cannot be applied directly to queues with a variable work completion rate, since the test functions $e^{tQ}$ and $e^{tW}$ only yield useful information when the work completion rate is constant. Other recent works have focused on test functions of the form $Q^2$ or $W^2$ (\citet{Grosof2022MSJ}), which  likewise depend on a constant rate of work completion. Some attempts have been made to circumvent this problem by focusing on heavy traffic settings. For instance, \citet{Hurtado-Lange2020} rely heavily on state-space collapse to ensure a constant work completion rate except when the system is near-empty. As the system is rarely empty in heavy traffic, this state-space collapse allows the same straightforward test functions to be used, and heavy traffic results to be obtained.

However, in our intermediate-load setting, the variable work completion rate is fundamental, and we cannot eliminate that complexity from the problem. Instead, we introduce the DiffeDrift approach (see \Cref{sec:derive}), inventing a new class of test functions which can accommodate the ISQ-$k$ system's variable work completion rate.


\subsubsection{Sufficient conditions for BAR} Proposition 3 of \citet{Glynn2008} demonstrates a set of sufficient conditions for the BAR equation \eqref{eqn:steady_state} to hold. This proposition has been widely applied in many recent papers in queueing theory. Their result requires the state space to be discrete and certain regularity conditions to hold on the rate matrix of the Markov jump process and the test function $g$. For example, \citet{Grosof2023TRaM} and \citet{Hong2023} invoke Proposition 3 of \citet{Glynn2008} by showing that their respective Markov chains have uniformly bounded transition rates. However, this result does not directly apply to our system because our system has a continuous state space.

Theorem 2 of \citet{Glynn2008} provides another set of sufficient conditions for the BAR equation \eqref{eqn:steady_state} to hold. This result allows continuous state spaces but requires \emph{bounded-drift} test functions. However, it is not directly applicable to our unbounded quadratic test function, which has unbounded drift. 
For test functions with unbounded drift, a standard approach in the existing literature is to truncate these functions, thereby producing a sequence of bounded test functions that approximate the original (\citet{Braverman2017} and \citet{Guang2024}). 

We take an alternative approach, proving a new BAR result for time-homogeneous Markov processes with unbounded continuous state spaces and unbounded test functions that grow at most quadratically, see \Cref{sec:drift}. We find this approach much easier to apply than the standard truncation method given the complexity of our test functions arising from the DiffeDrift method.

\section{Model}
\label{sec:model}

In this section we specify our queueing model and introduce notation in \Cref{sec:notation}. We introduce a novel variable-speed queue called the Increasing Speed Queue (ISQ-$k$) in \Cref{subsec:isq}. Finally, we introduce two auxiliary queueing systems based on the ISQ-$k$ that we will use to lower bound the mean relevant work in the $M/G/k$, in \Cref{sec:aux_isq}. 

\subsection{Queueing model and notation}
\label{sec:notation}
We study lower bounds on the mean response time $\E[T^\pi]$ of the $M/G/k$ queue under an arbitrary Markovian scheduling policy $\pi$. Let $k$ denote the number of servers and $\lambda$ the arrival rate. A job's \emph{size} is the inherent amount of work in the job. Let jobs have i.i.d.\ sizes sampled from a job size distribution with probability density function (pdf) $f_S$ and cumulative distribution function (cdf) $F_S$. Let $S$ denote the corresponding random variable for job size and let $\widetilde{S}$ denote the Laplace–Stieltjes transform of $S$.
Note that we assume for simplicity that the job size distribution is continuous and has a pdf, but our results can be straightforwardly extended to a more general setting.
The service rate of each server is $1/k$, and the entire system has a maximum service rate of $1$, when all $k$ servers are occupied. In particular, a job of size $s$ will require a total service time of $ks$ to complete. We define the \textit{load} of the system to be the long-term fraction of servers which are in use. Load is given by $\rho=\lambda\E[S]$, and we assume $\rho < 1$ for stability.

We define a \emph{scheduling policy} $\pi$ to map the set of jobs currently in the $M/G/k$ system, as well as some auxiliary Markovian state,
to a choice of at most $k$ jobs to serve.
For example, a few common scheduling policies are First-Come First-Serve (FCFS-$k$), which serves the $k$ jobs which arrived longest ago, and Shortest-Remaining-Processing-Time (SRPT-$k$), which serves the $k$ jobs of least remaining size.

We say a job is \textit{relevant} at some threshold $x$
if its remaining size is below $x$.
This may occur if the job arrives into the system with an initial size $<x$, or if a job with an initial size $>x$ reaches a remaining size below $x$.
We refer to the process of a job receiving service and lowering its remaining size
as ``aging''.

We use $W$ to denote the total work in the system,
namely the sum of the remaining sizes of all jobs in the system.
We use $W_x$ to denote the total remaining size of all relevant jobs in the system, this is the SRPT \textit{relevant work}. However, our analysis applies to any arbitrary scheduling policy, not just those based on SRPT.

Let $\lambda_x:=\lambda F_S(x)$ denote the arrival rate of jobs which are relevant from the moment they enter the system.
Let $S_x:=[S\mid S\leq x]$ denote the conditional job size random variable,
truncated at a threshold $x$. The conditional load, $\rho_x$, at a threshold $x$ is the average rate at which works with initial size $<x$ arrives into the system is given by $\rho_x:=\lambda_x\E[S_x]=\lambda\E[S\mathbbm{1}_{\{S\leq x\}}]$.

We also define the capped job size random variable, $S_{\bar{x}} := [\min\{S, x\}]$, which represents the amount of relevant work at some threshold $x$ that a random job will contribute over its time in the system. The capped load, $\rho_{\bar{x}}$, is similarly given by $\rho_{\bar{x}}=\lambda\E[S_{\bar{x}}]=\lambda\E[\min\{S,x\}]$.

\subsection{Increasing speed queue}
\label{subsec:isq}

The increasing-speed queue (ISQ-$k$) is a single-server, variable speed queue. The speed of the ISQ-$k$ system is defined as follows: when the first job arrives to an empty queue, the server initially runs at speed $1/k$. If another job arrives before the system empties, the server now runs at speed $2/k$. With each arrival during a busy period, the server's speed increases by $1/k$ until it reaches the maximum speed of 1. The server maintains the maximum speed of 1 until the system empties and a busy period ends, resetting the speed to 0.
The state of the ISQ-$k$ system is given by a pair $(w,i)$, where $w$ denotes the work in the system and $i$ denotes the speed of the system. In particular, $i\in\{0,1/k,2/k,\cdots,1\}$. We write $W(t), I(t)$ to denote the state of the system at a particular time $t$, and we write $W, I$ to denote the corresponding (correlated) stationary random variables.

\subsection{Auxiliary ISQ systems and recycling}
\label{sec:aux_isq}
Recall that we call a job relevant if its remaining size is less than some threshold $x$. There are two ways for a job to be relevant: it either enters the system with size less than $x$, or it ages down from an initial size larger than $x$ to a remaining size less than $x$.
We refer to the event in which a job with initial size exceeding 
$x$ ages down to a remaining size below $x$ as \textit{recycling}, and we call any such job a \textit{recycled} job. This is standard terminology and see \Cref{sec:relevant_recycling} for how prior work uses the recycling terminology.

We define two auxiliary systems to help us lower bound the response time of the $M/G/k$ system: the \textit{separate ISQ-$k$} and the \textit{recycling ISQ-$k$}.

A separate ISQ-$k$ system (Sep-ISQ-$k$) consists of two subsystems, the truncated-ISQ-$k$ subsystem and the $M/G/\infty$ subsystem. Jobs arrive to the overall Sep-ISQ-$k$ system with the same arrival process as the $M/G/k$, jobs arriving into the system according to a Poisson process with rate $\lambda$ and with job size $S$. Jobs with initial sizes $\leq x$ are routed to an ISQ-$k$ subsystem, which we call the truncated-ISQ-$k$ subsystem. Specifically, the truncated-ISQ-$k$ system is an ISQ-$k$ system characterized by an arrival rate $\lambda_x = \lambda F_S(x)$, and job size $S_x \sim [S \mid S \leq x ]$.

On the other hand, jobs with initial sizes $> x$ are routed to the separate $M/G/\infty$ subsystem, each server operating at a speed of $1/k$. That is, jobs which are not relevant on arrival are served at a separate subsystem.

The recycling ISQ-$k$ system (Rec-ISQ-$k$) is an ISQ-$k$ system with two arrival streams, a Poisson arrival process with a rate of $\lambda F_S(x)$ and job size $S_x \sim [S\mid S\leq x]$, and a general arrival process $\{R_t\}$ with rate $\lambda(1-F_S(x))$ and job sizes exactly $x$.
The general arrival process is governed by some general Markovian process $\{R_t\}$ for which we only require stability and can be causally dependent on the ISQ-$k$ system, we denote the state of this process as $r$. The idea behind this general arrival process is that it models the moments at which jobs recycle in the M/G/$k$ system under a general Markovian scheduling policy.  We refer to the Poisson arrivals as the \emph{truncated stream} and the general arrival process as the \emph{recycling stream}.

We will use the triplet $(w, i, r)$ to denote a state of the recycling system where $w$ and $i$ are defined similarly to the states of the full ISQ-$k$ system and $r$ denotes a separate Markovian state from the recycling stream's state space, which incorporates additional information about the recycling stream. As before, $R$ denotes the corresponding stationary random variable.

In \Cref{thm:isq_sep}, we show that the mean relevant work in the Sep-ISQ-$k$ system lower bounds the mean relevant work in the $M/G/k$ system under any arbitrary scheduling policy, for any arrival rate $\lambda$, job size $S$, and relevancy cutoff $x$. In \Cref{thm:ar_isq}, we prove there always exists a Rec-ISQ-$k$ system lower bounding the mean relevant work of a $M/G/k$ system under an arbitrary scheduling policy. Therefore, our lower bound on the mean relevant work of Rec-ISQ-$k$ system with any recycling process $R$ must lower bound the mean relevant work of any $M/G/k$ system.

\section{Main Results}
\label{sec:main}
In this paper, we give the first nontrivial lower bounds on the mean response time of the $M/G/k$ system under an arbitrary scheduling policy. Our bounds are based on lower bounds on relevant work, which we can turn into bounds on response time via the WINE formula (\Cref{prop:wine}). Directly applying this gives the first novel lower bound, the MixEx bound. We improve upon these by proving stronger relevant work lower bounds with the ISQ-$k$ system, these give even stronger bounds, the ISQ and ISQ-Recycling bounds.

We start by lower bounding the 2-server system before moving on to the $k$-server system. For the 2-server system, these lower bounds on mean response time are as follows:

\begin{theorem}
\label{thm:main_2}
The mean response time of the optimal $M/G/2$ system is lower bounded as follows, which we refer to as the MixEx, ISQ, and ISQ-Recycling bounds, respectively:
\begin{align}
\label{eqn:main_mixex2}
E[T^{M/G/2/OPT}]&\geq\dfrac{1}{\lambda}\int_0^{\infty}\dfrac{\max\{\E[W_x^{M/G/1\text{-SRPT}}],\E[W_x^{M/G/\infty}]\}}{x^2}\,dx,\\
\label{eqn:main_isqs2}
E[T^{M/G/2/OPT}]&\geq\dfrac{1}{\lambda}\int_0^{\infty}\dfrac{\max\{\E[W_x^{M/G/1\text{-SRPT}}],\E[W_x^{M/G/\infty}],\E[W_x^{\text{Sep-ISQ-}2}]\}}{x^2}\,dx,\\
\label{eqn:main_isqr2}
E[T^{M/G/2/OPT}]&\geq\dfrac{1}{\lambda}\int_0^{\infty}\dfrac{\max\{\E[W_x^{M/G/1\text{-SRPT}}],\E[W_x^{M/G/\infty}],\E[W_x^{\text{Sep-ISQ-}2}],\E[W_x^{\text{Rec-ISQ-}2\text{-L}}]\}}{x^2}\,dx,
\end{align}
where $\E[W_x^{M/G/1\text{-SRPT}}]$ and $\E[W_x^{M/G/\infty}]$ are standard results given in \Cref{sec:wine} and
\begin{align}
\label{eqn:main_2_1}
\E[W_x^{{\text{Sep-ISQ-}2}}]&=\dfrac{\lambda_x\E[S_x^2]}{2(1-\rho_x)}+\dfrac{\E[S_x]-(1-\widetilde{S_x}(2\lambda_x))/2\lambda_x}{3-\widetilde{S_x}(2\lambda_x)}+(\lambda-\lambda_x)x^2, \\
\label{eqn:main_2_2}
\E[W_x^{\text{Rec-ISQ-}2\text{-L}}]&= \dfrac{\lambda_x \E[S_x^2]}{2 (1 - \rho_x)} + \dfrac{\E[S_x]-(1-\widetilde{S_x}(2\lambda_x))/2\lambda_x}{3-\widetilde{S_x}(2\lambda_x)}\cdot\dfrac{1-\rho_{\overbar{x}}}{1-\rho_x}+\dfrac{(\lambda - \lambda_x)x^2}{2(1-\rho_x)}.
\end{align}
\end{theorem}

\begin{proof}[Proof deferred to \Cref{sec:thm_main_2}]
\end{proof}

We call \Cref{eqn:main_mixex2} the \emph{MixEx bound}, as it combines two existing lower bounds on mean relevant work, namely the $M/G/1$-SRPT and $M/G/\infty$ relevant work, through the WINE formula. 
These existing lower bounds, together with the WINE formula, are explained in greater detail in \Cref{sec:wine}.  

The remaining parts of \Cref{thm:main_2} are proved in \Cref{sec:thm_main_2}, see \Cref{sec:proof_outline} for proof outline. 
We refer to \Cref{eqn:main_isqs2} as the \emph{ISQ bound} and to \Cref{eqn:main_isqr2} as the \emph{ISQ-Recycling bound}, because each introduces a new lower bound on the mean relevant work obtained from the Sep-ISQ-2 and Rec-ISQ-2 systems, respectively. Specifically, we derive \Cref{eqn:main_2_1,eqn:main_2_2} by applying our novel variant of the drift method, called the DiffeDrift method, to the Sep-ISQ-2 and Rec-ISQ-2 systems, respectively. 
These ISQ-2 variants are introduced in \Cref{sec:aux_isq}. Numerical studies of these bounds are given in \Cref{sec:empirical}.

We now state our analogous result for the more general $k$-server system, which we prove in \Cref{sec:mgk}.

\begin{theorem}
\label{thm:main_k}
The mean response time of the optimal $M/G/k$ system with $k\geq 3$ is lower bounded as follows, which we refer to as the MixEx, ISQ, and ISQ-Recycling bounds, respectively:
\begin{align}
\label{eqn:main_mixex_k}
E[T^{M/G/k/OPT}]&\geq\dfrac{1}{\lambda}\int_0^{\infty}\dfrac{\max\{\E[W_x^{M/G/1\text{-SRPT}}],\E[W_x^{M/G/\infty}]\}}{x^2}\,dx,\\
\label{eqn:main_isqsk}
E[T^{M/G/k/OPT}]&\geq\dfrac{1}{\lambda}\int_0^{\infty}\dfrac{\max\{\E[W_x^{M/G/1\text{-SRPT}}],\E[W_x^{M/G/\infty}],\E[W_x^{\text{Sep-ISQ-}k}]\}}{x^2}\,dx,\\
\label{eqn:main_isqrk}
E[T^{M/G/k/OPT}]&\geq\dfrac{1}{\lambda}\int_0^{\infty}\dfrac{\max\{\E[W_x^{M/G/1\text{-SRPT}}],\E[W_x^{M/G/\infty}],\E[W_x^{\text{Sep-ISQ-}k}],\E[W_x^{\text{Rec-ISQ-}k\text{-L}}]\}}{x^2}\,dx,
\end{align}
where $\E[W_x^{M/G/1\text{-SRPT}}]$ and $\E[W_x^{M/G/\infty}]$ are standard results given in \Cref{sec:wine} and
\begin{align}
\label{eqn:main_k_1}
\E[W_x^{\text{Sep-ISQ-}k}]&= \dfrac{\lambda_x\E[S_x^2]}{2(1-\rho_x)}+\dfrac{\lambda_x\E[v_1(S_x)]}{2+2\lambda_x\E[u_1(S_x)]} +\dfrac{k(\lambda-\lambda_x)x^2}{2},\\
\label{eqn:main_k_2}
\E[W_x^{\text{Rec-ISQ-}k\text{-L}}]&= \dfrac{\lambda_x\E[S_x^2]}{2(1-\rho_x)}+\dfrac{\lambda_x\E[v_1(S_x)]}{2+2\lambda_x\E[u_1(S_x)]}\cdot\dfrac{1-\rho_{\bar{x}}}{1-\rho_x}+\frac{(\lambda - \lambda_x)J_x}{2(1-\rho_x)},
\end{align}
where $J_x:=\min\left\{x^2, \min_{i < 1}\left\{
        \inf_{w\in[0,kix]}h_{k,x}(w+x,i+1/k)-h_{k,x}(w,i)
        \right\}\right\}$.
\end{theorem}
\begin{proof}[Proof deferred to \Cref{sec:thm_main_k}]
\end{proof}

In the above theorem, we define $u_1$ and $v_1$ via the following recursive formulas,
\begin{align*}
    u_{q}(w)&=e^{-\frac{kw\lambda_x}{q}}\int_0^w\dfrac{e^{\frac{k\lambda_x y}{q}}(k-q+k\lambda_x \E[u_{q+1}(S_x+y)])}{q}\,dy,\\
    v_q(w)&=e^{-\frac{kw\lambda_x}{q}}\int_0^w \dfrac{e^{\frac{k\lambda_x y}{q}}(2ky-2qy+k\lambda_x \E[v_{q+1}(S_x+y)])}{q}\,dy,
\end{align*} 
with the initial condition $u_k(w)=0$ and $v_k(w)=0$ (see \Cref{def:isqk_contant,def:isqk_affine}). Both $u_1$ and $v_1$ are also derived using the DiffeDrift method. The function $h_{k,x}$ is defined in \Cref{def:isqk_modified}. \Cref{eqn:main_k_1} is based on the Sep-ISQ-$k$ system and \Cref{eqn:main_k_2} is based on the Rec-ISQ-$k$ system.

Note that $u_1, v_1,$ and $h_{k,x}$ can all be exactly symbolically derived for any number of servers, see \Cref{apdx:isq3} for $u_1$ and $v_1$ under the 3,4 and 5-server cases. See \Cref{sec:wine} for how to convert lower bounds on mean relevant work into lower bounds on mean response time. In \Cref{sec:generalization}, we further discuss how a specialized program can be developed to compute these recursive formulas efficiently for $k \geq 6$, and how our framework extends beyond the known-size setting.

\subsection{Proof outline}
\label{sec:proof_outline}
To derive lower bounds on mean response time, we first obtain lower bounds on the mean relevant work. By the WINE formula (see \Cref{sec:wine}), these directly translate into lower bounds on mean response time. 
Applying this framework to the naive lower bounds (\Cref{eqn:lower_1,eqn:lower_2}) already yields some improvement in certain load regimes. Our approach, however, goes further, providing substantially tighter lower bounds.

We derive the novel relevant work lower bounds given in \Cref{thm:main_2,thm:main_k}, via two steps: First, we prove that the expected relevant work of each ISQ-$k$ system lower bounds the optimal expected relevant work in the $M/G/k$ under any scheduling policy. Second, we characterize the work in the ISQ-$k$ system by applying our novel DiffeDrift method. 

\subsubsection*{Total work lower bound}

Our first step is to prove that the \emph{total work} in the ISQ-$k$ system lower bounds $M/G/k$ \emph{total work} under an arbitrary scheduling policy:

\begin{restatable}{theorem}{mainfive}
\label{thm:main}
For any job size $S$ and arrival rate $\lambda$, the expected total work in the $M/G/k$ system under any scheduling policy is lower bounded by the expected total work in the ISQ-$k$ system.
\end{restatable}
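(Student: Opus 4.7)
The plan is to prove the inequality via a sample-path coupling between the ISQ-$k$ system and the $M/G/k$ system under an arbitrary scheduling policy $\pi$, where both systems share the same arrival sequence (identical arrival times and job sizes). The goal is to establish the pointwise inequality $W^{ISQ}(t) \leq W^{M/G/k}(t)$ under this coupling, which immediately yields the desired inequality in expectation in steady state.

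The argument proceeds by induction on event epochs: arrivals, $M/G/k$ job completions, and ISQ busy period transitions. At arrival epochs, both systems receive the same work increment, so the inequality is preserved. At an $M/G/k$ job completion, $W^{M/G/k}$ is continuous (only $N^{M/G/k}$ drops by one) while $W^{ISQ}$ is unchanged, so the inequality is again preserved. The non-trivial case is the evolution between events, where $W^{ISQ}$ and $W^{M/G/k}$ decrease linearly at their respective rates $A^{ISQ}(t)/k$ and $\min(N^{M/G/k}(t), k)/k$, and the subtle issue is that $M/G/k$ may temporarily process work faster than ISQ.

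The main obstacle, and the heart of the proof, is establishing the following key invariant: at all times $t$, either $W^{ISQ}(t) < W^{M/G/k}(t)$ strictly, or $W^{ISQ}(t) = W^{M/G/k}(t)$ and $A^{ISQ}(t) \geq \min(N^{M/G/k}(t), k)$. The equality case guarantees that at any meeting time, ISQ's service rate matches or exceeds $M/G/k$'s, so the inequality is preserved going forward. This invariant is immediate when the two systems start a busy period simultaneously (both empty), since then $N^{M/G/k}(t)$ is bounded above by the arrival count since the busy period began, while $A^{ISQ}(t)$ equals that same count (capped at $k$), directly giving $A^{ISQ}(t) \geq \min(N^{M/G/k}(t), k)$ throughout such a busy period.

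The harder case arises when ISQ begins a fresh busy period while $M/G/k$ is still busy from a previous one, so $W^{ISQ}$ initially lies strictly below $W^{M/G/k}$ but the gap $\delta(t) := W^{M/G/k}(t) - W^{ISQ}(t)$ may shrink as $M/G/k$ processes work faster. To control this gap, I would use a volume-of-work argument: when $\min(N^{M/G/k}, k) > A^{ISQ}$, the excess work in $M/G/k$ relative to ISQ is bounded below by a quantity proportional to the sum of remaining sizes of the excess being-served jobs, which in turn controls when the next $M/G/k$ job completion occurs. Chaining this bound across successive $M/G/k$ departures should show that $\delta(t)$ cannot reach zero without a coincident $M/G/k$ departure that simultaneously restores $A^{ISQ}(t) \geq \min(N^{M/G/k}(t), k)$. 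Non-work-conserving scheduling policies only strengthen the comparison, since idling $M/G/k$'s servers further reduces its processing rate and can only widen $\delta(t)$.
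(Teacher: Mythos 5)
Your overall strategy---a sample-path coupling under identical arrival sequences, aiming for pointwise domination of total work---is exactly the paper's approach (the paper proves the pointwise statement as \cref{prop:work} and deduces \cref{thm:main} by taking expectations). You also correctly isolate the crux: the ISQ-$k$ may start a fresh busy period at speed $1/k$ while the $M/G/k$ is still clearing older work at a higher rate, so the gap can shrink and a naive rate comparison fails. However, your resolution of that crux is where the gap lies. The ``volume-of-work argument'' and the claim that $\delta(t)$ ``cannot reach zero without a coincident $M/G/k$ departure that simultaneously restores $A^{ISQ}(t)\geq\min(N^{M/G/k}(t),k)$'' are asserted rather than proved, and the proposed chaining over successive $M/G/k$ departures would be delicate: you would have to track which servers the arbitrary (possibly non-size-based, possibly preemptive) policy assigns to old versus new jobs, and handle the boundary case where the gap and the last old job vanish at exactly the same instant. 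As written, the key invariant is not established in the only case where it is nontrivial.

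The paper closes this gap with a decomposition that you should adopt: index the busy periods of the ISQ-$k$ by $j$, and let $W^{ISQ}_j(t)$ and $W^k_j(t)$ denote the work in each system belonging to jobs that arrived during ISQ busy period $j$. It suffices to prove $W^{ISQ}_j(t)\le W^k_j(t)$ for every $j$ separately and sum. For the current busy period $j^*$, all ISQ work belongs to $j^*$, so the ISQ depletes $W^{ISQ}_{j^*}$ at rate $\min\{A_{j^*}(t)/k,1\}$, while the $M/G/k$ can assign at most $\min\{A_{j^*}(t),k\}$ servers (each of rate $1/k$) to jobs of cohort $j^*$, so it depletes $W^k_{j^*}$ no faster; for older cohorts $W^{ISQ}_j=0$ and the inequality is trivial. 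This decomposition is precisely the rigorous form of your volume-of-work heuristic: it yields $\delta(t)\ge\sum_{j\ne j^*}W^k_j(t)$, so $\delta(t)=0$ forces all old work to be gone, whence $N^{M/G/k}(t)\le A_{j^*}(t)$ and your invariant follows with no case analysis on departure epochs. I recommend replacing your chaining argument with this cohort-wise comparison.
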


We prove \Cref{thm:main} using a sample path coupling argument in \Cref{sec:isq}.

\subsubsection*{Relevant work lower bound}

Using \Cref{thm:main}, we prove in \Cref{thm:isq_sep} that the expected \textit{relevant work} of the Sep-ISQ-$k$ system lower bounds the $M/G/k$ expected \textit{relevant work} under any scheduling policy for all arrival rates $\lambda$. We also prove in \Cref{thm:ar_isq} that for any $M/G/k$ scheduling policy there always exists a recycling stream $\{R_t\}$ such that expected \textit{relevant work} of the Rec-ISQ-$k$ system lower bounds that policy's $M/G/k$ expected \textit{relevant work} for all arrival rates $\lambda$. 

\subsubsection*{Characterizing ISQ-$k$ total work}

Now that we've proven that we can use Sep-ISQ-$k$ and Rec-ISQ-$k$ to lower bound mean relevant work in the M/G/$k$, our remaining goal is to characterize the expected relevant work of the Sep-ISQ-$k$ and Rec-ISQ-$k$ systems. As an intermediate result, a key step is to compute the expected total work of the ISQ-$k$ system. We analyze the expected total work of the ISQ-$k$ system using the DiffeDrift method, see \Cref{sec:derive_affine}.

The expected total work of the ISQ-2 and ISQ-$k$ systems are given by:
\begin{align}
\label{eqn:main_isq2}
        \E[W^{\text{ISQ-2}}]&=\dfrac{\lambda\E[S^2]}{2(1-\lambda\E[S])}+\dfrac{\E[S]-(1-\widetilde{S}(2\lambda))/2\lambda}{3-\widetilde{S}(2\lambda)},\\
        \label{eqn:main_isqk}
        \E[W^{\text{ISQ-$k$}}]&=\dfrac{\lambda\E[S^2]}{2(1-\lambda\E[S])}+\dfrac{\lambda\E[v_1(S)]}{2+2\lambda\E[u_1(S)]}.
    \end{align} 
Here $v_1$ and $u_1$ are functions defined  as the solutions of differential equations in \Cref{def:isqk_contant,def:isqk_affine}.
\Cref{eqn:main_isq2,eqn:main_isqk} correspond to \Cref{prop:isq2_work,prop:isqk_work}.
These results are proved in \Cref{sec:mg2,sec:mgk}.

\subsubsection*{Sep-ISQ-$k$ and Rec-ISQ-$k$ relevant work}
Now, it remains to analyze the mean relevant work in Sep-ISQ-$k$ and Rec-ISQ-$k$. We exactly characterize the relevant work of the Sep-ISQ-$k$ system and derive a lower bound on the relevant work of the Rec-ISQ-$k$ system.

We characterize the exact expected relevant work of the Sep-ISQ-$k$ system in \Cref{thm:sep_2,thm:sep_k}.
As a result, we lower bound the expected relevant work of any $M/G/k$ system under any scheduling policies. In our main results, \Cref{thm:main_2,thm:main_k}, our \Cref{eqn:main_2_1,eqn:main_k_1} are based on the Sep-ISQ-$k$ results.

For the Rec-ISQ-$k$ system, recycling presents multiple difficulties. First, the ISQ-$k$ total work formulas \eqref{eqn:main_isq2} and \eqref{eqn:main_isqk} cannot be applied directly, so we require specialized test functions, see \Cref{sec:derive_modified}. 
Additionally, the changes in the drift caused by the recycling events are difficult to characterize exactly. Therefore, we uniformly lower bound the jumps of the test function during these events, allowing us to lower bound the expected relevant work of any Rec-ISQ-$k$ system with any arbitrary recycling stream $\{R_t\}$. These results are proved in \Cref{thm:ar_2,thm:ar_k}. By \Cref{thm:ar_isq}, we therefore lower bound the expected relevant work of any $M/G/k$ system under arbitrary scheduling policies, resulting in \Cref{eqn:main_2_2,eqn:main_k_2} in our main results \Cref{thm:main_2,thm:main_k}.

\section{WINE and Existing Lower Bounds}
\label{sec:wine}

To translate lower bounds on the mean relevant work of the $M/G/k$ system into lower bounds on its mean response time under arbitrary scheduling policies, we use the Work Integral Number Equality (WINE) introduced in Theorem 6.3 of \citet{Scully2020}. WINE allows us to characterize the mean response time of a generic queueing system under a generic scheduling policy in terms of the mean relevant work for each relevancy-cutoff level $x$.

\begin{proposition}[WINE Identity]
\label{prop:wine}
    For an arbitrary scheduling policy $\pi$, in an arbitrary system,
    \begin{align}
    \label{eqn:wine}
        \E[T^{\pi}]=\dfrac{1}{\lambda}\E[N^{\pi}]=\dfrac{1}{\lambda}\int_0^{\infty}\dfrac{\E[W_x^{\pi}]}{x^2}\,dx.
    \end{align}
\end{proposition}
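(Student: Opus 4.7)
The plan is to prove this in two stages: the first equality $\mathbb{E}[T^\pi] = \mathbb{E}[N^\pi]/\lambda$ is simply Little's Law, applied to the entire $M/G/k$ system under the (Markovian, stable) scheduling policy $\pi$; this holds under our standing assumption $\rho < 1$. All the substantive content is in the second equality, which rewrites $\mathbb{E}[N^\pi]$ as an integral over relevancy thresholds of the mean relevant work.

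For the second equality, the key observation is a pointwise identity: for any single job in the system with remaining size $r>0$, this job is relevant at threshold $x$ exactly when $r\leq x$, and contributes $r$ to the relevant work $W_x$ in that case. Therefore
\begin{align*}
\int_0^{\infty}\frac{r\cdot\mathbf{1}\{r\leq x\}}{x^2}\,dx=r\int_r^{\infty}\frac{1}{x^2}\,dx=1.
\end{align*}
Summing this identity over all jobs currently in the system (indexed by their remaining sizes $r_1,\ldots,r_N$), the left-hand side becomes $\int_0^{\infty} W_x/x^2\,dx$ while the right-hand side is exactly the number of jobs $N$ in the system. This gives the pathwise (in fact, every-time-point) identity $N(t)=\int_0^{\infty}W_x(t)/x^2\,dx$.

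Taking expectations in stationarity and applying Tonelli's theorem (justified because the integrand $W_x/x^2$ is nonnegative) yields
\begin{align*}
\mathbb{E}[N^\pi]=\mathbb{E}\!\left[\int_0^{\infty}\frac{W_x^\pi}{x^2}\,dx\right]=\int_0^{\infty}\frac{\mathbb{E}[W_x^\pi]}{x^2}\,dx,
\end{align*}
which combined with Little's Law gives the claim. The only subtle point is the behavior near $x=0$: a fixed job with remaining size $r>0$ contributes nothing to the integrand for $x<r$, so the singularity $1/x^2$ is not seen by any individual job, and for any realization with finitely many jobs the integral is finite and equals $N$. There is no real obstacle here beyond verifying the exchange of sum and integral, which is immediate for nonnegative terms, and confirming that the identity extends to the stationary expectation, which follows from Tonelli. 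This proof also makes transparent why the formula is policy-agnostic: both sides are defined on every sample path, and the identity holds state-by-state.
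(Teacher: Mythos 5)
Your proof is correct. The paper does not prove \cref{prop:wine} itself --- it imports the result as Theorem 6.3 of \cite{Scully2020} --- but your argument is exactly the standard derivation from that source: the pathwise identity $N(t)=\int_0^\infty W_x(t)/x^2\,dx$, obtained by noting that a job of remaining size $r$ contributes $\int_r^\infty r/x^2\,dx=1$, combined with Tonelli and Little's Law. Nothing is missing.
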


WINE has been used to upper bound mean response time under complex scheduling policies in the $M/G/1$ (\citet{Scully2022Uniform}) as well as in multiserver systems (\citet{Grosof2023Optimal} and \citet{Scully2020}).

We use \Cref{prop:wine} in a novel fashion: rather than upper bounding relevant work for a specific policy to upper bound response time for that policy, we instead lower bound mean relevant work over all policies to lower bound mean response time over all policies.
In particular, if we can prove relevant work lower bounds at different relevancy cutoffs $x$ using different proof methods,
we can combine them by taking the strongest bound for every $x$ and integrating with \Cref{prop:wine}. This will provide a stronger bound than any prior individual response time lower bound.


Previously, only two naive lower bounds on the mean response time and mean relevant work of the $M/G/k$ under arbitrary scheduling policies have appeared in the literature. 

The first lower bound is given by the resource-pooled $M/G/1$ queue under the single-server optimal SRPT policy. The optimal policy in the resource-pooled $M/G/1$ must lower bound the optimal policy in the $M/G/k$ system, as the resource-pooled model can simulate any $M/G/k$ policy. Since we are using the SRPT relevant work, these are precisely the jobs that SRPT prioritizes. Therefore, the resource-pooled single server SRPT provides a lower bound on the relevant work of any $M/G/k$ system under arbitrary scheduling policy.

The mean relevant work in the SRPT-1 system was characterized by \citet{Schrage1966} and is given by:
\begin{align}
\label{eqn:lower_1}
    \E[W_x^{M/G/k}]\geq \E[W_x^{M/G/1 \text{-SRPT}}] = \dfrac{\lambda}{2}\dfrac{\E[S_{\bar{x}}^2]}{1-\rho_x}.
\end{align}

The second bound is the mean service time bound. To compute the mean relevant work, we imagine jobs arriving into an $M/G/\infty$ queue, whose servers run at the same speed as the $M/G/k$ servers. Each job of initial size $s$ will spend $k \min\{s, x\}$ amount of time as a relevant job, as its size shrinks linearly down to $0$. The job thus contribute $\frac{k}{2}\min\{s, x\}^2$ amount of relevant work. Therefore,
\begin{align}
\label{eqn:lower_2}
    \E[W_x^{M/G/k}]\geq \E[W_x^{M/G/\infty}]=\dfrac{k\lambda}{2}\E[S_{\bar{x}}^2].
\end{align}
The above bound is tight at low load but loose at medium and high load, see \Cref{fig:intro_wine}.

Taking the maximum of these two bounds for each $x$ and integrating using WINE (\Cref{prop:wine}) gives us our first improved lower bound on the mean response time of the $M/G/k$ under an arbitrary scheduling policy, which we call the MixEx bound, i.e.,  \Cref{eqn:main_mixex2} and \Cref{eqn:main_mixex_k}. 

The primary goal of this paper is to further improve upon MixEx by incorporating the exact mean relevant work of the Sep-ISQ-$k$ system, giving an improved lower bound on mean relevant work in the $M/G/k$, namely \Cref{eqn:main_k_1}. Combining these bounds with WINE results in a new lower bound which we call the ISQ bound. Finally, the ISQ bound is augmented with a lower bound on the mean relevant work of the Rec-ISQ-$k$ system, giving a final lower bound on the $M/G/k$, namely \Cref{eqn:main_k_2}, resulting in the final ISQ-Recycling bound.

\section{Increasing Speed Queue}
\label{sec:isq}
In this section, we prove that the Sep-ISQ-$k$ system and the Rec-ISQ-$k$ system lower bound the relevant work of an $M/G/k$ system under an arbitrary scheduling policy.

We first show that, for any sequence of arrival times and job sizes,
the total work in a $k$-server system under an arbitrary scheduling policy is lower bounded by the total work in the ISQ-$k$ system, under the same sequence of arrival times and job sizes.
\begin{proposition}
\label{prop:work}
    For an arbitrary sequence of arrival times and job sizes and at any given point in time $t$, the total work in a $k$-server system under an arbitrary scheduling policy is lower bounded by the total work in an ISQ-$k$ system with the same arrival sequence.
\end{proposition}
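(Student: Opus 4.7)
The plan is to establish the pointwise inequality $W^{\pi}(t) \geq W^{I}(t)$, where $W^{\pi}$ and $W^{I}$ denote total work in the $M/G/k$ system under policy $\pi$ and in the ISQ-$k$, respectively, via a direct sample-path coupling in which both systems are driven by the same arrival sequence. Since any non-work-conserving $M/G/k$ policy only inflates $W^{\pi}$, I would first reduce to work-conserving policies. Under this reduction, the $M/G/k$'s aggregate service rate is $B^{\pi}(t)/k = \min(N^{\pi}(t),k)/k$, where $N^{\pi}(t)$ is the number of jobs in the system, while ISQ-$k$'s speed is $I(t) = \min(j^{I}(t),k)/k$, where $j^{I}(t)$ counts the arrivals during the current ISQ busy period.

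The core observation is that arrivals produce identical jumps in both workloads, so the gap $\Delta(t) := W^{\pi}(t) - W^{I}(t)$ is continuous in $t$ and can change only between events. I would then proceed by induction on the event sequence (arrivals, $M/G/k$ service completions, and ISQ busy-period endings), tracking $\Delta$. In the ``clean'' case where both systems have been busy since a common empty moment, every arrival increments both $j^{I}$ and $N^{\pi}$ by one, while $M/G/k$ completions only decrease $N^{\pi}$; consequently $j^{I} \geq N^{\pi}$, so $I(t) \geq B^{\pi}(t)/k$, and $\Delta$ is non-decreasing and remains $\geq 0$. At arrivals $\Delta$ is invariant; at $M/G/k$ completions the instantaneous rate of $\Delta$ weakly increases; and at ISQ busy-period endings $W^{I}(t)=0$ so $\Delta(t) = W^{\pi}(t) \geq 0$ automatically.

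The main obstacle is the case where ISQ-$k$'s busy period terminates in the middle of an $M/G/k$ busy period, desynchronizing the two systems. After such a reset, the next arrival restarts ISQ at speed $1/k$ while $M/G/k$, already holding multiple jobs, may run at a higher speed, so $\Delta$ shrinks during the following interval. The crux is to rule out $\Delta$ going negative in this regime. My plan is to strengthen the inductive invariant to relate the workload gap $\Delta(t)$ to the count gap $N^{\pi}(t) - j^{I}(t)$, and to show via amortization that each $M/G/k$ completion ``pays off'' one unit of count gap, while the total excess work $M/G/k$ performs during any desynchronized phase is exactly matched by the head-start $W^{\pi}$ carried at the moment ISQ emptied. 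Making this accounting precise and verifying it uniformly across all event types is the delicate step I anticipate being the main difficulty of the proof.
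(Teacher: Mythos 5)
Your proposal correctly identifies the crux --- the desynchronized phase after the ISQ-$k$ empties while the $M/G/k$ is still busy --- but it does not resolve it: the ``amortization'' relating $\Delta(t)$ to the count gap $N^{\pi}(t)-j^{I}(t)$ is left as ``the delicate step,'' and that step is the entire content of the proposition. As stated, the invariant is not obviously maintainable, because the ISQ-$k$ can empty and restart several times while the $M/G/k$ remains busy, compounding the head-start bookkeeping across nested desynchronizations. A secondary issue is your opening reduction: it is not immediate that every policy is pathwise dominated by a work-conserving one in a multiserver system (different work-conserving policies already have \emph{different} workload processes in the $M/G/k$, unlike the $M/G/1$), so that reduction would itself need an argument --- and it turns out to be unnecessary.

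The paper's proof dissolves the hard case by choosing a different decomposition: it partitions the $M/G/k$'s work by the ISQ-$k$ busy period during which each job arrived, and proves the cohort-wise inequality $W_j^{ISQ\text{-}k}(t)\le W_j^{k}(t)$ for every ISQ busy period $j$. The key observation is that the $M/G/k$ (under \emph{any} policy, work-conserving or not) can allocate at most $A_j(t)$ servers to cohort-$j$ jobs, hence depletes $W_j^{k}$ at rate at most $\min\{A_j(t)/k,1\}$, which is \emph{exactly} the ISQ-$k$'s speed throughout busy period $j$; once busy period $j$ ends, $W_j^{ISQ\text{-}k}\equiv 0$ and the inequality is trivial thereafter, so the desynchronization you worry about never arises. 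If you want to salvage your aggregate-$\Delta$ approach, the fix is essentially to import this idea: attribute the $M/G/k$'s ``excess'' service rate during a desynchronized phase entirely to jobs that were already present when the ISQ emptied (whose total remaining work is exactly the head start $\Delta$ at that moment), and bound the service rate on post-reset arrivals by the new ISQ speed via the same counting argument. At that point you have reconstructed the paper's cohort decomposition, which is the cleaner way to write it.
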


\begin{proof}
    We will use the index $j\in \mathcal{J}=\{1,2,...\}$ to denote the busy periods of the ISQ-$k$ system. Specifically, a busy period begins when a job arrives to an empty system, and ends when the system is next empty.
    Let $W_j^{ISQ{\text -}k}(t)$ denote the total work in the ISQ-$k$ system consisting of jobs which arrived during busy period $j$ of the ISQ-$k$ system.
    Note that at all points in time, $W_j^{ISQ{\text -}k}(t)$ is positive for at most one value of $j$, namely the current busy period.
    Similarly, let $W_j^k(t)$ denote the total work remaining in the $k$-server system consisting of jobs which arrived during busy period $j$ of the ISQ-$k$ system.

    Note that the index $j$ always refers to busy periods of the ISQ-$k$ system -- we ignore busy periods of the $k$-server system. We will show that $W_{j}^{ISQ{\text -}k}(t)\leq W_j^{k}(t)$ for all $j\in \mathcal{J}$ and for all $t\geq 0$, which suffices to bound overall work at time $t$. Let  $A_j(t)$ denote the number of jobs which have arrived during busy period $j$ by time $t$. Let $B_j^{k}(t)$ denote the fraction of servers in the $k$-server system which are allocated to jobs which arrived during busy period $j$.
    
    Because arrivals to both systems are identical,
    it suffices to show that the rate of completion of $W_{j}^{ISQ{\text -}k}(t)$ exceeds that of $W_j^{k}(t)$ at all times $t$ at which $W_j^{ISQ{\text -}k}(t) > 0$. Note that the rate of completion of $W_j^{k}(t)$ is simply $B_j^{k}(t)$,
    thanks to our definition that each server in the $M/G/k$ completes work at rate $1/k$. Define $B_j^{ISQ{\text -}k}$ similarly.

    Thus, we want to show that $B_j^{ISQ{\text -}k}(t)$ is higher than $B_j^{k}(t)$ whenever $W_j^{ISQ{\text -}k}>0$. Note that $W_j^{ISQ\text{-}k}>0$ only occurs when $j$ is the busy period currently active.
    Note that $W_j^{ISQ{\text -}k}(t) = W^{ISQ{\text -}k}(t)$ where $j$ is the current busy period at time $t$,
    as all of the work in the ISQ-$k$ must have arrived during its current busy period,
    and the ISQ-$k$ system ends each busy period by completing all work.
    Thus, $B_j^{ISQ{\text -}k}(t)$ is  $\min\{A_j(t)/k, 1\}$.
        
    $B_j^{k}(t)$ is bounded above by $\min\{A_j(t)/k,1\}$, based on the number of jobs which have arrived during this busy period, namely $A_j(t)$. This is at most the completion rate in the ISQ-k system, as desired. This completes the proof.

\end{proof}

An immediate consequence of \Cref{prop:work} is \Cref{thm:main}, which states that for a Poisson arrival process, the expected work of the ISQ-$k$ system must lower bound that of the $M/G/k$ system. In \Cref{prop:isqk_work} of \Cref{sec:mgk} we provide an exact formula for the mean relevant work of the ISQ-$k$ system.

\mainfive*

Using \Cref{thm:main}, we prove \Cref{thm:isq_sep}, which states that the mean work in the Sep-ISQ-$k$ system lower bounds the mean relevant work of the $M/G/k$ system under an arbitrary scheduling policy. In \Cref{thm:sep_k} of \Cref{sec:mgk}, we exactly characterize the mean relevant work for the Sep-ISQ-$k$ system.

\begin{theorem}
\label{thm:isq_sep}
     For an arbitrary job size $S$ and arrival rate $\lambda$, the expected relevant work in the $M/G/k$ system under an arbitrary scheduling policy is lower bounded by the expected relevant work in the separate-ISQ-$k$ system.
\end{theorem}
\begin{proof}
    We want to lower bound the relevant work in the $M/G/k$. We will divide that work into two categories: Relevant work from jobs with original size $\leq x$, and relevant work from jobs with original size $> x$. We will show that the truncated-ISQ-$k$ system lower bounds the former category, and the M/G/$\infty$ with server speed $1/k$ lower bounds the latter category.

    We prove this theorem using a coupling argument. By coupling, we mean that we construct two different queueing systems on the same probability space and let them evolve in parallel under the exact same arrival process. In particular, every job arrival occurs at the same time in both systems and has the same job size in each system. The only difference between the systems lies in how they process these arrivals.
    
    Consider a coupled pair of systems: a truncated $M/G/k$ system and the full $M/G/k$ system. Whenever a job arrives to the full $M/G/k$, if that job has size $\le x$, a job with the same size arrives to the truncated $M/G/k$. By Poisson splitting, the arrival process to the truncated $M/G/k$ is the desired process.
    
    To complete the coupling, we need to specify how the scheduling policies in the two systems relate to each other.
    Let an arbitrary scheduling policy be used in the full $M/G/k$. At any given point in time, we specify that the truncated $M/G/k$ serves each job with original size $\leq x$ that is in service in the full $M/G/k$ at that point in time. Note that this policy may waste servers, but it is an admissible policy.
    
    With this coupling in place,
    the total work in the truncated-$M/G/k$ system with this scheduling policy, all of which is relevant, lower bounds the total work from jobs with initial size $\leq x$ in the full-$M/G/k$ system at any given time, all of which is similarly relevant.
    
    From \Cref{thm:main}, we know that the total work in the truncated $M/G/k$ is lower bounded by the total work in the truncated-ISQ-$k$ system. This completes the bound for jobs with original size $\le x$.

    For jobs with initial size $> x$, the expected relevant work in the separate $M/G/\infty$ system lower bounds the expected relevant work of jobs with initial size $>x$ in the full-$M/G/k$ system because the rate of completion in the $M/G/\infty$ system is always $1/k$ for each job whereas the rate of completion of each job in the full $M/G/k$ is never more than $1/k$.
    
    Thus we see that the expected relevant work in the full-$M/G/k$ system is lower bounded by the expected relevant work in the separate-ISQ-$k$ system.

\end{proof}

We now switch our focus to the recycling ISQ-$k$ system.
Recall that we define a recycling to occur when a job of original size greater than $x$ ages down to remaining size of exactly $x$ in the $M/G/k$. From the perspective of relevant work, a recycling event looks like a remaining-size-$x$ job popping into existence, i.e. becoming relevant for the first time. Recyclings happen with rate $\lambda (1-F_S(x))$, because every job with original size $> x$ eventually recycles.

In particular, imagine a job of size $x$ arriving into the Rec-ISQ-$k$ system whenever a recycling occurs in the $M/G/k$ system. This arrival sequence has a rate $\lambda (1-F_S(x))$ and is Markovian.

\begin{theorem}
\label{thm:ar_isq}
    For an arbitrary job size $S$ and arrival rate $\lambda$, and an arbitrary $M/G/k$ scheduling policy,
    there exists a Markovian recycling stream $\{R_t\}$ such that
    the expected relevant work in the $M/G/k$ under the given scheduling policy is lower bounded by the expected relevant work in an Rec-ISQ-$k$ with the given recycling stream. 
\end{theorem}
\begin{proof}
Consider the full $M/G/k$, and specifically consider the relevant work in the $M/G/k$.

Consider a specific realization of the Poisson arrival sequence into the full $M/G/k$ under an arbitrary scheduling policy. For jobs with original size greater than $x$, there is some time  when the job becomes relevant for the first time. Therefore, from the perspective of the relevant work in the $M/G/k$ this is equivalent to a job with size exactly $x$ arriving into the system according to some arbitrary arrival process. 

In other words, the relevant work in the full $M/G/k$ matches the total work in a $k$-server system with two arrival streams: Poisson arrivals for jobs with size $\le x$, and recycling arrivals at rate $\lambda(1-F_S(x))$ of size exactly $x$, with equivalent scheduling policies.

For this specific arbitrary arrival stream, by  \Cref{prop:work}, we have that the relevant work in the Rec-ISQ-$k$ system with the same recycling stream lower bounds the relevant work in the arbitrary $k$-server system and thus lower bounds the relevant work in the full $M/G/k$ system. 

\end{proof}

Therefore, by \Cref{thm:ar_isq}, the minimum possible mean relevant work in the $M/G/k$ system under an arbitrary scheduling policy is lower bounded by the minimum possible mean work in the Rec-ISQ-$k$ system under an arbitrary recycling stream. We lower bound the mean work in the Rec-ISQ-$k$ system in \Cref{thm:ar_2} for the 2 server case and \Cref{thm:ar_k} for the general case.

\section{Drift Method}
\label{sec:drift}

In this section, we discuss the drift method/BAR approach, which we will use to analyze the ISQ-$k$ system.
We provide background on the drift method in \Cref{sec:drift_background}.
We then prove a novel instance of the BAR,  which holds for a class of unbounded test functions and Markov processes with continuous state spaces in \Cref{sec:novel_bar}.
Finally, we apply the drift method and our BAR result in particular to the ISQ-$k$ system in \Cref{sec:drift_isq}.

\subsection{Background on drift method}
\label{sec:drift_background}
The drift method (\citet{Eryilmaz2012}), also known as the BAR (\citet{Braverman2017} and \citet{Harrison1987}) or the rate conservation law (\citet{Miyazawa1994}),
states that the average rate of increase and decrease of a random variable must be equal, for any random variable with finite expectation and satisfying certain regularity conditions.

To formalize this concept, we make use of the \textit{drift} of a random variable. The drift is the random variable's instantaneous rate of change, taken in expectation over the system's randomness. Let $G$ denote the \textit{instantaneous generator}, which acts as the stochastic counterpart to the derivative operator. We can also apply $G$ to functions of the system state, which implicitly define random variables. We call such functions \textit{test functions}: a function $f$ that maps system states $(w,i)$ to real values. The instantaneous generator takes $f$ and outputs $G\circ f$, the drift of $f$.

Let $G$ denote the generator operator for the ISQ-$k$ system. Recall that $W(t)$ denotes the work in the system at time $t$ and $I(t)$ denotes the speed of the system at time $t$. For any test function $f(w,i)$,
\begin{align*}
    G \circ f(w,i):=\lim_{t\to 0}\dfrac{1}{t}\E[f(W(t),I(t))-f(w,i)\,|\,W(0)=w, I(0)=i].
\end{align*}

We usually do not directly use the above equation to compute the drift of a test function when we know all the rates at which the system states change. For the ISQ-$k$ system, $w$ increases according to stochastic jumps of size $S$, which arrive according to a Poisson Process with rate $\lambda$. These jumps also cause $i$ to increase by $1/k$ as long as $i < 1$. When $i>0$, $w$ decreases at rate $i$ due to work completion. The next lemma is a special case of Section 6.1 in \citet{Braverman2024}, which states that for a piecewise deterministic Markov process, the drift can be characterized by jumps caused by a Poisson process.

\begin{lemma}
\label{lem:isq_drift}
    For any real-valued differentiable function $g$ of the state of the ISQ-$k$ system,
    \begin{align}
        G\circ g(w,i)=\lambda\E[g(w+S,\min\{1,i+1/k\})-g(w,i)]-\frac{d}{dw}g(w,i)i.
    \end{align}
\end{lemma}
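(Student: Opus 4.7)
The plan is to prove the formula directly from the definition of the infinitesimal generator by decomposing the system's evolution on a small time interval $[0,t]$ into its deterministic flow between arrivals and its jumps at Poisson arrival epochs. The ISQ-$k$ is a piecewise deterministic Markov process: between arrivals, $I$ is constant and $W$ decreases deterministically at rate $I$; at each arrival epoch of the rate-$\lambda$ Poisson process, $w$ jumps by an independent copy of $S$ and $i$ increments by $1/k$ up to a cap of $1$. Since this PDMP structure is exactly the setting of Section 6.1 of \cite{Braverman2024}, one could simply verify the ISQ-$k$ fits their framework and cite that result; however, a self-contained derivation is short, so I would give one explicitly.

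First, I would condition on the number of arrivals in $[0,t]$. The probabilities of $0$, exactly $1$, and $\geq 2$ arrivals are $1-\lambda t+O(t^2)$, $\lambda t+O(t^2)$, and $O(t^2)$ respectively. On the no-arrival event, $I(s)=i$ and $W(s)=w-is$ for all $s\in[0,t]$ (taking $t$ small enough that $w-it\geq 0$ when $w>0$; the boundary state $w=0$ forces $i=0$ so this case is vacuous), and differentiability of $g$ in $w$ yields
\begin{align*}
g(W(t),I(t))-g(w,i)=-it\,\tfrac{d}{dw}g(w,i)+o(t).
\end{align*}
On the one-arrival event, conditioning on the arrival time $\tau$ (uniform on $[0,t]$) and the jump size $S$, the post-jump state is $(w-i\tau+S,\min\{1,i+1/k\})$, and the state drifts deterministically for the remaining interval of length $t-\tau\leq t$. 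Taking expectations and sending $t\to 0$ collapses both $\tau$ and $t-\tau$ to $0$, giving
\begin{align*}
\E[g(W(t),I(t))-g(w,i)\mid \text{one arrival in }[0,t]]=\E[g(w+S,\min\{1,i+1/k\})-g(w,i)]+o(1).
\end{align*}
The contribution from $\geq 2$ arrivals is bounded by $O(t^2)$ times the supremum of the increment, which is $o(t)$ after aggregation under mild regularity on $g$.

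Summing the three contributions weighted by their probabilities, dividing by $t$, and letting $t\to 0$ yields exactly the claimed expression for $G\circ g(w,i)$. The main subtlety I foresee is controlling the remainder on the one-arrival event uniformly in $\tau\in[0,t]$, which requires continuity of $g$ along the flow and integrability of $\E[g(w+S,\cdot)]$; the boundary point $(0,0)$ needs no special treatment because the deterministic drift term vanishes identically there, consistent with the system waiting idly for the next arrival. Neither issue is serious given the regularity of the test functions we will construct via the DiffeDrift method, and if desired one can simply invoke the PDMP generator result of \cite{Braverman2024} to bypass these verifications entirely.
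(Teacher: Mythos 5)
Your derivation is correct, but it is worth noting that the paper does not actually prove this lemma: it is stated as a special case of the piecewise-deterministic-Markov-process generator formula in Section~6.1 of \cite{Braverman2024}, and the citation is the entire justification. Your proposal therefore supplies a self-contained argument where the paper supplies none, and it is the standard one: condition on the number of Poisson arrivals in $[0,t]$, extract the deterministic-flow term $-i\,\frac{d}{dw}g(w,i)$ from the zero-arrival event via differentiability of $g$ in $w$, extract the jump term $\lambda\E[g(w+S,\min\{1,i+1/k\})-g(w,i)]$ from the one-arrival event by collapsing the uniform arrival time as $t\to 0$, and discard the $O(t^2)$ multi-arrival contribution. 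The pieces you flag as requiring care are the right ones: the multi-arrival remainder needs integrability of $g(w+N_tS,\cdot)$ (automatic for the quadratic-plus-bounded-perturbation test functions the paper actually uses, and implicitly assumed even to make sense of the statement for ``any differentiable $g$''), and the near-boundary case where $t>w/i$ is harmless for the pointwise limit defining $G\circ g$ (the uniformity issue is deferred to \Cref{lem:uniform}, exactly as the paper does). Your approach buys a verifiable, elementary derivation at the cost of a few lines of bookkeeping; the paper's citation buys brevity at the cost of asking the reader to check that the ISQ-$k$ fits the cited framework. Either is acceptable, and your write-up correctly identifies both options.
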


A key fact about drift, known as the Basic Adjoint Relationship (BAR) \eqref{eqn:steady_state}, is that in a stationary system, the expected drift of any random variable or test function is zero, as long as the random variable or test function has finite expectation in stationarity, and satisfies certainty regularity conditions. 

Existing BAR results unfortunately do not cover the ISQ-$k$ system and the specific test functions that we will use to characterize mean workload.
We therefore prove a novel basic adjoint relationship in that applies to our system and test functions. Specifically, our new results, \Cref{prop:drift} and \Cref{lem:uniform}, cover a class of unbounded test functions in our continuous-state setting which were not previously covered by existing results. An alternative approach would be to truncate the test functions and take a limit of a series of truncated test functions, but we expect this approach to be significantly more complicated in our setting. 

\subsection{BAR for unbounded test functions of continuous Markov processes}
\label{sec:novel_bar}

We begin by establishing the general framework that serves as the starting point for our BAR.

Let $(X_t)_{t\geq 0}$ be a continuous-time, time-homogeneous Markov process with transition probability kernel $P(t, x, U)$.  We denote the state space as $\mathcal{S}$, and we emphasize that it can be continuous and unbounded.

We denote the generator as $G$,
\begin{align*}
    G \circ g(x):=\lim_{t\to 0}\frac{1}{t} \int_{y\in\mathcal{S}}\left( g(y)P(t,x,dy)-g(x)\right).
\end{align*}
We say that $g$ belongs to the domain of the generator $G$ of the process $X$ and write $g \in D(G)$ if the above limit exists for all $x\in \mathcal{S}$. The first piece of our novel BAR result is as follows. We prove this result in \Cref{apdx:sec7}.

\begin{restatable}{proposition}{propdrift}
\label{prop:drift}
Suppose that $g\in D(G)$ and $(X_t)_{t\geq 0}$ has a stationary distribution $\pi$, for which $|g|$ and $|G\circ g|$ are $\pi$-integrable. Moreover, suppose that the following holds for all $t$ and $x$:
\begin{align}
\label{eqn:uniform}
    \frac{1}{t} \int_\mathcal{S}\left( g(y)P(t,x,dy)-g(x)\right)=A(t,x)+B(t,x),
\end{align}
where $\lim_{t\to 0}A(t,\cdot)\to G\circ g(\cdot)$ uniformly and $\lim_{t\to 0}\int_S B(t,x)\,\pi(dx)=0$. Then the BAR holds:
\begin{align}
\int_\mathcal{S}G\circ g(x)\,\pi(dx)=0.
\end{align}
\end{restatable}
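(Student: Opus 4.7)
The plan is to exploit stationarity of $\pi$ to produce an identity that holds for every $t > 0$, and then pass to the limit $t \to 0$ using the decomposition \eqref{eqn:uniform}. The starting point is that, because $\pi P_t = \pi$, one has $\int_{\mathcal{S}} P_t g(x)\, \pi(dx) = \int_{\mathcal{S}} g(x)\, \pi(dx)$ for every $t \geq 0$. Justifying this for unbounded $g$ is the first concrete task: since $|g|$ is $\pi$-integrable, Tonelli's theorem applied to $|g|$ together with stationarity gives $\int P_t |g|\, d\pi = \int |g|\, d\pi < \infty$, so $P_t g$ is finite $\pi$-a.e.\ and Fubini legitimately exchanges the iterated integrals $\int\!\int g(y)\, P(t,x,dy)\, \pi(dx) = \int g(y)\, (\pi P_t)(dy)$, producing the desired identity.

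With this in hand, dividing by $t$ and substituting the decomposition from \eqref{eqn:uniform} yields
\begin{equation*}
    \int_{\mathcal{S}} A(t,x)\, \pi(dx) + \int_{\mathcal{S}} B(t,x)\, \pi(dx) = 0 \qquad \text{for every } t > 0.
\end{equation*}
The next step is to pass to the limit $t \to 0$ in both terms. The $B$-term vanishes directly from the hypothesis $\int B(t,x)\, \pi(dx) \to 0$. For the $A$-term I would exploit that $\pi$ is a probability measure, so uniform convergence of $A(t,\cdot)$ to $G \circ g$ implies
\begin{equation*}
    \left|\int_{\mathcal{S}} A(t,x)\, \pi(dx) - \int_{\mathcal{S}} G \circ g(x)\, \pi(dx)\right| \leq \sup_{x \in \mathcal{S}} |A(t,x) - G \circ g(x)| \;\longrightarrow\; 0,
\end{equation*}
where the assumed $\pi$-integrability of $G \circ g$ ensures the limiting integral is finite. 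Combining the two limit statements gives $\int G \circ g\, d\pi = 0$, which is the BAR.

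The main observation is that the real difficulty has been pre-packaged into the decomposition hypothesis \eqref{eqn:uniform}: by assuming $(P_t g - g)/t$ splits into a uniformly convergent piece $A$ and a piece $B$ that is small on average under $\pi$, the proposition reduces to a clean interchange of limit and integral. The heavy lifting is pushed to the \emph{applications}, where constructing such a decomposition for a specific test function and Markov process is the real work, rather than simply checking boundedness of the drift as in the classical Glynn--Zeevi setting. In particular, for the ISQ-$k$ application the subsequent \Cref{lem:uniform} will serve as the workhorse, verifying \eqref{eqn:uniform} for the at-most-quadratic test functions arising from the DiffeDrift construction; I would flag that verification, rather than the argument above, as the step that actually constrains which $g$ the BAR can be applied to.
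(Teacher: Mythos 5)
Your proposal is correct and follows essentially the same route as the paper's proof: both exploit stationarity to obtain $\int (P_t g - g)\,d\pi = 0$, substitute the decomposition \eqref{eqn:uniform}, kill the $B$-term by hypothesis, and pass to the limit in the $A$-term. The only (immaterial) difference is that you justify the limit--integral interchange via the sup-norm bound over the probability measure $\pi$, whereas the paper invokes dominated convergence with $2|G\circ g|$ as the dominating function; both are valid given the stated hypotheses.
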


\subsection{Applying drift method to ISQ-$k$ for quadratic test functions}
\label{sec:drift_isq}
Next, we derive the necessary conditions to apply the drift method and specifically \Cref{prop:drift}, to the ISQ-$k$ system. 
Throughout this paper, we will always use test functions with a leading $w^2$ term. Therefore, we present our result for functions of the form $g(w,i)=w^2+c(w,i),$ where $c(w, i)$ is linear in $w$. It is important for our result that the function is at most quadratic in $w$, and that the largest term whose coefficient depends on $i$ is at most linear in $w$. This is the setting in which we prove our novel BAR results in \Cref{lem:uniform}.

The first lemma provides sufficient conditions for the finiteness of $\E[W^2]$, namely that the job size distribution has a finite third moment. This holds for all phase-type distributions, as well as for the truncated distributions that arise in our analysis of relevant work. We prove the following lemma in \Cref{apdx:sec7}.

\begin{restatable}{lemma}{lemwsquare}
\label{lem:w2}
In an ISQ-$k$ system, if the job size $S$ has a finite third moment, $E[S^3] < \infty$, then $\E[W^2]<\infty$.
\end{restatable}

Next, we provide sufficient conditions on the test function $g$ to ensure that $G\circ g$ satisfies \Cref{prop:drift}'s condition. This result makes use of the fact that the arrival process is Poisson and we prove it in \Cref{apdx:sec7}.

\begin{restatable}{lemma}{lemuniform}
\label{lem:uniform}
In an ISQ-$k$ system, suppose $\E[S^3]<\infty$ and $g(w,i)=w^2+c(w,i)$ is a real-valued function of the ISQ-$k$ system which is twice-differentiable with respect to $w$ for each fixed $i$.
Suppose that $|c(w,i)|\leq C_1w+C_2$ for some constants $C_1$ and $C_2$ and $\lim_{w\to0^{+}}g(w,i)=g(0,0)$ for each fixed $i$. Moreover, we assume $|c'(w,i)|\leq M_1$ and $|c''(w,i)|\leq M_2$ for some constants $M_1$ and $M_2$. Then \begin{align}
   \dfrac{1}{t}\E[g(W(t),I(t))-g(w,i)\,|\,W(0)=w, I(0)=i]=A(t,w,i)+B(t,w,i),
\end{align}
where $\lim_{t \to 0} A(t,\cdot,\cdot)\to G\circ g(\cdot,\cdot)$ uniformly and $\lim_{t\to 0}\E[B(t,W,I)]=0$.
\end{restatable}

Now, we are ready to prove our novel BAR result. Combining \Cref{lem:w2,lem:uniform} we have the following BAR \eqref{eqn:steady_state} result which shows that in the ISQ-$k$ system, for test functions $g$ of the structure described here, the expected value of the $G \circ g$ in steady state is zero. We prove the following lemma in \Cref{apdx:sec7}.

\begin{lemma}
\label{lem:drift}
In an ISQ-$k$ system, suppose $\E[S^3]<\infty$ and $g(w,i)=w^2+c(w,i)$ is a real-valued function of the ISQ-$k$ system which is twice-differentiable with respect to $w$ for each fixed $i$.
Suppose that $|c(w,i)|\leq C_1w+C_2$ for some constants $C_1$ and $C_2$ and $\lim_{w\to0^{+}}g(w,i)=g(0,0)$ for each fixed $i$. Moreover, we assume $|c'(w,i)|\leq M_1$ and $|c''(w,i)|\leq M_2$ for some constants $M_1$ and $M_2$. Then,
\begin{align}
    \E[G\circ g(W,I)]=0,
\end{align}
where the expectation is taken over the stationary random variables $W$ and $I$.
\end{lemma}

We will often apply \Cref{lem:drift} when the job size distribution follows a truncated distribution such as $S_x$ and $S_{\bar{x}}$, for which the assumption that $\E[S^3] <\infty$ is automatically satisfied. It is straightforward to apply \Cref{lem:drift} to the Sep-ISQ-$k$ system because each subsystem is Markovian and independent. For the Rec-ISQ-$k$ system, we only apply \Cref{lem:drift} to the Poisson arrival stream and deal with the recycling stream through a different approach (\citet{Braverman2017} and \citet{Braverman2024}).

In particular, we specify a version of \Cref{lem:isq_drift} for the Rec-ISQ-$k$ system using the Palm expectation $\E_\mathfrak{r}$ over the moments when jobs recycle, following \citet{Braverman2017} and \citet{Braverman2024}. Note that the state of the Rec-ISQ-$k$ system is $(w, i, r)$, where $w$ is the work, $i$ is the speed, and $r$ is the recycling arrival state. In this paper, we only consider test functions which do not depend on $r$. When it is clear, we write these test functions as $g(w, i)$. For clarity, we write test functions with three inputs $(w,i,r)$ in the following lemma:

\begin{lemma}
\label{lem:ar_drift}
    For any real-valued differentiable function $g$ of the state of the Rec-ISQ-$k$ system which does not depend on the recycling arrival state $r$,
    \begin{align*}
        G\circ g(w,i,r)=\lambda_x\E[g(w+S,\min\{1,i+1/k\},\cdot)-g(w,i,\cdot)]-\frac{d}{dw}g(w,i,\cdot)i + (\lambda - \lambda_x)\E_{\mathfrak{r} \mid w, i, r}[J(W,I)],
    \end{align*}
    where $J(w,i) =g(w+x,\min\{i+1/k, 1\},\cdot) - g(w,i,\cdot)$
    denotes the increase in the test function due to the arrival of a size-$x$ job,
    and $E_{\mathfrak{r} \mid w, i, r}[\cdot]$ denotes the conditional expectation of recycling with respect to the Palm measure in the immediate future of the state $(w, i, r)$ of the Rec-ISQ-$k$ system.
\end{lemma}

\section{Bounding Mean Relevant Work in the $M/G/2$}
\label{sec:mg2}

In this section, we derive explicit lower bounds on mean relevant work in the $M/G/2$ using the Sep-ISQ-2 and Rec-ISQ-2 systems. We start with the 2-server system because it demonstrates the core idea of our proof before we move on to the more general results concerning the $M/G/k$.
In \Cref{sec:empirical}, we numerically compare our novel bounds on mean relevant work to the existing bounds in the literature.

We will use the following test functions to bound mean relevant work in the Sep-ISQ-2 and Rec-ISQ-2 systems. We explain the intuition behind these test functions in \Cref{sec:derive}.
\begin{definition}
\label{def:isq2_constant}
We define the ISQ-2 constant-drift test function $g_2$ as follows,
\begin{align}
\label{eqn:isq2_constant}
g_2(w,1)=w,\quad g_2(0,0)=0\quad \text{and}\quad g_2(w,1/2)=w+\dfrac{1-e^{-2\lambda w}}{2\lambda}.
\end{align}
\end{definition}

\begin{definition}
\label{def:isq2_affine}
 We define the ISQ-2 affine-drift test function $h_2$ as follows,
\begin{align}
\label{eqn:isq2_affine}
    h_2(w,1)=w^2,\quad h_2(0,0)=0\quad\text{and}\quad h_2(w,1/2)=w^2+\dfrac{w}{\lambda}-\dfrac{1-e^{-2w\lambda}}{2\lambda^2}.
\end{align}
\end{definition}

It is easy to verify that $g_2$ and $h_2$ satisfy the assumptions of \Cref{lem:drift}, our BAR result. We now characterize the mean work of the ISQ-2 system by applying \Cref{lem:drift} to the test functions $g_2$ and $h_2$. 
\begin{proposition}
\label{prop:isq2_work}
    For any job size $S$ such that $E[S^3]$ is finite, and any arrival rate $\lambda$, the expected total work in the ISQ-$2$ system is given by 
    \begin{align}
    \label{eqn:isq2_work}
        \E[W^{\text{ISQ-2}}]=\dfrac{\lambda\E[S^2]}{2(1-\lambda\E[S])}+\dfrac{\E[S]-(1-\widetilde{S}(2\lambda))/2\lambda}{3-\widetilde{S}(2\lambda)}.
    \end{align}
\end{proposition}
\begin{proof}
    In order to characterize the mean work of the ISQ-$k$ system, we first need to characterize the fraction of the time that the system is idle, $\mathbb{P}(I=0)$.
    
    To do so, we first consider the constant-drift test function $g_2$,
    defined in
    \Cref{def:isq2_constant}.
    We want to calculate the drift $G \circ g_2(w, i)$
    for all possible values of $w$ and $i$.
    Recall from \Cref{sec:isq,def:isq2_constant} that
    the possible values of the speed $i$ are speeds $0, 1/2,$ and $1$.
    When $i=0$, the work must be $0$ by definition,
    while if $i>0$, the work must be positive.
    
    When $w>0$ and $i=1$, $G\circ g_2(w,i)=\lambda \E[S]-1.$ Arrivals cause a drift of $\lambda E[S]$,
    while work completion causes a drift of $-1$. When $w>0$ and $i=1/2$, applying \Cref{lem:isq_drift}, we have $G\circ g_2(w,i)=\lambda \E[S]-1.$
    The choice of the function $g_2$ ensures this drift property (See \Cref{sec:derive}). When $w=0$ and $i=0$, 
    \begin{align*}
    G\circ g_2(0,0)=\dfrac{1}{2}\left(1+2\lambda\E[S]-\widetilde{S}(2\lambda) \right)=\lambda \E[S]-1+\left(\frac{3}{2}-\dfrac{1}{2}\widetilde{S}(2\lambda)\right).
    \end{align*}
    Thus, we can summarize the drift over all states as
    \begin{align*}
    G\circ g_2(w,i)=\lambda \E[S]-1 + \left(\frac{3}{2}-\frac{1}{2}\widetilde{S}(2\lambda)\right)\mathbbm{1}_{\{i=0\}}.
    \end{align*}
    Setting the expectation to zero by \Cref{lem:drift}
    we have,
    \begin{align}
    \label{eqn:isq2_prob0}
    \mathbb{P}(I=0)=\dfrac{2(1-\lambda \E[S])}{3-\widetilde{S}(2\lambda)}.
    \end{align}
    Now we switch our focus to characterizing the mean work of the ISQ-$k$ system. We consider the affine-drift test function, $h_2$, defined in \Cref{def:isq2_affine}. When $w>0$ and $i=1$ or $1/2$,$ G\circ h_2(w,i)=\lambda\E[S^2]+2w(-1+\lambda\E[S]).$
    When $w=0$ we have, $G\circ h_2(0,0)=\lambda\E[S^2]+\E[S]+\frac{\widetilde{S}(2\lambda)-1}{2\lambda}.$
    Together, we can summarize the drift over all states as
    \begin{align*}
    G\circ h_2(w,i)=\lambda\E[S^2]+2w(-1+\lambda\E[S])+\left(\E[S]+\dfrac{\widetilde{S}(2\lambda)-1}{2\lambda}\right)\mathbbm{1}_{\{i=0\}}.
    \end{align*}
    Therefore, taking expectation of $G \circ h_2$ 
    and equating to zero using \Cref{lem:drift}, we have
    \begin{align*}
    \E[W^{ISQ{\text -}2}]&=\dfrac{\lambda\E[S^2]}{2(1-\lambda\E[S])}+\dfrac{\E[S]-(1-\widetilde{S}(2\lambda))/2\lambda}{2(1-\lambda\E[S])}\cdot\mathbb{P}(I=0)=\dfrac{\lambda\E[S^2]}{2(1-\lambda\E[S])}+\dfrac{\E[S]-(1-\widetilde{S}(2\lambda))/2\lambda}{3-\widetilde{S}(2\lambda)},
    \end{align*}
    where $\mathbb{P}(I=0)$ is given by \Cref{eqn:isq2_prob0}.

\end{proof}

Using \Cref{prop:isq2_work}, we can characterize the mean work of the truncated-ISQ-2 system in the Sep-ISQ-2 system. This provides us an exact mean relevant work formula for the Sep-ISQ-2 system. We therefore lower bound the mean relevant work of an $M/G/k$ system under arbitrary scheduling policy by \Cref{thm:ar_2}. This constitutes the first novel relevant work lower bound of \Cref{thm:main_2}, namely \Cref{eqn:main_2_1}. Our characterization of mean relevant work in the Sep-ISQ-2 system is as follows:

\begin{theorem}
\label{thm:sep_2}
    For an arbitrary threshold $x$, arbitrary job size $S$, and arbitrary arrival rate $\lambda$, the expected relevant work in the separate-ISQ-2 system is exactly given by
    \begin{align}
    \label{eqn:isq2_sep}
        \E[W_x^{\text{Sep-ISQ-}2}]=\dfrac{\lambda_x\E[S_x^2]}{2(1-\lambda_x\E[S_x])}+\dfrac{\E[S_x]-(1-\widetilde{S_x}(2\lambda_x))/2\lambda_x}{3-\widetilde{S_x}(2\lambda_x)}+\lambda(1-F_S(x))x^2.
    \end{align}
\end{theorem}

\begin{proof}
    The relevant work in the separate-ISQ-2 system is the sum of the total work in the truncated-ISQ-2 system and the relevant work in the separate $M/G/\infty$ system. Note that all work in the truncated-ISQ-2 system is relevant.
    
    First, let us handle the separate $M/G/\infty$ system.
    The servers at the $M/G/\infty$ system operate at a speed of $1/2$, and the arrival rate into this system is determined by $\lambda(1-F_S(x))$. Thus the expected relevant work at this separate server is $\lambda(1-F_S(x))x^2$. 

    Jobs arrive into the truncated-ISQ-2 with a conditional size $S_x$, having a density given by $f_S(x)/F_S(x)$ and bounded third moment $\E[S_x^3]\leq x^3$. The arrival rate is $\lambda_x := \lambda F_S(x)$. In particular, $\E[S_x]=\int_0^x sf(s)/F_S(x) ds$, $\E[S^2_x]=\int_0^x s^2f(s)/F_S(x) ds$, and $\widetilde{S}_x(2\lambda_x)=\int_0^x e^{-2\lambda_x s}f_S(s)/F_S(x) ds$. By \Cref{prop:isq2_work}, we have the following formula for expected relevant work formula for the truncated-ISQ-2 system:
    \begin{align*}
    \E[W_x^{{\text{truncated -ISQ-}}2}]=\dfrac{\lambda_x\E[S_x^2]}{2(1-\lambda_x\E[S_x])}+\dfrac{\E[S_x]-(1-\widetilde{S_x}(2\lambda_x))/2\lambda_x}{3-\widetilde{S_x}(2\lambda_x)}.
    \end{align*} 
    Adding the two expected relevant work formulas together we have \Cref{eqn:isq2_sep}.

\end{proof}

To derive the Rec-ISQ-2 lower bound, we will use the following modified ISQ-2 affine-drift test function $h_{2,x}$. Note that $h_{2,x}$ is a different function for each value of $x$. We provide intuition on how we derive $h_{2,x}$ in \Cref{sec:derive}. Again, it is easy to verify that the modified ISQ-2 affine-drift test function satisfies the assumptions of \Cref{lem:drift}.

\begin{definition}
\label{def:isq2_modified}
The ISQ-2 modified affine-drift test function $h_{2,x}$ is $h_{2,x}(0,0)=0$,
\begin{align}
\label{eqn:isq2_modified_affine}
h_{2,x}(w,1)=w^2\quad\text{and}\quad h_{2,x}(w,1/2)=w^2+\frac{w}{\lambda_x}-\frac{1-e^{-2w\lambda_x}}{2\lambda_x^2}-\frac{C_2(x,\lambda_x)(1-e^{-2w\lambda_x})}{\lambda_x},
\end{align} 
where $C_2(x,\lambda_x):=\frac{\E[S_x]-(1-\widetilde{S_x}(2\lambda_x))/2\lambda_x}{3-\widetilde{S_x}(2\lambda_x)}.$
Here, the first argument of $C_2$ is the relevancy cutoff level $x$ of the recycling ISQ-2 system.

\end{definition}

We are ready to state the main result for our Rec-ISQ-2 lower bound, which provides us with another lower bound in mean relevant work in the $M/G/k$ by \Cref{thm:ar_2}. This constitutes the second lower bound of \Cref{thm:main_2}, namely \Cref{eqn:main_2_2}.

\begin{restatable}{theorem}{thmarisq}
\label{thm:ar_2}
 For an arbitrary threshold $x$, job size $S$, arrival rate $\lambda$, and arbitrary recycling stream, expected relevant work $\E[W_{x}^{\text{Rec-ISQ-2}}]$ in the Rec-ISQ-2 system is lower bounded by
    \begin{align}
    \label{eqn:isq2_arb}
        \E[W_{x}^{\text{Rec-ISQ-2-}L}]:=\dfrac{\lambda_x \E[S_x^2]}{2 (1 - \rho_x)} + \dfrac{\E[S_x]-(1-\widetilde{S_x}(2\lambda_x))/2\lambda_x}{3-\widetilde{S_x}(2\lambda_x)}\cdot\dfrac{1-\rho_{\overbar{x}}}{1-\rho_x}+\dfrac{(\lambda - \lambda_x)x^2}{2(1-\rho_x)}.
    \end{align}
\end{restatable}

\begin{proof}
Let $\E_\mathfrak{r}[\cdot]$ denote the Palm expectation taken over the moments when the arbitrary arrival of jobs of size $x$ occurs according to the recycling stream. By \Cref{lem:isq_drift}, the expected drift $G \circ h_{2,x}$ has two kinds of terms, the drift due to Poisson arrivals of rate $\lambda_x = \lambda F_S(x)$ and the recycling jumps due to the arbitrary arrivals of jobs of size $x$ with rate $\lambda (1-F_S(x))$.

Let $J_x(w, i) := h_{2,x}(w+x,\min(i+1/2, 1)) - h_{2,x}(w,i)$
denote the increase in the test function due to the arrival of a size-$x$ job.
Then $\E_\mathfrak{r}[J(W_x,I)]$ denotes the mean size of the recycling jump in stationarity. The expectation $\E_\mathfrak{r}[\cdot]$ can be interpreted as the Palm expectation associated with the random measure that records the cumulative number of recyclings.
We define $\E_{\mathfrak{r} \mid w, i, r}[\cdot]$ to be the conditional Palm expectation in the immediate future of a specific state of the Rec-ISQ-$k$ system (\citet{Braverman2024}, \citet{Miyazawa1994}, and \citet{Scully2020}).
We start by applying \Cref{lem:ar_drift} to find the drift in a specific state:
\begin{align*}
    G \circ h_{2,x}(w, i) =\lambda_x\E[S_x^2]+2w(-1+\lambda\E[S_x])+2C_2(x,\lambda_x)\left(\mathbbm{1}_{\{i=0\}}+\frac{1}{2}\mathbbm{1}_{\{i=1/2\}} 
    \right) + (\lambda - \lambda_x) \E_{\mathfrak{r} \mid w, i, r}[J_x(w,i)]
\end{align*}

Applying our BAR result \Cref{lem:isq_drift} to the drift $G \circ h_{2,x}$, we find that
\begin{align*}
    0 = \E[G \circ h_{2,x}(W_x, I)] &=  \lambda_x\E[S_x^2]+2\E[W_x](-1+\lambda_x\E[S_x])+2C_2(x,\lambda)\left(\mathbb{P}(I^r=0)+\frac{1}{2}\mathbb{P}(I^r=1/2)\right)\\
    &\quad\quad + (\lambda - \lambda_x) \E_\mathfrak{r}[J_x(W_x,I)].
\end{align*}
Solving for $\E[W_x]$, we get
\begin{align}
\label{eqn:isq2_wx}
    \E[W_x] = \frac{\lambda_x E[S_x^2]}{2 (1 - \rho_x)} + \dfrac{C_2(x,\lambda_x)}{1-\rho_x}\left(\mathbb{P}(I^r=0)+\frac{1}{2}\mathbb{P}(I^r=1/2)\right) + \frac{(\lambda - \lambda_x) \E_\mathfrak{r}[J_x(W_x,I)]}{2(1-\rho_x)}.
\end{align}
In the above equation, we write $I^r$ instead of $I$ to make explicit the dependence of $I$ on recycling arrivals, and to avoid confusion with the probability $\mathbb{P}(I=0)$ defined in \Cref{eqn:isq2_prob0}.

Both $\mathbb{P}(I^r=0)$ and $\mathbb{P}(I^r=1/2)$ are difficult to evaluate. However, we show that we can relate the two probabilities to the capped load $\rho_{\overbar{x}}$.
To do so, we apply \Cref{lem:ar_drift} to the test function $g(w,i)=w$ for all $(w,i)$, we have
\begin{align*}
    0=\lambda_x\E[S_x]-1+\mathbb{P}(I^r=0)+\frac{1}{2}\mathbb{P}(I^r=1/2)+(\lambda-\lambda_x)x
\end{align*}
Re-arranging, and using the definition that $\rho_{\overbar{x}}=\lambda_x\E[S_x]+(\lambda-\lambda_x)x$, we get
\begin{align}
\label{eqn:isq2_load}
    \mathbb{P}(I^r=0)+\frac{1}{2}\mathbb{P}(I^r=1/2)=1-\rho_{\overbar{x}}.
\end{align}
Plugging in \Cref{eqn:isq2_load} into \Cref{eqn:isq2_wx}  we get
\begin{align*}
    \E[W_x] = \frac{\lambda_x E[S_x^2]}{2 (1 - \rho_x)} + \dfrac{C_2(x,\lambda_x)(1-\rho_{\overbar{x}})}{1-\rho_x} + \frac{(\lambda - \lambda_x) \E_\mathfrak{r}[J_x(W_x,I)]}{2(1-\rho_x)}.
    \end{align*}
    
Next, we want to lower bound $J_x(w,i)$ over all possible states in which a recycling could occur, and specifically the three cases $i=1,1/2,0$, respectively. In particular, we show that $J_x(w,i)$  is uniformly lower bounded by $x^2$ for any arbitrary job size $S$ and arrival rate $\lambda$. 
\begin{itemize}
    \item Jump size $J_x(w,i)$ at speed $i=1$ is given by $h_{2,x}(w+x,1)-h_{2,x}(w,1)=2wx+x^2\geq x^2$. 

    \item Jump size $J_x(w,i)$ at speed $i=1/2$ is given by
    \begin{align*}
        h_{2,x}(w+x, 1) - h_{2,x}(w, 1/2) &=2wx + x^2 - \frac{w}{\lambda_x} + \dfrac{1-e^{-2w\lambda_x}}{2\lambda_x^2}+\dfrac{C_2(x,\lambda_x)(1-e^{-2x\lambda_x})}{\lambda_x}\\
        &\geq 2wx + x^2 - \frac{w}{\lambda_x} + \dfrac{1-e^{-2w\lambda_x}}{2\lambda_x^2}.
    \end{align*}
    Let $J_{lb}(w,x) = 2wx + x^2 - \frac{w}{\lambda_x} + \frac{1-e^{-2w\lambda_x}}{2\lambda_x^2}$ denote this lower bound.
    
    We now minimize $J_{lb}$ over $w$.
    To do so, we split into two cases: $x > \frac{1}{2\lambda_x}$,
    and $x \le \frac{1}{2\lambda_x}$.
    If $x>\frac{1}{2\lambda_x}$, then $J_{lb}(w,1/2)$ is a concave increasing function in $w$. Thus, the function attains its minimum when $w=0$. Otherwise, $J_{lb}(w,1/2)$ is a concave function with a unique maximum. Because $J_{lb}$ is a concave function, its minimum must be either $w=0$ or $w=x$. The following calculation shows that the minimum is always at least $x^2$.
    \begin{align*}
        J_{lb}(x,x) &= 3x^2-\frac{x}{\lambda_x}-\frac{e^{-2x\lambda_x}-1}{2\lambda_x^2}\geq 3x^2-\frac{x}{\lambda_x}-\frac{1-2x\lambda_x+2x^2\lambda_x^2-1}{2\lambda_x^2}=2x^2\geq x^2 \\
        J_{lb}(0, x) &= x^2  \ge x^2.
    \end{align*}

    \item Jump $J_x(w,i)$ at speed $i=0$ is given by $x^2 + \frac{x}{\lambda_x} - \frac{1 - e^{-2x\lambda_x}}{2\lambda_x^2}-\frac{C_2(x,\lambda_x)(1-e^{-2x\lambda_x})}{\lambda_x}$.

    To lower bound $J_x(0, 0)$, we must upper bound $C_2(x, \lambda_x)$, which is defined in \Cref{def:isq2_modified}.
    Using the fact that $\E[S_x] \le x$ and that $\widetilde{S_x}(\cdot) \le 1$, we find that
    \begin{align*}
        C_2(x,\lambda_x)=\dfrac{\E[S_x]-(1-\widetilde{S_x}(2\lambda_x))/2\lambda_x}{3-\widetilde{S_x}(2\lambda_x)}\leq\dfrac{x-(1-1)/2\lambda_x}{3-1}=\dfrac{x}{2}.
    \end{align*}    
    
    Now, we want to lower bound $J_x(0, 0)$. Note that $C_2(x, \lambda_x)$ has a negative coefficient in the formula for $J_x(0, 0)$, allowing us to apply our upper bound on $C_2(x, \lambda_x)$ to derive a lower bound on $J_x(0, 0)$: 
    \begin{align*}
        J_x(0, 0) &= x^2 + \frac{x}{\lambda_x} - \frac{1 - e^{-2x\lambda_x}}{2\lambda_x^2}-\frac{C_2(x,\lambda_x)(1-e^{-2x\lambda_x})}{\lambda_x}
        \\&\ge x^2 + \frac{x}{\lambda_x} - \frac{1 - e^{-2x\lambda_x}}{2\lambda_x^2}-\frac{x(1-e^{-2x\lambda_x})}{2\lambda_x}.
    \end{align*}
    Therefore, it remains to show that the last three terms on the RHS of the above equation, which we define as the function $r(x)$, are nonnegative:
    \begin{align*}
     r(x):=\frac{x}{\lambda_x} - \frac{1 - e^{-2x\lambda_x}}{2\lambda_x^2}-\frac{x(1-e^{-2x\lambda_x})}{2\lambda_x}=\frac{2x\lambda_x-(1-e^{-2x\lambda_x})(1+x\lambda_x)}{2\lambda_x^2}.
    \end{align*}
    If we show that $r(x) \ge 0$ for all $x \ge 0$, then we have shown that $J(0,0)$ is lower bounded by $x^2$. 
    
    This is equivalent to showing that the numerator of the last term above is nonnegative, i.e.,
    \begin{align*}
        2x\lambda_x-(1-e^{-2x\lambda_x})(1+x\lambda_x)=(x\lambda_x-1)+(1+x\lambda_x)e^{-2x\lambda_x}\geq 0.
    \end{align*}
    Let $y:=x\lambda_x$. Note that $y>0$. Then the inequality becomes
    $(y-1)+(1+y)e^{-2y}\geq 0 \iff (1+y)e^{-2y}\geq 1-y$ which is trivially true if $y\geq 1$. For $y\in(0,1)$, we show that the equivalent inequality $\frac{1+y}{1-y}\geq e^{2y}$ holds.
    Because $\ln(\cdot)$ is increasing and $\frac{1+y}{1-y}$ is positive for all $y \in (0, 1)$, we equivalently show that $\ln \frac{1+y}{1-y} \geq 2y$.
    
    We start with the Taylor expansions $\ln(1+y)=y-y^2/2+y^3/3-\cdots$ and $\ln(1-y)=-y-y^2/2-y^3/3-\cdots$. We therefore have, $\ln\left(\frac{1+y}{1-y}\right)=\ln(1+y)-\ln(1-y)=2y+\frac{2y^3}{3}+\frac{2y^5}{5}+\cdots\geq 2y.$
    Exponentiating both sides, we have $\frac{1+y}{1-y}\geq e^{2y}$ for $y\in(0,1)$, so $r(x) \ge 0$ for all $x \ge 0$. Thus, $J(0, 0) \ge x^2$.

\end{itemize}    

Therefore, because $J_x(w,i)$ is lower bounded by $x^2$ for all $w,i$,
we know that $\E_\mathfrak{r}[J_x(W_x, I)]$ is also lower bounded by $x^2$.
We therefore have our desired lower bound on the mean relevant work in the Rec-ISQ-2 system:
    \begin{align*}
        \E[W_{x}^{\text{Rec-ISQ-2}}]\ge\frac{\lambda_x \E[S_x^2]}{2 (1 - \rho_x)} + \dfrac{C_2(x,\lambda)(1-\rho_{\overbar{x}})}{1-\rho_x}+\frac{(\lambda - \lambda_x)x^2}{2(1-\rho_x)}:=\E[W_{x}^{\text{Rec-ISQ-2-}L}]. 
    \end{align*}
\end{proof}

\subsection{Proof of \Cref{thm:main_2}}
\label{sec:thm_main_2}
In this subsection, we prove \Cref{thm:main_2}.
\begin{proof}
To prove \Cref{thm:main_2} we combine results established in \Cref{sec:wine} and \Cref{sec:mg2}.  
We obtain \Cref{eqn:main_mixex2} by applying the WINE formula to \Cref{eqn:lower_1,eqn:lower_2}.

\Cref{eqn:main_isqs2} further incorporates $\E[W_x^{\text{Sep-ISQ-2}}]$. The fact that Sep-ISQ-2 lower bounds the relevant work of any $M/G/2$ system under arbitrary scheduling policy is proved in \Cref{thm:isq_sep} and we derive its analytical expression given by \Cref{eqn:main_2_1} in \Cref{thm:sep_2}.

Finally, \Cref{eqn:main_isqr2} incorporates $\E[W_x^{\text{Rec-ISQ-2-}L}]$. The fact that there exists a Rec-ISQ-2 system lower bounding the relevant work of  
any $M/G/2$ system under arbitrary scheduling policy is proved in \Cref{thm:ar_isq} and we derive a lower bound on $\E[W_x^{\text{Rec-ISQ-2}}]$ given by \Cref{eqn:main_2_2} in \Cref{thm:sep_2}.

\end{proof}

\section{Deriving the Test Functions -- DiffeDrift}
\label{sec:derive}
In this section, we present our DiffeDrift method, which builds upon the drift method/BAR approach from prior literature. In DiffeDrift, we first select the desired drift and then derive the corresponding test function using differential equations. We focus on the ISQ-2 test functions introduced in \Cref{sec:mg2}, as they illustrate the main concept of the method. We generalize these test functions to the general case in \Cref{sec:mgk}.

We start by deriving the affine-drift test function (\Cref{def:isq2_affine}) in \Cref{sec:derive_affine} as well as the constant-drift test function (\Cref{def:isq2_constant}), which we used to characterize the mean total work of an ISQ-2 system. We then derive the modified affine-drift test function (\Cref{def:isq2_modified}) in \Cref{sec:derive_modified}, which is a specialized test function for the Rec-ISQ-2 system.

\subsection{Affine-drift and constant-drift test functions} 
\label{sec:derive_affine}
Recall that the instantaneous drift operator $G$ is a stochastic version of a derivative operator. To find information about the mean work in the system, we consider test functions with a leading quadratic term $w^2$. The drift of such a test function is a linear function of $w$ and by applying \Cref{lem:drift} we plan to solve for $\E[W]$.

The possible states of the ISQ-2 system are $(0,0),(w,1/2)$ and $(w,1)$. There are three events in the system that can affect the state: stochastic arrivals, deterministic decrease in $w$ and the completion of the final job in the system. 

Due to arrivals, $w$ increases as stochastic jumps of size $S$ arrive at rate $\lambda$. When $i>0$, due to work completion $w$ decreases at rate $i$. Therefore, using \Cref{lem:isq_drift} we can write down the drift for any arbitrary test function $h$,
\begin{align*}
    G\circ h(w,i)=\lambda(\E[h(w+S,\min\{i+1/k,1\})-h(w,i)])-h'(w,i)\cdot i.
\end{align*}
Note that for $h$ to satisfy the conditions of \Cref{lem:drift}, it must change continuously if no stochastic events occur. In particular, when a busy period ends it must be that $\lim_{w\to0^+}h(w,i)=h(0,0):=0$ for all $i$.

To derive the affine-drift test function $h_2$, we start with a simple expression for $h_2(0, 0)$ and $h_2(w, 1)$, and solve for the necessary form of $h_2(w, 1/2)$. We define $h_2(w,0)=0$ as above, and let $h_2(w,1)=w^2$.

Now, our goal is to define $h_2(w, 1/2)$ to ensure that the drift at speed $1/2$ matches the drift at speed $1$. This allows us to isolate the complexity of the drift function to the case $i=0$ when applying \Cref{lem:drift}.

We calculate the drift at speed $1$,
\begin{align*}
    G\circ h_2(w,1)=\lambda(\E[h_2(w+S,1)]-\E[h_2(w+S,1)])-h'_2(w,1)=\lambda\E[S^2]+2\lambda w\E[S]-2w.
\end{align*}

We can also write down the drift at speed $1/2$,
\begin{align*}
    G\circ h_2(w,1/2) &= \lambda (\E[h_2(w+S,1)-h_2(w,1/2)])-h'_2(w,1/2)\cdot \frac{1}{2}\\
    &=\lambda\E[S^2]+\lambda w^2+2\lambda w\E[S]-\lambda h_2(w,1/2) -\frac{h'_2(w,1/2)}{2}.
\end{align*}

By comparing $G\circ h_2(w,1)$ with $G\circ h_2(w,1/2)$, we see that the two drifts match if and only if $h_2(w,1/2)$ solves the following differential equation,
\begin{align*}
    \lambda w^2+2w-\lambda h_2(w,1/2)-\frac{h_2'(w,1/2)}{2}=0,\quad h_2(0,1/2)=0,
\end{align*}
which has a unique solution given by $h_2(w,1/2)=w^2-\frac{1-e^{-2w\lambda}}{2\lambda^2}+\frac{w}{\lambda}.$ Solving this differential equation is the essence of our DiffeDrift method. This defines the affine-drift test function $h_2(w,i)$ over the three possible states $(0,0), (w,1/2)$ and $(w,1)$, as given in \Cref{def:isq2_affine}. 

In the proof of \Cref{prop:isq2_work}, we see that the affine-drift test function $h_2$ alone is insufficient to derive  $\E[W]$, we also need to determine $\mathbb{P}(I=0)$. Since $w$ has a constant drift, we can characterize $\mathbb{P}(I=0)$ using a test function with a leading linear term in $w$. By following similar differential-equation-based steps, one can derive the constant-drift test function $g_2$ defined in \Cref{def:isq2_constant}. We generalize the affine-drift test function to the setting of a general number of servers $k$ in \Cref{lem:isqk_affine}.

\subsection{Modified affine-drift test function}
\label{sec:derive_modified}
The affine-drift and constant-drift test functions are sufficient to determine the mean work in the ISQ-2 system. They also suffice to determine the mean relevant work in the Sep-ISQ-2 system. However, in the Rec-ISQ-2 system an recycling stream with jobs of size $x$ arrives into the same system as the truncated stream. Therefore, $\mathbb{P}(I
=0)$ is no longer given by the expression derived in the proof of \Cref{prop:isq2_work} and we cannot apply the result of \Cref{prop:isq2_work} in determining the changes in the drift of the Rec-ISQ-2 system due to the truncated stream. Letting $I^r$ denote the speed distribution in $\mathbb{P}(I^r=0)$, which now depends on the specific recycling stream $\{R_t\}$, which is difficult to characterize exactly.

However, note that the load of the system in equilibrium does not depend on the recycling stream. Using the test function $g(w,i)=w$ for all $(w,i)$, we find that
\begin{align}
    \label{eq:unused-load}
    1-\rho_{\bar{x}}=\mathbb{P}(I^r=0)+\frac{1}{2}\mathbb{P}(I^r=1/2).
\end{align}
This is a characterization of the unused capacity in the Rec-ISQ-2 system, which is unaffected by the details of recycling stream. Therefore, our plan is to modify the affine-drift test function $h_2$ so that we get an unused-load term matching \eqref{eq:unused-load}. Additionally, we use \Cref{lem:ar_drift} instead of \Cref{lem:drift} because it supports recyclings. We start with the following test functions:
\begin{align*}
h_{2,x}(w,1)=w^2,\quad h_{2,x}(0,0)=0\quad\text{and}\quad h_{2,x}(w,1/2)=w^2+\ell_1(w),
\end{align*} 
for a function $\ell_1$ to be determined.

We do not want to match the drift at speed $1/2$ with the drift at speed $1$ which would results in a $\mathbb{P}(I^r=0)$ term. Instead, to obtain an unused-load term we want to choose $\ell_1(w)$ to ensure that
\begin{align}
    \label{eq:drift-derivation}
    G\circ h_{2,x}(w,1/2) - G\circ h_{2,x}(w,1) =\frac{1}{2}\left( G\circ h_{2,x}(0,0) - (\lim_{w\to 0^+} G\circ h_{2,x}(w,1))\right) =: C_2(x, \lambda_x).
\end{align}
In particular, we want to ensure that \eqref{eq:drift-derivation} is a constant not depending on $w$. By doing so, we will create a term matching the unused-load \eqref{eq:unused-load}, allowing us to effectively bound the Rec-ISQ-2 system. We call this constant $C_2(x, \lambda_x)$, though we do not yet know its exact value. The drift at speed $1/2$ is given by,
\begin{align*}
    G\circ h_{2,x}(w,1/2)&=\lambda_x(E[h_{2,x}(w+S_x,1)-h_{2,x}(w,1/2)])-\frac{1}{2}(h_{2,x}(w,1/2))'\\
    &=\lambda_x E[S_x^2]+2w(\lambda_x E[S_x]-1)+w-\lambda_x \ell_1(w)-\frac{\ell_1'(w)}{2}.
\end{align*}

Note that $G\circ h_{2,x}(w,1/2) - G\circ h_{2,x}(w,1) = w-\lambda_x \ell_1(w)-\frac{\ell_1'(w)}{2}$.
Our goal is to choose $\ell_1(w)$ to ensure that this quantity is a constant, which moreover matches $\frac{1}{2}\left( G\circ h_{2,x}(0,0) - G\circ h_{2,x}(0,1)\right)$. This is the heart of the DiffeDrift method.  

Solving the  differential equation $w-\lambda_x \ell_1(w)-\frac{\ell_1'(w)}{2}=C_2(x,\lambda_x)$ with $\ell_1(0)=0$, we find that,
\begin{align*}
    \ell_1(w)=\frac{w}{\lambda_x}-\frac{1-e^{-2w\lambda_x}}{2\lambda_x^2}-\frac{C_2(x,\lambda_x)(1-e^{-2w\lambda_x})}{\lambda_x}.
\end{align*}

Now, we can solve for $C_2(x, \lambda)$ as follows: By definition, we have $G\circ h_{2,x}(0,0)=\lambda_x \E[S_x^2]+\lambda_x \E[\ell_1(S_x)]$ and $\lim_{w \to 0^+} G\circ h_{2,x}(w,1) = \lambda_x E[S_x^2]$. Therefore,
\begin{align*}
    2C_2(x, \lambda_x) = \lambda_x \E[\ell_1(S_x)] \implies
    C_2(x, \lambda_x) = \dfrac{E[S_x]-(1-\widetilde{S_x}(2\lambda_x))/2\lambda_x}{3-\widetilde{S_x}(2\lambda_x)}.
\end{align*}

Now, we can exactly derive the expected value of the drift due to the truncated stream:
\begin{align*}
    \E[G\circ h_{2,x}(W_x,I)]&=\lambda_x\E[S_x^2]+2\E[W_x](-1+\lambda_x\E[S_x])+2C_2(x,\lambda_x)\left(\mathbb{P}(I^r=0)+\frac{1}{2}\mathbb{P}(I^r=1/2)\right)\\
    &=\lambda_x\E[S_x^2]+2\E[W_x](-1+\lambda_x\E[S_x])+2C_2(x,\lambda_x)(1-\rho_{\bar{x}}).
\end{align*}
We were able to apply the unused capacity formula \eqref{eq:unused-load} because our test function satisfied \eqref{eq:drift-derivation}. This is key to the strength of our Rec-ISQ-2 bound \Cref{thm:ar_2}.

We generalize the affine-drift test function to the setting of a general number of servers $k$ in \Cref{lem:isqk_modified}, using the same differential-equation structure for the DiffeDrift method.

\section{Bounding Mean Relevant Work in the $M/G/k$}
\label{sec:mgk}

In this section, we extend the results in \Cref{sec:mg2} for the ISQ-2 system to the general ISQ-$k$ system. The derivation of the test functions follows, the same ideas outlined in \Cref{sec:derive}. We start by deriving our test functions using the DiffeDrift method. We first define the ISQ-$k$ constant-drift test function.

\begin{definition}
\label{def:isqk_contant}
We define the ISQ-$k$ constant-drift test function $g_k$ as follows:
\begin{align*}
    g_k(w,0/k)=0,\,g_k(w,1/k)=w+u_1(w),\,\cdots,\, g_k(w,\frac{k-1}{k})=w+u_{k-1}(w),\, g_k(w,k/k)=w,
\end{align*}
where for each $q\in\{k-1,\cdots,1\}$, we define $u_q(w)$ by the following recursive formula,
\begin{align}
\label{eqn:isqk_constant}
    u_{q}(w)=e^{-\frac{kw\lambda}{q}}\int_0^w\dfrac{e^{\frac{k\lambda y}{q}}(k-q+k\lambda \E[u_{q+1}(S+y)])}{q}\,dy,
\end{align}
where $u_k(w)=0$.
\end{definition}

Note that for $k=2$ this simplifies to the $g_2$ expression defined in \Cref{def:isq2_constant}. Next, we define the ISQ-$k$ affine-drift test function.

\begin{definition}
\label{def:isqk_affine}
We define the ISQ-$k$ affine-drift test function as follows,
\begin{align*}
h_k(w,0/k)=0,\,h_k(w,1/k)=w^2+v_1(w),\,\cdots,\, h_k(w,\frac{k-1}{k})=w^2+v_{k-1}(w),\, h_k(w,k/k)=w^2.
\end{align*}
For each $q\in\{k-1,\cdots,1\}$, we define $v_q(w)$ by the following recursive formula,
\begin{align}
\label{eqn:isqk_affine}
    v_q(w)=e^{-\frac{kw\lambda}{q}}\int_0^w \dfrac{e^{\frac{k\lambda y}{q}}(2ky-2qy+k\lambda \E[v_{q+1}(S+y)])}{q}\,dy,
\end{align}
where $v_k(w)=0$. 
\end{definition}

Note that for $k=2$ this simplifies to the $h_2$ expression defined in \Cref{def:isq2_affine}. Note also that the $v_q(w)$ formulas have a similar recursive structure as $u_q(w)$, but with an additional $y$ coefficient in the numerator of the integrand fraction. \Cref{eqn:isqk_constant,eqn:isqk_affine} can be explicitly solved for any $k$, see \Cref{apdx:isq3} for the 3,4 and 5 server cases, and \Cref{sec:generalization} for further discussion on $k>5$. In \Cref{lem:isqk_constant_assumptions,lem:isqk_affined_bound} of \Cref{apdx:sec9} we prove the validity of these test functions.

Proceeding similarly as in the proof of \Cref{prop:isq2_work}, we compute the mean work of the ISQ-$k$. 

\begin{proposition}
\label{prop:isqk_work}
For an arbitrary job size $S$ such that $E[S^3]$ is finite and arrival rate $\lambda$, the expected total work in an ISQ-$k$ system is given by 
    \begin{align}
    \label{eqn:isqk_work}
        \E[W^{\text{ISQ-}k}]=\dfrac{\lambda\E[S^2]}{2(1-\lambda\E[S])}+\dfrac{\lambda\E[v_1(S)]}{2+2\lambda\E[u_1(S)]}.
    \end{align}    
\end{proposition}
\begin{proof}
By \Cref{lem:isqk_constant} and \Cref{lem:isqk_affine} in \Cref{apdx:sec9} the drifts of the constant and affine test functions at speed $0$ are given by $G\circ g_k(0,0)=\lambda\E[S+u_1(S)]$ and $G\circ h_k(0,0)=\lambda\E[S^2+v_1(S)]$ respectively. At all other speeds $i \geq 1/k$, the drift of the constant-drift test function is given by $G\circ g_k(w,i)=\lambda\E[S]-1$ and the drift of the affine-drift test function is given by $G\circ h_k(w,i)=\lambda\E[S^2]+2w(\lambda\E[S]-1)$.

The rest of the proof can be completed in a similar fashion as in the proof of \Cref{prop:isq2_work}. Summarizing the drift of $g_k$ over all states, we get, $G\circ g_k(w,i)=\lambda\E[S]-1+(1+\lambda\E[u_1(S)])\cdot\mathbbm{1}_{\{i=0\}}.$
By \Cref{lem:isqk_constant} the constant-drift test function $g_k$ satisfies the assumption of \Cref{lem:drift}. Thus, we have $\E[G\circ g_k(W,I)]=0$ and solving for $\E[\mathbbm{1}_{\{i=0\}}]=\mathbb{P}(I=0)$ we get the probability that the system is in speed 0,
\begin{align}
\label{eqn:isqk_prob0}
    \mathbb{P}(I=0)=\dfrac{1-\lambda\E[S]}{1+\lambda\E[u_1(S)]}.
\end{align}
Similarly, we can summarize the drift of $h_k$ as
\begin{align*}
    G\circ h_k(w,i)=\lambda\E[S^2]+2w(\lambda\E[S]-1)+\lambda\E[v_1(S)]\cdot\mathbbm{1}_{\{i=0\}}.
\end{align*}
By \Cref{lem:isqk_affine} in \Cref{apdx:sec9} the affine-drift test function $h_k$ satisfies the assumption of \Cref{lem:drift}. Thus, we have $\E[G\circ h_k(W,I)]=0$. Solving for $\E[W]$ we get,
\begin{align*}
    \E[W^{ISQ{\text -}k}]&=\dfrac{\lambda\E[S^2]}{2(1-\lambda\E[S])}+\dfrac{\lambda\E[v_1(S)]}{\lambda\E[S]}\cdot\mathbb{P}(I=0)=\dfrac{\lambda\E[S^2]}{2(1-\lambda\E[S])}+\dfrac{\lambda\E[v_1(S)]}{2(1+\lambda\E[u_1(S)])}
\end{align*}    
where $\mathbb{P}(I=0)$ is given by \Cref{eqn:isqk_prob0}.

\end{proof}

Next, using \Cref{prop:isqk_work} we give the exact mean relevant work in the Sep-ISQ-$k$ system. This constitutes the first novel relevant work lower bound of \Cref{thm:main_k}.

\begin{theorem}
\label{thm:sep_k}
For an arbitrary job size $S$ and arrival rate $\lambda$, the expected relevant work in the separate-ISQ-k system is given by,    
\begin{align}
    \label{eqn:isqk_sep}
    \E[W_x^{\text{Sep-ISQ-}k}]=\dfrac{\lambda_x\E[S_x^2]}{2(1-\lambda_x\E[S_x])}+\dfrac{\lambda_x\E[v_1(S_x)]}{2+2\lambda_x\E[u_1(S_x)]} +\dfrac{k\lambda(1-F_S(x))x^2}{2},
\end{align}
where $v_1$ and $u_1$ are defined in \Cref{eqn:isqk_constant,eqn:isqk_affine}.
\end{theorem}
\begin{proof}
The relevant work in the separate-ISQ-$k$ system is the sum of the total work in the truncated-ISQ-$k$ and the relevant work in the separate $M/G/\infty$ system.

The derivation for the relevant work in the truncated-ISQ-$k$ system is exactly the same as in the proof of \Cref{thm:sep_2} which makes up the first two terms on the RHS of \Cref{eqn:isqk_sep}.

For the separate $M/G/\infty$ system, the servers will now operate at a speed of $1/k$, and the arrival rate into this system is $\lambda(1-F_S(x))$. Thus the mean relevant work at this separate server is $\frac{k\lambda(1-F_S(x))x^2}{2}$. 
\end{proof}

For the last part of this section, we derive the Rec-ISQ-$k$ lower bound. We start with the ISQ-$k$ modified affine-test function, denoted by $h_{k,x}$.

\begin{definition}
\label{def:isqk_modified}
    We define the ISQ-$k$ modified affine-drift test function $h_{k,x}$ as follows:
    \begin{align*}
    h_{k,x}(w,0/k)=0,\,h_{k,x}(w,1/k)=w^2+\ell_1(w),\,\cdots,\, h_{k,x}(w,\frac{k-1}{k})=w^2+\ell_{k-1}(w),\, h_{k,x}(w,k/k)=w^2.
    \end{align*}
    For each $q\in\{k-1,\cdots,1\}$, we define $\ell_q(w)$ in terms of $\ell_{q+1}(w)$ by the following recursive formula,
    \begin{align}
    \label{eqn:isqk_modified}
        \ell_q(w)=e^{-\frac{kw\lambda_x}{q}}\int_0^w\dfrac{e^{\frac{k\lambda_x y}{q}}\left(k(q-k)C_k(x,\lambda_x)+2(k-q)y+k\lambda_x\E[\ell_{q+1}(S_x+y)] \right)}{q}\,dy
    \end{align}
    where $\ell_k(w)=0$ and 
    \begin{align}
    \label{eqn:C_k}
         C_k(x,\lambda_x)=\frac{2}{k}\cdot\dfrac{\lambda_x\E[v_1(S_x)]}{2+2\lambda_x\E[u_1(S_x)]}\geq 0,
    \end{align}
    where $v_1$ and $u_1$ are defined by \Cref{eqn:isqk_constant,eqn:isqk_affine}.
\end{definition}

The next theorem constitutes the second lower bound of \Cref{thm:main_k} and is based on the modified affine-drift test function. We prove this theorem in \Cref{apdx:sec9}.

\begin{restatable}{theorem}{mainarisqk}
\label{thm:ar_k}
For an arbitrary job size $S$ and arrival rate $\lambda$, a lower bound on expected relevant work $\E[W_x^{\text{Rec-ISQ-}k}]$ in the Rec-ISQ-$k$ system is given by,
    \begin{align}
    \label{eqn:isqk_jump}
    \E[W_x^{\text{Rec-ISQ-}k\text{-}L}]=
    \dfrac{\lambda_x\E[S_x^2]}{2(1-\lambda_x\E[S_x])}+\dfrac{\lambda_x\E[v_1(S_x)]}{2+2\lambda_x\E[u_1(S_x)]}\cdot\dfrac{1-\rho_{\bar{x}}}{1-\rho_x}+\frac{(\lambda - \lambda_x)J_x}{2(1-\rho_x)}.
\end{align}
Here $J_x$ is the smallest jump size incurred by the recycling stream, i.e., 
\begin{align*}
    J_x:=\min\left\{x^2, \min_{i < 1}\left\{
        \inf_{w\in[0,kix]}h_{k,x}(w+x,i+1/k)-h_{k,x}(w,i)
        \right\}\right\}.
\end{align*}
\end{restatable}

We believe that $J_x=x^2$ for all thresholds $x$, numbers of servers $k$, and job size $S$.
We proved in \Cref{thm:ar_2} that $J_x=x^2$ whenever $k=2$.
However, formal proof of this assertion for $k \ge 3$ remains elusive, so we leave this as an open problem. We have empirically verified that this is true for the 3-server setting with exponential job sizes.

For theoretical evidence, one can also observe that each term in $J_x$ is of the form
\begin{align*}
    x^2 + 2wx + \E[l_{k}(x+S)]-l_{k-1}(x),
\end{align*}
and the contribution of $x^2+2wx$ should dominate the residual $\E[l_{k}(x+S)]-l_{k-1}(x)$ across all $x\geq 0$, because the residual grows at most linearly, by \Cref{lem:isqk_modified_bound} in \Cref{apdx:sec9}. Therefore we propose the following conjecture:
\begin{conjecture}
\label{conj:x2}
    For any job size $S$ and any number of servers $k \ge 2, J_x=x^2$.
\end{conjecture}

\subsection{Proof of \Cref{thm:main_k}}
\label{sec:thm_main_k}
In this subsection, we prove \Cref{thm:main_k}.
\begin{proof}
To prove \Cref{thm:main_k} we combine results established in \Cref{sec:wine} and \Cref{sec:mgk}.  
We obtain \Cref{eqn:main_mixex_k} by applying the WINE formula to \Cref{eqn:lower_1,eqn:lower_2}.

\Cref{eqn:main_isqsk} further incorporates $\E[W_x^{\text{Sep-ISQ-}k}]$. The fact that Sep-ISQ-$k$ lower bounds the relevant work of any $M/G/k$ system under arbitrary scheduling policy is proved in \Cref{thm:isq_sep} and we derive its analytical expression given by \Cref{eqn:main_k_1} in \Cref{thm:sep_k}.

Finally, \Cref{eqn:main_isqrk} incorporates $\E[W_x^{\text{Rec-ISQ-}k}]$. The fact that there exists a Rec-ISQ-$k$ system lower bounding the relevant work of  
any $M/G/k$ system under arbitrary scheduling policy is proved in \Cref{thm:ar_isq} and we derive a lower bound on $\E[W_x^{\text{Rec-ISQ-k}}]$ given by \Cref{eqn:main_k_2} in \Cref{thm:sep_k}.

\end{proof}

\section{Discussion on Generalization}
\label{sec:generalization}

In this section, we discuss how our theoretical framework naturally extends to the setting where arrivals have job sizes that are unknown but drawn from a known distribution, or where arrivals come with estimates. 
In addition, we address how to solve the recursive formulas \Cref{eqn:isqk_constant,eqn:isqk_affine} for deriving ISQ-based lower bounds when $k$ is large.

\subsection{Unknown size}
\label{sec:gen_unknown}
Our framework can be extended to the setting where arrivals have job sizes that are unknown but drawn from a known distribution. 
The WINE formula, originally introduced by \citet{Scully2020}, shows that the mean response time can be expressed as
\begin{align}
\label{eqn:wine_2}
    \E[T] = \frac{1}{\lambda}\int_0^{\infty}\frac{\E[W_r]}{r^2}\,dr,
\end{align}
where $W_r$ denotes the total $r$-relevant Gittins work in the system. 
A job is said to be \emph{$r$-relevant} if its Gittins rank is less than the cutoff $r$, and $W_r$ is computed by summing over all jobs, the service each requires before its Gittins rank exceeds $r$. Note that this $r$ is unrelated to the recycling state, and we are using standard notation.

The \emph{Gittins rank} is a function of attained service $s$ and is defined as
\begin{align}
\label{eqn:gittins}
    \text{Gittins-rank}(s) = \inf_{t > s} \frac{\E[\min\{S,t\} - s \mid S > s]}{\mathbb{P}(S \leq t \mid S > s)}.
\end{align}
The Gittins rank was originally introduced by \citet{Gittins1979} to define the Gittins policy. The Gittins policy in scheduling selects the job with the smallest Gittins rank and is known to be optimal in the $M/G/1$ system.  However, the WINE formula \eqref{eqn:wine_2} is completely general, in particular, it holds for \emph{any} scheduling policy, even those unrelated to the Gittins policy.  

In the known size setting, the Gittins rank reduces to the remaining size, so $W_r$ is simply the sum of remaining work less than $r$, and \Cref{eqn:wine_2} reduces to \Cref{eqn:wine}.

The MixEx bound carries over naturally into this setting.  By computing the mean relevant work $\E[W_r]$ in the $M/G/\infty$ queue, together with the resource-pooled Gittins policy,  
we can plug these into the WINE framework to obtain the MixEx bound in the unknown job size setting, analogous to \Cref{eqn:main_mixex_k}. In addition, the Gittins-$k$ policy is the corresponding upper bound, which is heavy-traffic optimal in the unknown job size setting (\citet{Scully2020}).

To illustrate the Gittins policy and Gittins rank, let us consider two specific cases where the job size distribution follows either a decreasing hazard rate (DHR) or an increasing hazard rate (IHR). In the DHR case, all jobs arrive to the system with their minimum Gittins rank, which is equal to the reciprocal of their hazard rate. Therefore, the Gittins policy reduces to the least-attained-service (LAS) policy and the Gittins work is easy to compute. In the IHR case, jobs arrive to the system with rank $>0$ and their rank gradually decreases over time until completion. In this case, the Gittins policy reduces to FCFS. 

Next, we generalize the ISQ and ISQ-Recycling bounds in these two cases. In the DHR case, all jobs arrive to the system with their minimum rank and gradually increase over time until completion. Thus, there is no recycling in this system and our lower bound is given by applying the ISQ-$k$ formula to a truncated job size distribution.


In the IHR case, depending on the cutoff $r$ and the job size distribution, either all jobs enter the ISQ-$k$ system or all jobs recycle. Thus, ISQ only gives a nontrivial bound for ranks above the initial rank of the distribution and once again that lower bound is given by the ISQ-$k$ total work formula, \Cref{eqn:isqk_work}.

The Gittins rank function and the corresponding ISQ bounds are very simple in the IHR and DHR cases. For more complicated distributions, the Gittins rank function is non-trivial but related ISQ bounds can still be proven with additional effort.

\subsection{Estimated size}
\label{sec:gen_estimated}

Our framework can also incorporate the setting where arrivals come with estimates of job sizes. Just as in the case of unknown sizes, there are corresponding variants of the Gittins rank and corresponding WINE results in the setting where the joint distribution of size and estimate is known. For any such joint distribution, similar results to those
outlined in \Cref{sec:gen_unknown} are possible. 

When the estimates are of high quality, for jobs whose time in service is less than their estimate, the Gittins rank is approximately equal to the remaining size. Therefore, the analysis becomes more similar to the setting considered in this paper and the Gittins rank and relevant Gittins work are easier to compute, and thus similar ISQ bounds may be proven.

\subsection{Computational considerations for large $k$.}

In \Cref{apdx:isq3}, we explicitly derive the ISQ-$k$ relevant work lower bounds and response time lower bounds for $k=3,4,$ and $5$. We observe that the number of terms in these relevant work bounds roughly doubles as $k$ increases. Looking more carefully, the number of terms in $u_i$ and $v_i$, which are also given in \Cref{apdx:isq3}  double per iteration as $i$ decreases. As a result, manually writing down and computing the numerical integrals for $k>5$ becomes difficult and a standard symbolic algebra package such as \texttt{Mathematica} cease to function effectively. However, one could implement a specialized computational program to compute these recursive functions, as defined in \Cref{eqn:isqk_constant,eqn:isqk_affine}.

In particular, the recursive equations involve only three types of terms: one constant term, one linear term in $w$, and many structured exponential terms of the form
\begin{align}
z_{c,r,m,[q]}(w) = c \cdot \frac{e^{-r w \lambda \prod_i \widetilde{S}(q_i \lambda)}}{\lambda^m},
\end{align}
which can be characterized by a quadruple $(c, r, m, [q])$, where $c$ and $r$ are rationals, $m$ is a nonnegative integer, and $[q]$ denotes the list of rationals appearing in the product of transforms.

Given such a representation, each iteration can be efficiently performed in closed form. 
Specifically, each time we take integrals of the form
\begin{align*}
u_{q}(w) = e^{-\tfrac{k w \lambda}{q}} \int_0^w \frac{e^{\tfrac{k \lambda y}{q}} \big(k \lambda \, \E[z_{c,r,m,[q]}(S+y)]\big)}{q}\,dy,
\end{align*}
as in \Cref{eqn:isqk_constant,eqn:isqk_affine},
the result produces two terms of exactly the same structure. 
Thus, it is feasible to build a computational tool to iteratively generate these terms and extend the analysis to $k \approx 20$, 
since the branching factor of the recursion is at most 2. 
We leave the implementation of such a computational program to future work.

\section{Empirical and Numerical Results}
\label{sec:empirical}

In this section, we demonstrate the effectiveness of our method by presenting empirical and numerical results for our novel lower bounds on the mean response time in the $M/G/k$ system under arbitrary scheduling policies. We focus on our performance metric, the Uncertainty Improvement Ratio (UIR) which we define in \Cref{sec:empirical_uir}, and examine its behavior under a variety of settings.

We investigate the impact of changing the number of servers in \Cref{sec:varying_k}, of job size variability in \Cref{sec:varying_jobsizes}, and finally compare the UIR of our lower bounds against recently proposed upper bounds in the literature in \Cref{sec:upper_bound}.

\subsection{Uncertainty Improvement Ratio}
\label{sec:empirical_uir}

To quantify the quality of our lower bounds on mean response time (\Cref{thm:main_2,thm:main_k}) we use the Uncertainty Improvement Ratio (UIR).

This metric captures the shrinkage of the uncertainty region where the optimal policy may lie.  It measures how much of this region is closed by a new lower bound relative to the best prior lower bound. 
\begin{definition}
\label{def:uir}
The Uncertainty Improvement Ration of a response time lower bound $\E[T_{\text{novel-lower}}]$ relative to a pair of prior response time lower and upper bounds is given by 
    \begin{align*}
\mathrm{UIR} = \frac{\E[T_{\text{novel-lower}}] - \E[T_{\text{lower}}]}{\E[T_{\text{upper}}] - \E[T_{\text{lower}}]}.
\end{align*}
\end{definition}

Here $\E[T_{\text{lower}}]$ is a prior lower bound and $\E[T_{\text{upper}}]$ is a prior upper bound on the response time of $M/G/k$. We will use the simulated SRPT-$k$ response time as the upper bound. By definition, $\mathrm{UIR} \leq 100\%$, with $\mathrm{UIR}=0\%$ indicating no improvement over the prior bound 
and $\mathrm{UIR}=100\%$ indicating full closure of the uncertainty region.

\subsection{Varying number of servers $k$}
\label{sec:varying_k}

In this section, we examine the UIR of the ISQ-$k$ system. We study how the UIR of the MixEx bound relative to the naive lower bounds changes as $k$ increases. We also examine how the UIR of our novel bounds relative to the MixEx bound varies with $k$. Computing the response time requires numerically solving \Cref{eqn:main_k_1,eqn:main_k_2}. The explicit expressions for the functions $v_1$ and $u_1$ appearing in those equations are given by \Cref{eqn:isq2_constant,eqn:isq2_affine} for $k=2$ and also in \Cref{apdx:isq3} for $k = 3, 4,$ and $5$.
This numerical integration, performed using \texttt{Mathematica} or \texttt{SciPy}, does not present computational difficulties. We discuss how these functions could be computed for larger $k>5$ using a specialized computational program in \Cref{sec:generalization}.

\begin{figure}[htbp!]
    \centering
    \includegraphics[width=0.85\textwidth]{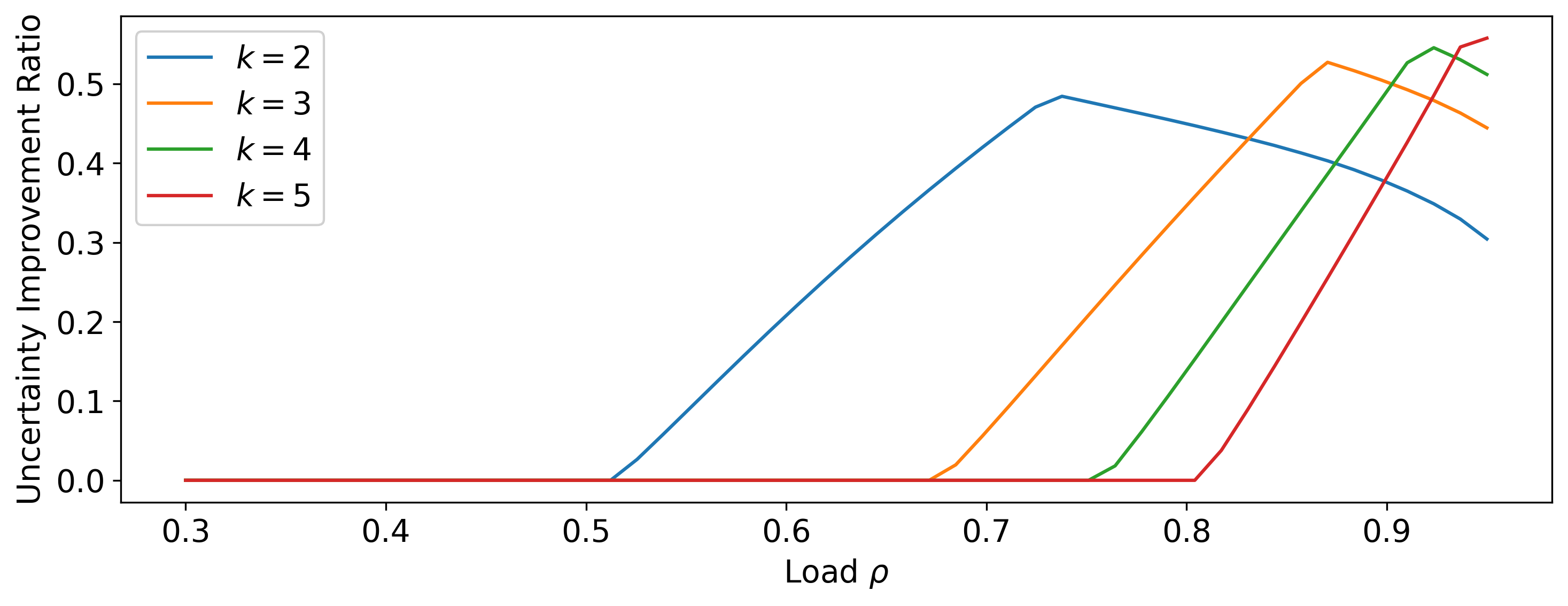}
    \caption{UIR of MixEx bound relative to naive bounds in the $M/G/k$ setting with exponential job sizes. Load ranges from 0.3 to 0.95 with five million simulated arrivals per load.}
    \label{fig:mixexk_uir}
\end{figure}

The UIR of the MixEx bound relative to the naive lower bounds is shown in \Cref{fig:mixexk_uir}. 
For $k = 2, 3, 4$ and $5$, we observe that the maximum UIR of the MixEx bound increases as $k$ increases. However, the improvement only begins to deviate from zero at progressively higher loads levels as $k$ increases, with the transition occurring approximately at $\rho \approx 1 - 1/k$. 

\begin{figure}[htbp!]
    \centering
    \includegraphics[width=0.85\textwidth]{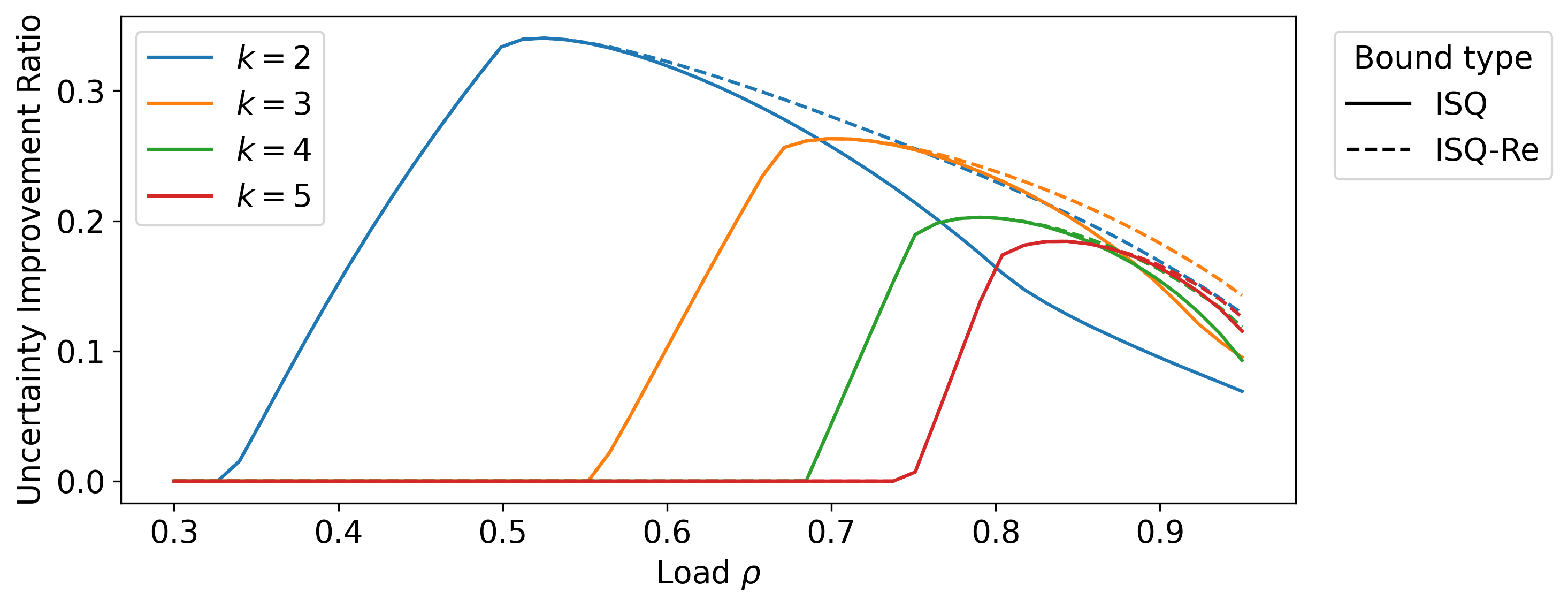}
    \caption{UIR of novel bounds relative to MixEx bound in the $M/G/k$ setting with exponential job sizes. Load ranges from 0.3 to 0.95 with five million simulated arrivals per load.}
    \label{fig:isqk_uir}
\end{figure}

The UIR of our novel lower bounds relative to the MixEx bound is shown in \Cref{fig:isqk_uir}. 

These bounds begin closing the uncertainty gap at significantly lower loads than MixEx and sustain meaningful improvement even as $\rho$ approaches 0.95. However, in contrast to the MixEx bound, the maximum UIR of the ISQ and ISQ-Recycling bounds decreases as $k$ increases. Note that, \Cref{fig:isqk_uir} should be viewed as an improvement upon the improvements already shown in \Cref{fig:mixexk_uir}. Taken together, the two bounds result in a maximum UIR improvement of roughly 60\% for all $k=2$ to 5, resulting in a substantial reduction in the uncertainty region at the optimal policy. This improvement is demonstrated across a wide range of load.

While the ISQ and ISQ-Recycling bounds are comparable for lower loads in this setting, the ISQ-Recycling bound provides a significant advantage at higher loads. As we show in the next subsection, this advantage becomes even more pronounced when the job size distribution has large variability (see \Cref{fig:hyper_uir}).

\subsection{Job size variability}
\label{sec:varying_jobsizes}

In this section, we study the improvements of the ISQ and ISQ-Recycling bounds over the MixEx bound for two families of job size distributions with differing variability, in the setting of two servers. 

In \Cref{fig:uniform_uir}, we consider a low variability case with three uniform job size distributions, where the squared coefficient of variation is $C^2 = \mathrm{Var}(S)/\E[S]^2 \in \{0.01, 0.05, 0.2\}$. In \Cref{fig:hyper_uir}, we examine a high variability case with three two-branch hyperexponential job size distributions, where $C^2 \in \{2, 3, 5\}$.

\begin{figure}[htbp!]
    \centering
    \includegraphics[width=0.85\textwidth]{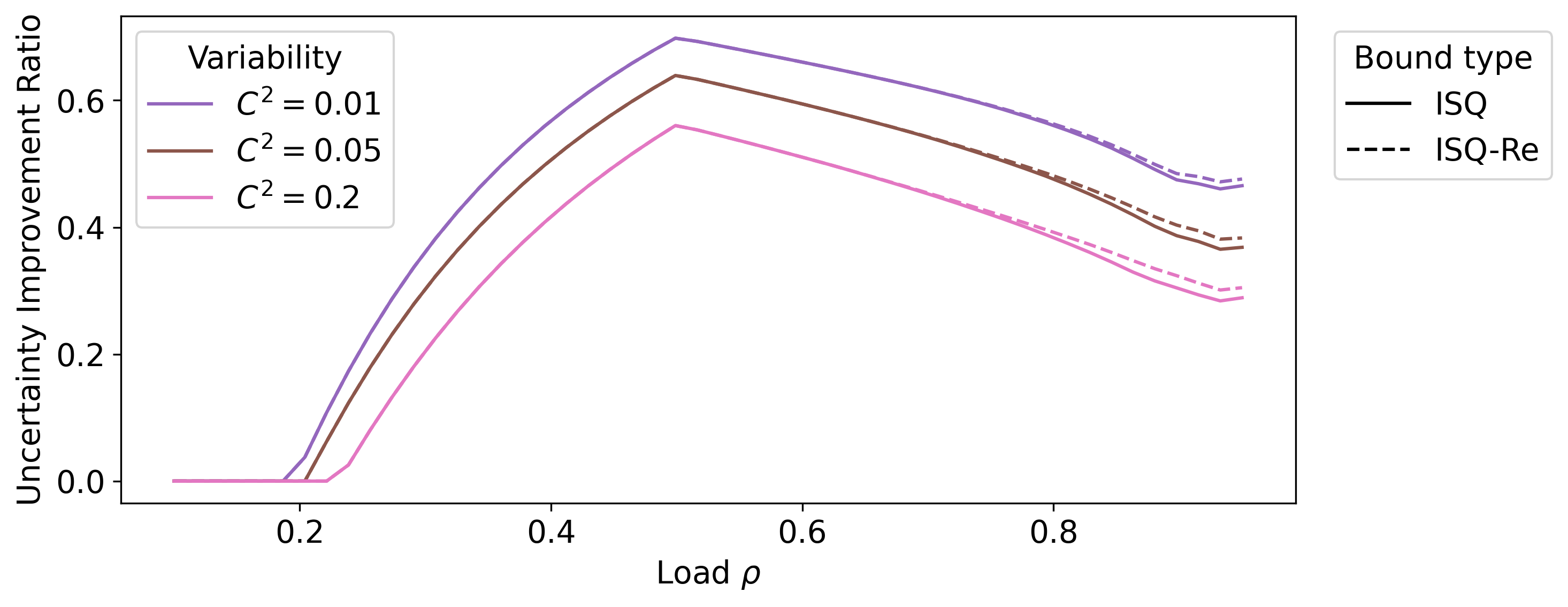}
    \caption{UIR relative to the MixEx bound for the ISQ and ISQ-Recycling bounds in the $M/G/2$ setting with Uniform job size distributions of various $C^2$. Load ranges from 0.1 to 0.95 with five million simulated arrivals per load.}
    \label{fig:uniform_uir}
\end{figure}

Important observations become apparent in both the high and low variability settings.
First, for the low variability distribution, the ISQ-$k$ based lower bounds achieve substantially greater improvement over MixEx, reaching a UIR of 62\% for the uniform distribution with $C^2 = 0.05$. Moreover, the improvement spreads broadly across the load range. Finally, the UIR continues to rise as the job size distribution approaches a deterministic distribution, approaching a limit just above 70\%.

An intuitive explanation is that, in low variability settings, busy periods are short and have similar lengths. Without large jobs to trigger long busy periods, the ISQ-$k$ system spends most of its time operating below a speed of 1. This makes ISQ-$k$ behave more similarly to a true $M/G/k$ system, allowing the ISQ and ISQ-Recycling bounds to yield substantially larger improvements over MixEx.

\begin{figure}[htbp!]
    \centering
    \includegraphics[width=0.85\textwidth]{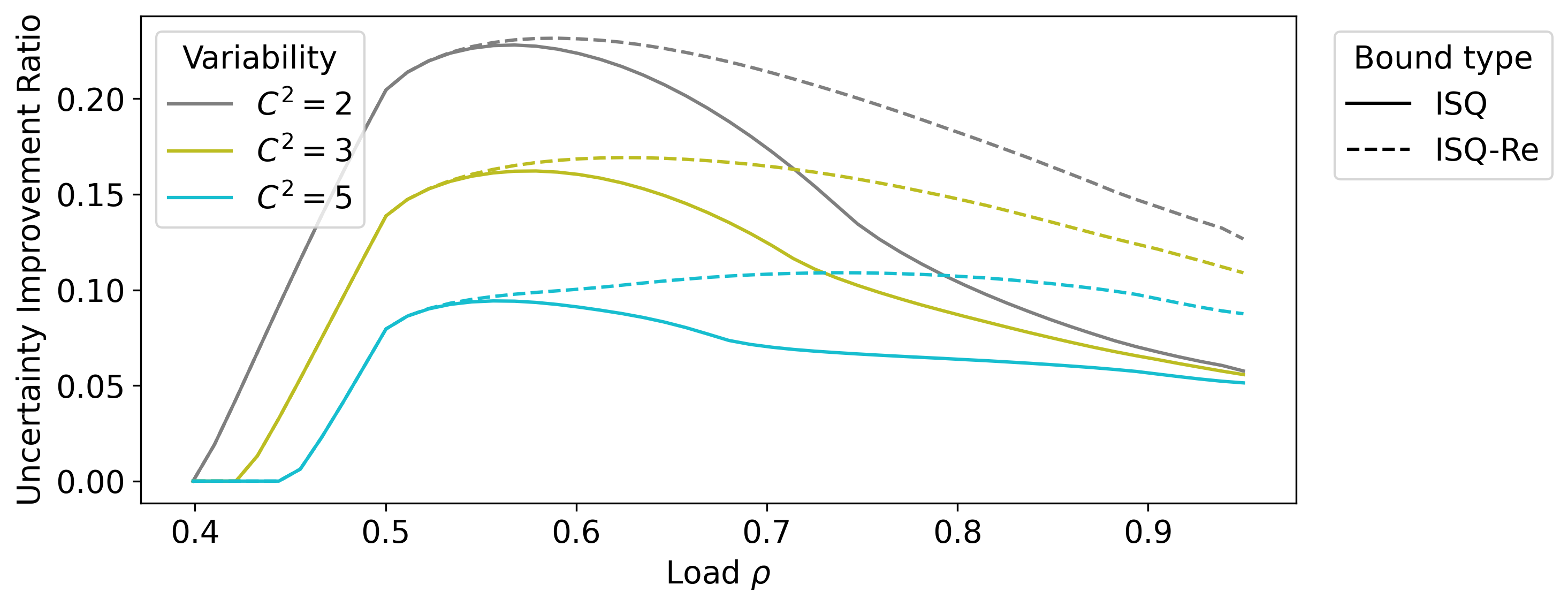}
    \caption{UIR relative to the MixEx bound for the ISQ and ISQ-Recycling bounds in the $M/G/2$ setting with hyperexponential job size distributions of various $C^2$. Load ranges from 0.4 to 0.95 with five million simulated arrivals per load.}
    \label{fig:hyper_uir}
\end{figure}

Secondly, for the high variability job size distributions, the Rec-ISQ-2 based ISQ-Recycling bound which incorporates recycling information offers a much greater improvement over MixEx than ISQ bound alone, justifying the need for \Cref{thm:ar_2}. This makes sense, as the larger, recycled jobs make up more of the relevant work at intermediate thresholds under job size distributions with high variability.
We also observe that the maximum UIR decreases as variability increases and the peak shifts to progressively higher load levels.

\subsection{Upper bound comparison}
\label{sec:upper_bound}
In this subsection, we compare the UIR of our novel lower bounds relative to the MixEx bound against the UIR of a recently introduced scheduling policy.

This upper bound corresponds to the the recently introduced SRPT-Except-$k+1$ (SEK) policy (\citet{Grosof2024BoMS} and \citet{Grosof2026}), which is proven to outperform SRPT-$k$ at all load levels and aims to tighten upper bounds on the mean response time of the optimal $M/G/k$ scheduling policy.

In particular, we empirically study the UIR of the Practical SEK policy, a variant of the core SEK policy which has been empirically shown to improve on SRPT-$k$.

We consider the setting of two servers with an exponential job size distribution. As shown in \Cref{fig:sek_uir}, uur best lower bound, the ISQ-Recycling bound, achieves a maximum UIR of approximately 34\%, whereas the Practical-SEK policy attains a UIR of only about 3\%. See \Cref{apdx:sek} for additional details on the Practical-SEK policy and the computational experiments.

Thus, our lower bounds close the uncertainty region far more dramatically from below than Practical-SEK does from above.

\begin{figure}[htbp!]
    \centering
    \includegraphics[width=0.85\textwidth]{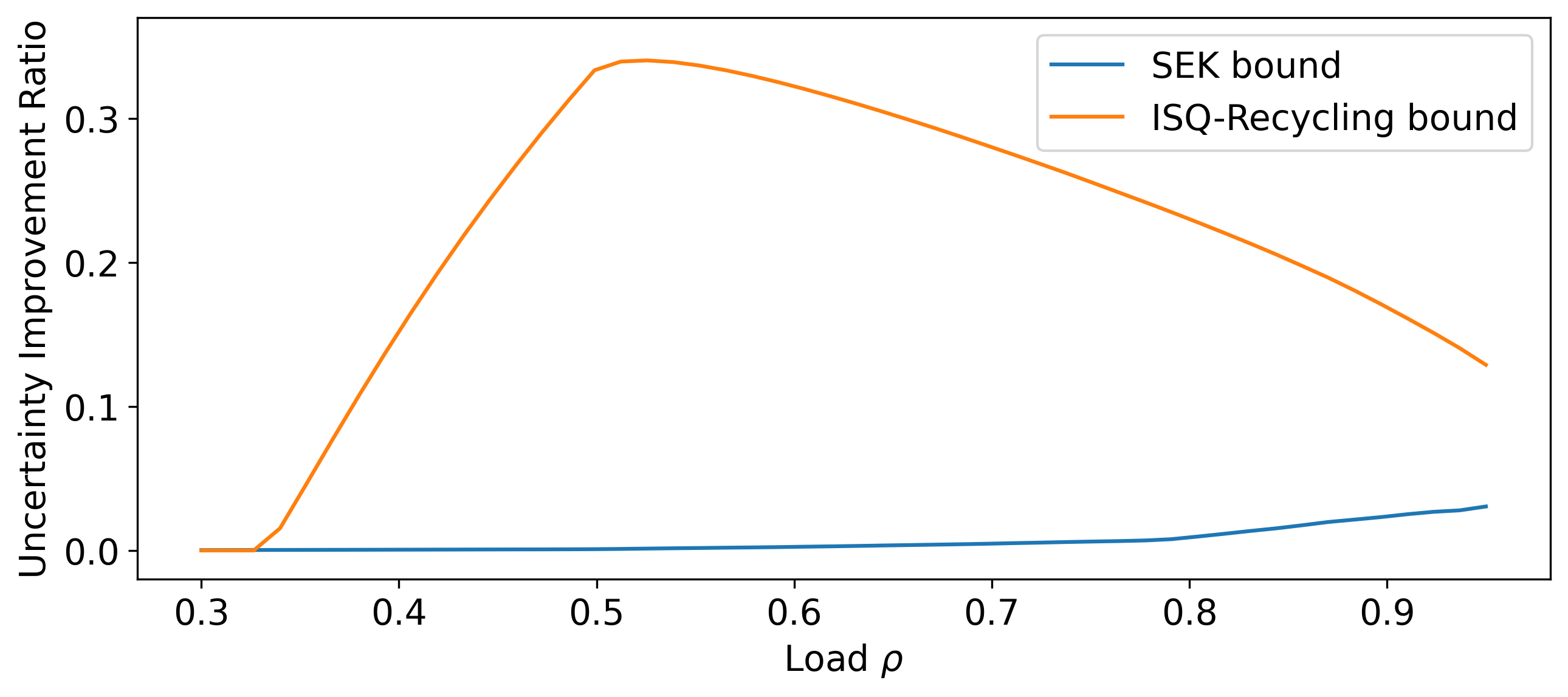}
    \caption{UIR relative to the SRPT-$k$ and the MixEx bound for the ISQ-Recycling and Practical-SEK bounds in the $M/G/2$ setting with $\text{Exp}(1)$ job sizes. Load ranges from 0.3 to 0.95 with five million simulated arrivals per load.}
    \label{fig:sek_uir}
\end{figure}

\section{Conclusion}
\label{sec:conclusion}

We present the first nontrivial lower bounds on the mean response time of the $M/G/k$ system under arbitrary scheduling policies. 

We start by applying the WINE combination to existing lower bounds, producing the MixEx bound, which substantially tightens the uncertainty region of the optimal policy, though it remains loose at intermediate loads.

Our main contributions are the ISQ-$k$ system and the DiffeDrift method, which together yield two novel lower bounds on response time, the ISQ and ISQ-Recycling bounds. Our ISQ and ISQ-Recycling bounds achieve significantly tighter results than MixEx bound.

Empirically, our lower bounds improve upon the prior naive bounds and the MixEx bound across a wide range of loads, with the most significant improvement appearing under moderate load conditions. For exponential job size distributions, the ISQ-Recycling bound achieves a maximum Uncertainty Improvement Ration (UIR) of 34\% on top of the MixEx bound, which itself provides a substantial improvement over the naive bounds. Moreover, for high variability job sizes, the ISQ-Recycling bound exhibits substantially higher UIR than the ISQ bound.

Future directions include proving \Cref{conj:x2} which concerns the behavior of jumps due to recycling for $k > 2$ servers, extending our framework to the setting of unknown job sizes as discussed in \Cref{sec:generalization}, generalizing to the $G/G/k$ model, and analyzing the asymptotic scaling of our bounds as $k$ grows large.

\bibliographystyle{plainnat}

\bibliography{reference.bib}

\appendix

\section{Proofs for Section 7}
\label{apdx:sec7}
\propdrift*

\begin{proof}
We have  $\int_\mathcal{S}g(x)\pi(dx)=\int_\mathcal{S}\int_\mathcal{S}g(y)P(t,x,dy)\,\pi(dx).$ This equality holds because $\pi$ is a stationary distribution, so $\int_\mathcal{S} P(t,x,U)\, \pi(dx) = \pi(U)$ for any measurable set $U$. Subtracting the LHS, dividing both sides by $t$, and introducing a limit,
\begin{align*}
    \nonumber
    0&=\dfrac{1}{t}\int_\mathcal{S}\left( \int_\mathcal{S} g(y)P(t,x,dy)-g(x)\right)\pi(dx)=\lim_{t\to0}\int_\mathcal{S} \frac{1}{t} \left(\int_\mathcal{S} g(y)P(t,x,dy)-g(x)\right)\pi(dx).
\end{align*}
Therefore,
\begin{align}
    \label{eq:general-bar}
    0&=\lim_{t\to0}\int_\mathcal{S}  A(t,x)\,\pi(dx)+\lim_{t\to0}\int_\mathcal{S}B(t,x)\,\pi(dx)
\end{align}
For the first term on the RHS, we invoke the dominated convergence theorem (DCT) in order to take the limit under the integral. The assumption of the DCT is satisfied by the assumption that $\lim_{t \to 0} A(t,\cdot)\to G\circ g(\cdot)$ uniformly, so there exists some $t$ small enough such that the integrand can be bounded by, e.g., $2|G\circ g(x)|$ for all $x\in \mathcal{S}$, which is $\pi$-integrable by assumption.

The second term also vanishes by assumption. Therefore, \eqref{eq:general-bar} is equivalent to the following:
\begin{align*}
    0=\int_\mathcal{S}\lim_{t\to0} A(t,x)\pi(dx)+0=\int_\mathcal{S} G\circ g(x)\pi(dx),
\end{align*}
giving us the desired result. 
\end{proof}

\lemwsquare*
\begin{proof}
    Imagine an $M/G/1$ queue such that after each busy period, the servers stops working until the system accumulates $k$ jobs, then restarts and starts processing jobs at a rate of 1. We call this system an $M/G/1$ queue with server activation threshold.
    Then the expected work of an ISQ-$k$ queue will be bounded above by this $M/G/1$ queue with a server activation threshold. The expected work in this $M/G/1$ queue with a server activation threshold is equivalent to the expected waiting time of an $M/G/1$-FCFS queue with rest periods drawn from a distribution $T_0$, where $T_0$ is the sum of $k$ exponential random variables, and $T_0$ is coupled to the arrival process.

    $M/G/1$ queues with rest periods are studied in \citet{Scholl1983O}. \citet[Equation (15)]{Scholl1983O} characterize the expected second moment of waiting time of an $M/G/1$ queue with rest periods. In particular, this expectation is finite if $E[S^3]<\infty$ and $\E[T_0^3]<\infty$. Note that the sum of $k$ exponential distributions has finite moments of all orders, so the lemma holds by our assumption that $\E[S^3]<\infty$. Note that \citet{Scholl1983O} assume that $T_0$ is independent of the arrival process, which is not the case in our setting. However the proof of \cite[Equation (15)]{Scholl1983O} proceeds by first characterizing the number of arrivals during the rest period, before proceeding with the rest of the proof. In our system, that number of arrivals always $k$, simplifying \citet[Equation (7)]{Scholl1983O}, and remainder of the proof holds unchanged.
 \end{proof}

\lemuniform*
\begin{proof}
To illustrate the high level idea, we first consider the case where the speed $i$ is 1. By the definition of the Poisson arrival process, for $t < w$,
\begin{align}
    \nonumber
    &\dfrac{1}{t}\E[g(W(t),1)-g(w,1)\,|\,W(0)=w, I(0)=1]\\
    \nonumber
    &=\mathbb{P}(N_t=0)g(w-t,1)+\mathbb{P}(N_t=1)\E[g(w-t+S,1)]+\mathbb{P}(N_t>1)\E[g(w-t+N_tS,1)\mid N_t>1]\\
    \nonumber
    &=\dfrac{1}{t}(1-\lambda t)g(w-t,1)+\frac{o(t)}{t}g(w-t,1) +\dfrac{1}{t}\lambda t \E[g(w-t+S,1)]+\frac{o(t)}{t}\E[g(w-t+S,1)]\\
    \nonumber
    &\quad + \frac{o(t)}{t}\E[g(w-t+N_t\cdot S,1)\mid N_t>1]\\
    \label{eq:uniform-poisson}
    &=\underbrace{-\frac{g(w,1)}{t}+\frac{g(w-t,1)}{t}+\lambda\E[g(S+w-t,1)]-\lambda g(w-t,1)}_{A(t,w,1)}+B(t,w,1)
\end{align}
Before we continue, note that to ensure uniform convergence, we must also consider the case that $t > w$, for $w$ close to 0. In this case, the only significant change is the at the $N_t = 0$ term results in a state after time $t$ of $(0, 0)$, rather than $(w-t, 1)$. By assumption, $\lim_{w \to 0^+} g(w, 1) = g(0, 0)$, so the proof proceeds unchanged.

Returning to \eqref{eq:uniform-poisson}, note that the $A(t,w,1)$ term contains the terms corresponding to the scenario where there is only one arrival during the period of length $t$ and the scenario where there is no arrival during the same period. $B(t,w,1)$ corresponds to the remaining negligible terms:
\begin{align*}
    B(t,w,1)=\frac{o(t)}{t}g(w-t,1)+\frac{o(t)}{t}\E[g(w-t+S,1)]+\frac{o(t)}{t}\E[g(w-t+N_t\cdot S,1)\mid N_t>1].
\end{align*}
By the independence of $S$ and $N_t$, it is easy to check that $\E[g(w-t+N_t\cdot S,1)\,|\, N_t>1]<\infty$ given the assumption that $g(w,i)=w^2+c(w,i)$ where $|c(w,i)|\leq C_1w+C_2$. Therefore, we have, $\lim_{t\to0} \E[B(t,W,1)]=0.$ We now want to show that $A(t,\cdot,1)$ converges uniformly to some limit as $t \to 0$, namely, $G\circ g$. We first consider the term $-g(w,1)/t+g(w-t,1)/t$. Since $g$ is twice-differentiable for all $w\in\mathbb{R}_{+}$, by Taylor expansion,
\begin{align}
\label{eqn:uniform_1}
    g(w,1)-g(w-t,1)=g'(w,1)t-R_1(w,t)=g'(w,1)t-g''(c,1)\frac{t^2}{2},
\end{align}
where $R_1$ is some suitable remainder term. Recall that by our assumption, $c(w,i)$ has a bounded second derivative and $g(w,i)$ also has a bounded second derivative. Therefore, $\dfrac{1}{t}(-g(w,1)+g(w-t,1))=-g'(w,1)+\frac{g''(c,1)}{2}t\to -g'(w,1)$ uniformly for all $w$. Next, we want to show that $\lambda \E[g(S+w-t)]-\lambda g(w-t)\to \lambda\E[g(S+w)]-\lambda g(w)$ uniformly. Let $\hat{g}(w-t,1)$ be defined as this difference, namely $\E[g(S+w-t,1)]-g(w-t,1)$. Then again since $\hat{g}$ is smooth, we have by Taylor's theorem
\begin{align*}
\label{eqn:uniform_2}
    |\hat{g}(w-t,1)-\hat{g}(w,1)|=|-\hat{g}'(w,1)t+R_1(w,t)|\leq|\hat{g}'(w,1)|t+M_2\frac{t^2}{2},
\end{align*}
where $R_1(w, t)$ is the remainder term of the Taylor expansion, for the second-order and higher terms. Therefore the convergence is uniform if $|\hat{g}'(w,1)|$ is bounded. We have
\begin{align}
    \E[g(S+w-t,1)-g(w-t,1)]&=\E[S^2]-2\E[S]t+2\E[S]w+\hat{c}(w,t),
\end{align}
where $\hat{c}(w,t)$ is defined analogously to $\hat{g}$.
By our assumption that $|c'(w, i)| \le M_1$, the RHS has a bounded derivative, thus establishing the uniform convergence.

When the speed of the system is less than 1, the only necessary changes lie in \Cref{eqn:uniform_1,eqn:uniform_2}. \Cref{eqn:uniform_1} becomes $g(w,i/k)-g(w-it/k,i/k)=g'(w,i/k)it/k-\bar{R}_1(w,t),$ for some remainder term $\bar{R}_1$. We now invoke our bounded second derivative assumption to bound $\bar{R}_1(w,t)$ in the same manner as $R_1(w, t)$ above.
Dividing by $t$ and taking the limit as $t \to 0$ yields the desired limit $g'(w,i/k)\frac{i}{k}$. Similarly, \Cref{eqn:uniform_2} can now be written as, 
\begin{align*}
    \E[g(S+w-t,(i+1/k))-g(w-t,i/k)]&=\E[S^2]-2\E[S]\bar{i}t/k+2\E[S]w+\bar{c}(w,t),
\end{align*}
using Taylor's theorem and the intermediate value theorem,
for some value $\bar{i}\in[i,i+1]$ and some function $\bar{c}(w,t)$ defined analogously to $\hat{c}$ with a bounded first derivative by our assumption. Therefore as $t\to 0$, $\E[g(S+w-t,(i+1/k))-g(w-t,i/k)]$ converges uniformly to some limit, establishing uniform convergence. 

\end{proof}

\section{Proofs for Section 9}
\label{apdx:sec9}
In this appendix, we provide proofs of lemmas needed in \Cref{sec:mgk} and \Cref{thm:ar_k}.
\begin{lemma}
\label{lem:isqk_constant_assumptions}
    $g_k(\cdot, \cdot)$ defined by \Cref{eqn:isqk_constant} satisfies the assumptions of \Cref{prop:drift} and \Cref{lem:uniform}. In particular, $\lim_{w\to0^+}g_k(w,i)=0$. Note also that $g_k(\cdot,\cdot) \geq 0$.
\end{lemma}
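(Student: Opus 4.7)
The plan is to argue inductively on the index $q$, decreasing from $q = k$ down to $q = 1$, establishing that each auxiliary function $u_q$ is nonnegative, uniformly bounded, and twice continuously differentiable with bounded first and second derivatives. The base case $u_k \equiv 0$ is trivial. For the inductive step, suppose $u_{q+1}$ is nonnegative, bounded by some $M_{q+1}$, and smooth with bounded derivatives. The integrand of \Cref{eqn:isqk_constant} is then nonnegative (since $k - q > 0$ and $u_{q+1} \ge 0$) and smooth in $y$, so $u_q$ is well-defined, nonnegative, and twice continuously differentiable. Bounding the integrand directly and computing $\int_0^w e^{k\lambda y/q}\,dy$ explicitly yields the uniform bound
\begin{equation*}
u_q(w) \le M_q := (k - q + k\lambda M_{q+1})/(k\lambda).
\end{equation*}
Differentiating the integral representation produces the first-order linear ODE
\begin{equation*}
(q/k)\, u_q'(w) + \lambda\, u_q(w) = (k-q)/k + \lambda\, \E[u_{q+1}(S+w)],
\end{equation*}
from which bounds on $u_q'$ and $u_q''$ follow by solving for $u_q'$ and differentiating once more, using the bounds on $u_q$ and on the derivatives of $u_{q+1}$ from the inductive hypothesis.

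With these properties of $u_q$ in hand, the explicit claims of the lemma are immediate. Since $u_q(0)$ is an integral over $[0,0]$, we have $u_q(0) = 0$ for every $q$, so $\lim_{w \to 0^+} g_k(w, q/k) = 0 + u_q(0) = 0 = g_k(0,0)$. Nonnegativity is direct: $g_k(w, q/k) = w + u_q(w) \ge 0$ for $q \ge 1$ since both summands are nonnegative, and $g_k(0,0) = 0$.

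To verify the hypotheses of \Cref{prop:drift} and \Cref{lem:uniform}, first note $g_k(w, i) \le w + \max_q M_q$, so $|g_k|$ is $\pi$-integrable as soon as $\E[W] < \infty$, which holds by \Cref{lem:w2}. Next, \Cref{lem:isq_drift} gives $G \circ g_k(w,i)$ as the sum of a jump term (bounded because $u_q$ is bounded) and a derivative term $-(q/k)(1 + u_q'(w))$ (bounded because $u_q'$ is), so $G \circ g_k$ is uniformly bounded in the state and hence trivially $\pi$-integrable. The uniform-convergence decomposition \eqref{eqn:uniform} is established by the same Taylor-expansion/Poisson-thinning argument used in the proof of \Cref{lem:uniform}, applied with $g_k$ in place of a quadratic test function.

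The only subtle point is that \Cref{lem:uniform} is literally stated for test functions of the form $w^2 + c(w,i)$ with $|c| \le C_1 w + C_2$, whereas $g_k$ has leading term $w$, not $w^2$. I would address this by observing that the proof of \Cref{lem:uniform} in \Cref{apdx:sec5} uses the specific form $w^2 + c$ only to control the growth rate of $g$ and to bound its first two $w$-derivatives, both of which are trivially satisfied by $g_k$; alternatively, one can state the analog of \Cref{lem:uniform} for linear test functions and repeat the proof, which is strictly easier. I expect this bookkeeping -- ensuring each boundedness property of $u_q$ propagates correctly through the induction and through the verification of the drift-method hypotheses -- to be the principal obstacle, rather than any single deep computation.
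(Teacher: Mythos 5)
Your proof is correct and rests on the same underlying idea as the paper's: use the recursive definition \Cref{eqn:isqk_constant} to propagate nonnegativity and regularity down from $u_k \equiv 0$, note that $u_q(0)=0$ gives the continuity at the origin, and then check the hypotheses of the BAR results. The difference is one of execution. The paper's proof is a two-line structural argument: each $u_q$ is a sum of constant/linear and negative-exponential terms in $w$, a form preserved at every recursive step (as illustrated by the ISQ-3 formulas in \Cref{apdx:isq3}), from which boundedness of $u_q$, $u_q'$, and $u_q''$ is read off. You instead run an explicit quantitative induction, with the uniform bound $M_q = (k-q+k\lambda M_{q+1})/(k\lambda)$ and the first-order ODE $\tfrac{q}{k}u_q'(w)+\lambda u_q(w)=\tfrac{k-q}{k}+\lambda\E[u_{q+1}(S+w)]$ used to control the derivatives. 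Both routes are valid; yours avoids having to verify that the ``linear plus negative exponential'' form truly survives the expectation $\E[u_{q+1}(S+y)]$ in the integrand, at the cost of slightly more bookkeeping (e.g., justifying differentiation under the expectation, which follows from the bound on $u_{q+1}'$). You are also more careful than the paper on one genuine wrinkle: \Cref{lem:uniform} is stated for test functions of the form $w^2+c(w,i)$, whereas $g_k$ has leading term $w$, so it does not apply verbatim; your resolution (the linear case is strictly easier and follows from the same argument) is the right one, and the paper's proof leaves this implicit.
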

\begin{proof}
By \Cref{eqn:isqk_constant} we see that $u_q(0)=0$. Moreover, as we are integrating a non-negative function recursively starting from $u_k(\cdot) = 0$, each of $u_q(\cdot)$ must be non-negative, i.e., $u_q(\cdot)\geq 0$ for all $q\in\{1,...,k\}$.

Moreover, one can see that from the form of \Cref{eqn:isqk_constant} that each $u_q$ is a sum of linear and negative exponential terms in $w$. In particular, in the integrand, $\E[u_{q+1}(S+y)]$ preserves this structure during each recursive step, see, e.g., the ISQ-3 test functions given in \Cref{apdx:isq3}. Therefore, $u_q$ must satisfy the assumptions of \Cref{prop:drift} and \Cref{lem:uniform}.

\end{proof}

Applying \Cref{lem:drift} to \Cref{eqn:isqk_constant} we have,

\begin{lemma}
\label{lem:isqk_constant}
     For any arbitrary job size $S$ such that $E[S^3]$ is finite, the drift at speed 0 is $G\circ g_k(0,0)=\lambda \E[S+u_1(S)]$ and the drift at all other speeds $i \geq 1/k$ is given by $G\circ g_k(w,i)=\lambda\E[S]-1$ for all $i\geq 1/k$.
\end{lemma}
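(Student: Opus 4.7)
The plan is to apply the piecewise-deterministic drift formula of \Cref{lem:isq_drift} separately for each of the three cases $i=0$, $i=q/k$ with $1 \le q \le k-1$, and $i=1$, and verify in each case that the drift takes the claimed value. The structure of the definition of $g_k$ as $w + u_q(w)$ at speed $q/k$ (with $u_k \equiv 0$, so the $i=1$ case folds into the general one) suggests that the $u_q$ were designed precisely so that the drift at any speed $\ge 1/k$ collapses to the constant $\lambda \E[S] - 1$. I would verify this by translating the target ``drift equals $\lambda\E[S]-1$'' into a first-order linear ODE for $u_q$ and checking that the integral formula \eqref{eqn:isqk_constant} is its unique solution with $u_q(0)=0$.

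For the case $i=0$, note that $g_k(0,0)=0$ and an arrival transitions the state to $(S, 1/k)$, so \Cref{lem:isq_drift} immediately gives $G\circ g_k(0,0) = \lambda\,\E[g_k(S,1/k) - 0] = \lambda\,\E[S + u_1(S)]$, with the work-completion term vanishing since the speed is zero. For the case $i=1$, an arrival keeps the speed at $1$, so $g_k(w+S,1)-g_k(w,1) = S$ and $\frac{d}{dw}g_k(w,1) = 1$; the drift is $\lambda\,\E[S] - 1$ as claimed.

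The main step is the general case $i = q/k$ with $1 \le q \le k-1$. Writing out \Cref{lem:isq_drift}, using $g_k(w+S,(q+1)/k) = (w+S) + u_{q+1}(w+S)$ (valid even when $q+1=k$ since $u_k\equiv 0$) and $\frac{d}{dw} g_k(w,q/k) = 1 + u_q'(w)$, the drift equals
\begin{align*}
G\circ g_k(w,q/k) = \lambda\,\E[S] + \lambda\,\E[u_{q+1}(w+S)] - \lambda\, u_q(w) - \tfrac{q}{k}\bigl(1 + u_q'(w)\bigr).
\end{align*}
Setting this equal to $\lambda\,\E[S] - 1$ and rearranging yields the linear ODE
\begin{align*}
u_q'(w) + \tfrac{k\lambda}{q}\, u_q(w) = \tfrac{1}{q}\bigl(k - q + k\lambda\,\E[u_{q+1}(w+S)]\bigr),
\end{align*}
with initial condition $u_q(0) = 0$. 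Multiplying by the integrating factor $e^{k\lambda w/q}$ and integrating from $0$ to $w$ recovers exactly the expression for $u_q$ in \Cref{def:isqk_contant}, so the identity holds by construction.

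The only mild obstacle I anticipate is bookkeeping: making sure the boundary case $q+1 = k$ is handled uniformly by the convention $u_k \equiv 0$, and confirming that differentiating the integral definition (which has $w$ appearing in both the prefactor $e^{-k\lambda w/q}$ and the upper limit) indeed reproduces the ODE. Both are routine applications of the product rule and fundamental theorem of calculus, and the required integrability of $\E[u_{q+1}(S+y)]$ follows from \Cref{lem:isqk_constant_assumptions}, which already established that $u_{q+1}$ is a sum of linear and negative-exponential terms in its argument.
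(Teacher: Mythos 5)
Your proposal is correct and follows essentially the same route as the paper: case-split on the speed via \Cref{lem:isq_drift}, reduce the intermediate-speed case to the first-order linear ODE $u_q'(w) + \tfrac{k\lambda}{q}u_q(w) = \tfrac{1}{q}\bigl(k-q+k\lambda\E[u_{q+1}(w+S)]\bigr)$ with $u_q(0)=0$, and observe that the integral formula in \Cref{def:isqk_contant} is exactly its solution. The only difference is cosmetic — you spell out the integrating-factor computation and the $q+1=k$ boundary convention, which the paper leaves implicit.
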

\begin{proof}
    Clearly, the drift at speed $1$ is given by $G\circ g_k(w,1)=\lambda\E[S]-1$ and the drift at speed $0$ is given by $G\circ g_k(0,0)=\lambda\E[S+u_1(S)]$. For all other speed $q/k$, the drift is given by,
    \begin{align*}
        G\circ g_k(w,q/k)&=\lambda\left((w+\E[S]+\E[u_{q+1}(w+S)])-(w+u_q(w)) \right)-(w+u_q(w))'\cdot \dfrac{q}{k}\\
        &=\lambda\E[S]-1+1-\frac{q}{k}-\lambda u_q(w)+\lambda\E[u_{q+1}(w+S)]-\dfrac{qu_q'(w)}{k}.
    \end{align*}
    We want to show that the above is equal to $\lambda\E[S]-1$. Therefore, we want prove that the set of $u_q(w)$ functions collectively solve the following first order linear ordinary differential equations:
    \begin{align*}
        1-\frac{q}{k}-\lambda u_q(w)+\lambda\E[u_{q+1}(w+S)]-\dfrac{q\cdot u_q'(w)}{k}=0,
    \end{align*}
    with initial condition $u_q(0)=0$.
    Solving this differential equation with $u_k(w)=0$ yields the formula for $u_{k-1}(w)$ in \Cref{def:isqk_contant}, and solving for decreasing $q$ yields all of the functions $u_q(w)$.

\end{proof}

Note that the differential equations used to define the constant-drift and affine-drift test functions $g_k$ and $h_k$ are generalizations of the differential equations used to define the constant and affine-drift test functions $g_2$ and $h_2$ for the 2-server case in \Cref{sec:derive_affine}.

Via the same argument \Cref{lem:isqk_constant_assumptions}, we prove the following lemma, which allows us to apply \Cref{lem:drift}.

\begin{lemma} 
\label{lem:isqk_affined_bound}
    $h_k(\cdot,\cdot)$ defined by \Cref{eqn:isqk_affine} satisfies the assumptions of \Cref{prop:drift} and \Cref{lem:uniform}. In particular, $\lim_{w \to 0^+} h_k(w,i)=0$. Note also that $h_k(\cdot,\cdot)\geq0$.
\end{lemma}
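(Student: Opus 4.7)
The plan is to proceed by backward induction on $q$ from $k$ down to $1$, establishing that each $v_q$ satisfies the following regularity properties: $v_q(0)=0$, $v_q\ge 0$ on $[0,\infty)$, $|v_q(w)|\le A_q w+B_q$ for finite constants $A_q, B_q$, and $|v_q'|$, $|v_q''|$ are uniformly bounded on $[0,\infty)$. The base case $v_k\equiv 0$ is immediate, and these properties are exactly what is needed to invoke \Cref{prop:drift} and \Cref{lem:uniform} for the test function $h_k(w,i)=w^2+v_{ki}(w)$.

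For the inductive step, I would observe that $v_q$ is the unique solution of the first-order linear ODE
\begin{align*}
v_q'(w)+\tfrac{k\lambda}{q}\,v_q(w)=\tfrac{k\lambda}{q}\,\E[v_{q+1}(S+w)]+\tfrac{2(k-q)}{q}\,w,\qquad v_q(0)=0,
\end{align*}
which is exactly the ODE whose integrating-factor solution yields \Cref{eqn:isqk_affine}. By the inductive hypothesis, $\E[v_{q+1}(S+w)]$ is a linear function of $w$ plus a uniformly bounded exponentially decaying remainder (invoking $\E[S]<\infty$, which follows from $\E[S^3]<\infty$ by \Cref{lem:w2}); the derivatives pass under the expectation by dominated convergence using the uniform derivative bounds from the hypothesis.

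The key computation will be the elementary identity $e^{-\alpha w}\int_0^w e^{\alpha y}(a+by)\,dy=\tfrac{b}{\alpha}w+(\tfrac{a}{\alpha}-\tfrac{b}{\alpha^2})(1-e^{-\alpha w})$, valid for $\alpha>0$ and constants $a,b$, which shows that the integral operator $R\mapsto e^{-\alpha w}\int_0^w e^{\alpha y}R(y)\,dy$ preserves the class ``linear polynomial plus bounded exponentially decaying smooth term.'' Applying this to the right-hand side of the ODE yields the structural form $v_q(w)=A_q w+B_q+P_q(w)$, where $P_q$ is smooth, exponentially decaying, and has all derivatives bounded. Linear growth of $v_q$ and boundedness of $v_q', v_q''$ then follow directly. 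Non-negativity is inherited along the same induction because the integrand in \Cref{eqn:isqk_affine} is manifestly non-negative whenever $v_{q+1}\ge 0$ and $q\le k$, and $v_q(0)=0$ is automatic from the zero-length integral; this also gives $\lim_{w\to 0^+}h_k(w,i/k)=0=h_k(0,0/k)$.

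The main obstacle is showing that linear growth in $v_q$ does not force $v_q'$ to grow linearly as well. A priori, rearranging the ODE gives $v_q'(w)=R_q(w)-\tfrac{k\lambda}{q}v_q(w)$, a difference of two linearly-growing quantities, which could itself be linear in $w$. However, a crucial cancellation occurs: the linear coefficient of the forcing term $R_q$ equals $\tfrac{k\lambda}{q}$ times the linear coefficient $A_q$ of $v_q$, so the leading linear terms cancel exactly. This is transparent from the explicit integral identity above, where the linear coefficient $b/\alpha$ of the output corresponds precisely to the stable fixed point of the drift equation. This cancellation propagates through the finitely many induction steps, preserving all required uniform regularity bounds and completing the verification.
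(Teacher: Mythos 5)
Your proposal is correct and takes essentially the same route as the paper: the paper proves this lemma "via the same argument" as its \Cref{lem:isqk_constant_assumptions}, namely that the recursion \eqref{eqn:isqk_affine} preserves the class of functions that are a linear term plus decaying exponentials (giving the linear growth and bounded-derivative hypotheses of \Cref{lem:uniform}), that $v_q(0)=0$ follows from the zero-length integral, and that non-negativity propagates because the integrand is non-negative whenever $v_{q+1}\ge 0$. Your version merely formalizes this as a backward induction with the explicit integrating-factor identity, which makes the cancellation ensuring bounded $v_q'$ transparent; the only nitpick is that $\E[S]<\infty$ follows from $\E[S^3]<\infty$ by H\"older's inequality, not from \Cref{lem:w2}.
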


Now, we examine the drift of the test function $h_k$.

\begin{lemma}
\label{lem:isqk_affine}
    For any arbitrary job size $S$ such that $E[S^3]$ is finite, the drift at speed $0$ is given by $G \circ h_k(0, 0) = \lambda\E[S^2+v_1(S)]$ and the drift at all other speed $\geq 1/k$ is given by $G\circ h_k(w,i)=\lambda\E[S^2]+2w(\lambda\E[S]-1)$.
\end{lemma}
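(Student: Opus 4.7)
The plan is to compute $G\circ h_k(w,i)$ in each of the three state regimes -- the top speed $i=1$, the idle state $(0,0)$, and the intermediate speeds $i=q/k$ for $q\in\{1,\dots,k-1\}$ -- using the piecewise-deterministic generator formula from \Cref{lem:isq_drift}. The strategy exactly parallels the proof of \Cref{lem:isqk_constant}: the ODE defining $v_q$ will be engineered so that all the drifts at intermediate speeds collapse to the same expression as the drift at speed $1$, leaving $(0,0)$ as the only state whose drift deviates.

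For $(w,1)$, since $h_k(w,1)=w^2$ and arrivals keep the speed at $1$, I would plug into \Cref{lem:isq_drift} directly and obtain $G\circ h_k(w,1)=\lambda\E[(w+S)^2-w^2]-2w=\lambda\E[S^2]+2w(\lambda\E[S]-1)$. For $(0,0)$, only an arrival is possible: it moves the state to $(S,1/k)$ and raises the test function by $S^2+v_1(S)$, while the deterministic work-completion term vanishes because $w=0$. Hence $G\circ h_k(0,0)=\lambda\E[S^2+v_1(S)]$.

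The substantive step is the intermediate case $i=q/k$ with $q\in\{1,\dots,k-1\}$. Using $h_k(w,q/k)=w^2+v_q(w)$ and $h_k(w+S,(q+1)/k)=(w+S)^2+v_{q+1}(w+S)$ (valid even at $q+1=k$ since $v_k\equiv 0$), \Cref{lem:isq_drift} gives
\begin{align*}
G\circ h_k(w,q/k) &= \lambda\E[S^2]+2w(\lambda\E[S]-1) \\
&\quad +\frac{2w(k-q)}{k}+\lambda\E[v_{q+1}(w+S)]-\lambda v_q(w)-\frac{q}{k}v_q'(w).
\end{align*}
Matching this to $\lambda\E[S^2]+2w(\lambda\E[S]-1)$ is equivalent to demanding that $v_q$ solve the first-order linear ODE
\begin{align*}
qv_q'(w)+k\lambda v_q(w)=2w(k-q)+k\lambda\E[v_{q+1}(w+S)],\qquad v_q(0)=0,
\end{align*}
where the initial condition follows from the continuity requirement $\lim_{w\to 0^+}h_k(w,q/k)=h_k(0,0)=0$. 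Solving by the integrating factor $e^{kw\lambda/q}$ recovers exactly the recursive formula \Cref{eqn:isqk_affine} defining $v_q$, so the $v_q$'s of \Cref{def:isqk_affine} are precisely those that make the intermediate-speed drifts match the speed-$1$ drift.

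There is no real obstacle beyond bookkeeping: once $v_q$ is substituted into the drift computation, the terms cancel by construction of the ODE. The only thing that needs a moment of care is verifying that the $q+1=k$ boundary case is handled uniformly, and this is immediate from the convention $v_k\equiv 0$, which ensures $h_k(\cdot,k/k)=w^2$ is consistent with the formula $h_k(\cdot,(q+1)/k)=w^2+v_{q+1}(\cdot)$.
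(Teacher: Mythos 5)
Your proposal is correct and follows essentially the same route as the paper's proof: compute the drift via \Cref{lem:isq_drift} in each of the three regimes, and observe that the intermediate-speed drifts collapse to the speed-$1$ drift precisely because the $v_q$ solve the first-order linear ODE whose integrating-factor solution is the recursion \eqref{eqn:isqk_affine}. Your treatment is if anything slightly more explicit than the paper's about the integrating factor and the $q+1=k$ boundary case.
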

\begin{proof}
     Clearly, the drift at speed $1$ is given by $G\circ h_k(w,1)=\lambda\E[S^2]+2w(\lambda\E[S]-1)$ and the drift at speed $0$ is given by $G\circ h_k(0,0)=\lambda\E[S^2+v_1(S)]$. For all other speed $q/k$, the drift is given by,
     \begin{align*}
         G\circ h_k(w,q/k)&=\lambda\left(((w+\E[S])^2+\E[v_{q+1}(w+S)] )-(w^2+v_{q}(w))\right)-\left(w^2+v_q(w) \right)'\cdot\frac{q}{k}\\
         &=\lambda\E[S^2]+2w(\lambda\E[S]-1)+2w-\frac{2qw}{k}-\lambda v_q(w)+\lambda \E[v_{q+1}(S+w)]-\frac{qv'_q(w)}{k}.
     \end{align*}
     We want to show that the above is equal to $\lambda\E[S^2]+2w(\lambda\E[S]-1)$. Therefore, we want prove that the set of $v_q(w)$ functions collectively solve the first order linear ordinary differential equations $2w-\frac{2qw}{k}-\lambda v_q(w)+\lambda \E[v_{q+1}(S+w)]-\frac{qv'_q(w)}{k}=0,$ with initial condition $v_q(0)=0$.
    Solving this differential equation with $v_k(w)=0$ yields the formula for $v_{k-1}(w)$ in \Cref{def:isqk_affine}, and solving for decreasing $q$ yields all of the functions $v_q(w)$. 

\end{proof}

The next set of lemmas concerns the modified affine-drift test function defined in \Cref{def:isqk_modified}. By the same reasoning as \Cref{lem:isqk_constant_assumptions} we have the following lemma. 

\begin{lemma}
\label{lem:isqk_modified_bound}
    $h_{k,x}(\cdot, \cdot)$ defined by \Cref{eqn:isqk_modified} satisfies the assumptions of \Cref{prop:drift} and \Cref{lem:uniform}.
\end{lemma}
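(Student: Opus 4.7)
The plan is to proceed by downward induction on $q \in \{k, k-1, \ldots, 1\}$, maintaining as the inductive hypothesis that each $\ell_q(w)$ has the form
\[
\ell_q(w) = A_q w + B_q + \sum_{j} D_{q,j}\, e^{-\alpha_{q,j} w},
\]
a finite sum of a linear polynomial in $w$ and negative-exponential terms with exponents $\alpha_{q,j}>0$, and with $\ell_q(0)=0$. Such a structural form immediately yields that $\ell_q$ is $C^{\infty}$ in $w$ with $|\ell_q(w)| \le C_1 w + C_2$, $|\ell_q'(w)|\le M_1$, and $|\ell_q''(w)|\le M_2$ for suitable constants, and is continuous as $w\to 0^{+}$.

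The base case $q=k$ is trivial since $\ell_k\equiv 0$. For the inductive step, assume the claim holds for $\ell_{q+1}$. Because $S_x$ is bounded on $[0,x]$, its Laplace transform $\E[e^{-\alpha S_x}]$ is finite for every $\alpha>0$, so
\[
\E[\ell_{q+1}(S_x+y)] = A_{q+1}\,y + (A_{q+1}\E[S_x]+B_{q+1}) + \sum_j D_{q+1,j}\,\E[e^{-\alpha_{q+1,j}S_x}]\, e^{-\alpha_{q+1,j} y}
\]
is again of linear-plus-exponentials form in $y$. Substituted into the defining integral of \Cref{def:isqk_modified}, the integrand is a finite sum of terms of the form $(\text{linear in } y)\cdot e^{k\lambda_x y/q}$ and $e^{(k\lambda_x/q - \alpha_{q+1,j})y}$. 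Each such term has an antiderivative of the same type, so after multiplying by the outer $e^{-kw\lambda_x/q}$, the resulting $\ell_q(w)$ is again a finite sum of linear-in-$w$ terms and negative exponentials (with exponents $k\lambda_x/q$ and $\alpha_{q+1,j}$). Moreover $\ell_q(0)=0$ because the definite integral vanishes at $w=0$. This closes the induction.

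Setting $c(w,i):=h_{k,x}(w,i)-w^2$, so that $c(w,q/k)=\ell_q(w)$ for $q\in\{1,\ldots,k-1\}$ and $c\equiv 0$ for $i\in\{0,1\}$, the uniform bounds above (taken over the finitely many speed values) provide the constants $C_1,C_2,M_1,M_2$ required by \Cref{lem:uniform}, and $h_{k,x}$ is twice differentiable in $w$ on each speed level. The continuity condition $\lim_{w\to 0^{+}}h_{k,x}(w,i)=h_{k,x}(0,0)=0$ follows from $\ell_q(0)=0$. This verifies the hypotheses of \Cref{lem:uniform}. The remaining $\pi$-integrability assumption of \Cref{prop:drift} follows from \Cref{lem:w2}: since $S_x$ has bounded support, $\E[S_x^3]\le x^3<\infty$, so $\E[W^2]<\infty$; combined with the linear-growth bound on $c$ this gives $\pi$-integrability of $|h_{k,x}|$, and analogously of $|G\circ h_{k,x}|$ via \Cref{lem:ar_drift} together with the bounded-derivative bound on $c$.

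The main obstacle is purely bookkeeping: tracking the linear-plus-exponentials structure across the $k-1$ recursive levels and confirming that constants remain finite at every step. Each level preserves the structure through elementary antiderivatives of the form $\int (\text{linear})\,e^{\beta y}\,dy$, and only finitely many such steps occur, so all constants stay bounded. The explicit ISQ-$3$ computations in \Cref{apdx:isq3} confirm the pattern concretely and could be used to anchor the inductive argument.
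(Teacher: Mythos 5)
Your proof is correct and takes essentially the same approach as the paper, which disposes of this lemma by invoking ``the same reasoning as'' its Lemma~\ref{lem:isqk_constant_assumptions}: namely, that the recursion preserves a linear-plus-negative-exponentials structure in $w$, which immediately yields the growth, derivative, and continuity conditions of \Cref{lem:uniform} and the integrability conditions of \Cref{prop:drift}. Your version simply makes the induction explicit where the paper leaves it implicit.
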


The above lemma allows us to apply \Cref{lem:drift} to \Cref{eqn:isqk_modified}.
\begin{lemma}
\label{lem:isqk_modified}
    For any arbitrary truncated job size $S_x$ and arrival rate $\lambda_x$, the drift at $h_{k,x}$ is given by
    \begin{align*}
        G \circ h_{k,x}(w,1) &=\lambda_x\E[S_x^2]+2w(\lambda_x\E[S_x]-1) \\    
        \forall 0 < i < 1,
        G\circ h_{k,x}(w,i) &=\lambda_x\E[S_x^2]+2w(\lambda_x\E[S_x]-1)+\frac{k-i}{k}C_k(x,\lambda_x) \\
        G\circ h_{k,x}(0,0)&=\lambda_x(\E[S_x^2]+\E[\ell_1(S_x)]).
    \end{align*}
\end{lemma}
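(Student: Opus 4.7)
The plan mirrors the proof of \Cref{lem:isqk_affine}, with the key difference that at intermediate speeds the drift should pick up a nonzero constant $\frac{k-q}{k}C_k(x,\lambda_x)$ rather than matching the speed-$1$ drift exactly. I would apply the generator formula in \Cref{lem:isq_drift}, adapted to arrival rate $\lambda_x$ and jump distribution $S_x$, at each of the three types of states: $(w,1)$, $(0,0)$, and $(w,q/k)$ for $1\le q\le k-1$.

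The boundary cases are immediate. At speed $1$ there is no speed increase on arrival, so $h_{k,x}(w+S_x,1)-h_{k,x}(w,1) = (w+S_x)^2 - w^2$ and the deterministic drain gives $-2w$, producing the claimed $\lambda_x\E[S_x^2]+2w(\lambda_x\E[S_x]-1)$. At $(0,0)$ the only event is a Poisson arrival jumping to $(S_x,1/k)$, which gives $\lambda_x(\E[S_x^2]+\E[\ell_1(S_x)])$ directly.

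The substance of the lemma lies in the intermediate-speed case. Applying \Cref{lem:isq_drift} at $(w,q/k)$ with $1\le q\le k-1$ and writing $h_{k,x}(w,q/k)=w^2+\ell_q(w)$, $h_{k,x}(w+S_x,(q+1)/k)=(w+S_x)^2+\ell_{q+1}(w+S_x)$ (using the convention $\ell_k\equiv 0$ to absorb the case $q=k-1$), I would expand and cancel the purely quadratic contributions. The claim then reduces to showing that $\ell_q$ satisfies the first-order linear ODE
\begin{align*}
\frac{2(k-q)w}{k} + \lambda_x \E[\ell_{q+1}(S_x+w)] - \lambda_x \ell_q(w) - \frac{q}{k}\ell_q'(w) \;=\; \frac{k-q}{k}\,C_k(x,\lambda_x),
\end{align*}
with $\ell_q(0)=0$. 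This is verified by differentiating the definition \eqref{eqn:isqk_modified}: writing $\ell_q(w) = e^{-kw\lambda_x/q}\int_0^w e^{k\lambda_x y/q} F_q(y)\,dy$ with $F_q(y) = \frac{1}{q}[k(q-k)C_k(x,\lambda_x)+2(k-q)y+k\lambda_x\E[\ell_{q+1}(S_x+y)]]$, one obtains $\ell_q'(w) = -\frac{k\lambda_x}{q}\ell_q(w) + F_q(w)$, and substituting this back into the ODE yields an identity.

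The main obstacle is purely bookkeeping: keeping track of the shift $q\mapsto q+1$ in the jump while the integrating factor and drain rate are indexed by $q$, and verifying that the $q=k-1$ case (where the post-jump speed is exactly $1$, so the $\ell_{q+1}$ term vanishes) falls out of the uniform formula thanks to the convention $\ell_k\equiv 0$. Differentiability of $\ell_q$ and the existence of the limits used in \Cref{lem:isq_drift} are already guaranteed by \Cref{lem:isqk_modified_bound}, so no additional regularity work is needed.
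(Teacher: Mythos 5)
Your approach is the same as the paper's: handle $(w,1)$ and $(0,0)$ directly from the generator, and for intermediate speeds reduce the claim to a first-order linear ODE for $\ell_q$, verified by differentiating the integral representation \eqref{eqn:isqk_modified}. The boundary cases, the use of the convention $\ell_k\equiv 0$ to absorb $q=k-1$, and the appeal to \Cref{lem:isqk_modified_bound} for regularity are all fine.

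The one step that does not close is the constant. Writing $\ell_q(w)=e^{-kw\lambda_x/q}\int_0^w e^{k\lambda_x y/q}F_q(y)\,dy$ with $F_q(y)=\frac{1}{q}\bigl[k(q-k)C_k(x,\lambda_x)+2(k-q)y+k\lambda_x\E[\ell_{q+1}(S_x+y)]\bigr]$, you correctly get $\ell_q'(w)=-\frac{k\lambda_x}{q}\ell_q(w)+F_q(w)$, hence
\begin{align*}
\frac{q}{k}\ell_q'(w)=-\lambda_x\ell_q(w)+(q-k)C_k(x,\lambda_x)+\frac{2(k-q)w}{k}+\lambda_x\E[\ell_{q+1}(S_x+w)],
\end{align*}
because the $\frac{q}{k}$ multiplies the $\frac{1}{q}$ in $F_q$ to leave $\frac{1}{k}\cdot k(q-k)C_k=(q-k)C_k$. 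Rearranging, the left-hand side of your ODE equals $(k-q)C_k(x,\lambda_x)$, not $\frac{k-q}{k}C_k(x,\lambda_x)$; your claimed ``identity'' is off by a factor of $k$. The value $(k-q)C_k$ is the one the paper's proof derives, and it is the one consistent with the rest of the paper: in \Cref{lem:isqk_C} the per-state contribution at speed $q/k$ is $kC_k\cdot\frac{k-q}{k}=(k-q)C_k$, and in the $k=2$ case (\Cref{thm:ar_2}, \Cref{sec:derive_modified}) the excess drift at speed $1/2$ is exactly $C_2=(2-1)C_2$. You have faithfully reproduced the $\frac{k-i}{k}C_k$ appearing in the lemma statement, which is itself inconsistent with \Cref{def:isqk_modified} (with $i=q/k$ the correct coefficient is $k(1-i)=k-q$); but as written your final substitution step fails, so you should either correct the ODE's right-hand side to $(k-q)C_k(x,\lambda_x)$ or flag the discrepancy in the statement explicitly.
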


\begin{proof}
    Clearly, the drift at speed 1 is given by $G\circ h_{k,x}(w,1)=\lambda_x\E[S_x^2]+2w(\lambda_x\E[S_x]-1)$ and the drift at speed 0 is given by $G\circ h_{k,x}(0,0)=\lambda_x\E[S_x^2+\ell_1(S_x)]$. For all other speeds $i=q/k$, the drift is,
    \begin{align*}
        G\circ h_{k,x}(w,q/k)&=\lambda_x\left(((w+\E[S_x])^2+\E[\ell_{q+1}(w+S_x)] )-(w^2+\ell_{q}(w))\right)-\left(w^2+\ell_q(w) \right)'\cdot\frac{q}{k}\\
         &=\lambda_x\E[S_x^2]+2w(\lambda_x\E[S_x]-1)+2w-\frac{2qw}{k}-\lambda_x \ell_q(w)+\lambda_x \E[\ell_{q+1}(S_x+w)]-\frac{q\ell'_q(w)}{k}.
    \end{align*}
    We want to show that the above is equal to $\lambda_x\E[S_x^2]+2w(\lambda_x\E[S_x]-1)+(k-q)C_k(x,\lambda_x)$.  Therefore, we want to prove that the set of $\ell_q(w)$ functions collectively solve the following differential equations:
    \begin{align*}
        2w-\frac{2qw}{k}-\lambda_x \ell_q(w)+\lambda_x \E[\ell_{q+1}(S_x+w)]-\frac{q\ell'_q(w)}{k}=(k-q)C_k(x,\lambda_x),
    \end{align*}
    with initial condition $\ell_q(0)=0$. Solving this differential equation with $\ell_k(w)=0$ yields the formula for $\ell_{k-1}(w)$ in \Cref{def:isqk_modified}, and solving for decreasing $q$ yields all of the functions $l_q(w)$.

\end{proof}

In the above lemma, the drift at speed 0 is given by $\lambda_x(\E[S_x^2]+\E[\ell_1(S_x)])$. However, because the recursive nature of $\ell_q$ makes $\ell_1$ very hard to work with, we now prove a lemma which gives us an alternative expression for $\E[\ell_1(S_x)]$.
\begin{lemma}
\label{lem:isqk_C}
   $C_k(x,\lambda_x)$ defined by \Cref{eqn:C_k} solves the following equation $\lambda_x\E[\ell_1(S_x)]=kC_k(x,\lambda_x)$.
\end{lemma}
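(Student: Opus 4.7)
The plan is to establish a pointwise identity between the three families of recursively-defined functions $\ell_q$, $v_q$, and $u_q$, from which the lemma follows by a short algebraic manipulation. Specifically, I will prove by backward induction on $q$, starting from $q = k$, that
\begin{align*}
    \ell_q(w) = v_q(w) - k C_k(x,\lambda_x)\, u_q(w) \quad \text{for all } w \geq 0 \text{ and } q \in \{1, \ldots, k\}.
\end{align*}
The base case $q = k$ is immediate, since by definition $\ell_k \equiv v_k \equiv u_k \equiv 0$.

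For the inductive step, I assume the identity holds at level $q+1$ and substitute $\ell_{q+1} = v_{q+1} - k C_k u_{q+1}$ into the integrand in the definition of $\ell_q$ from \Cref{eqn:isqk_modified}. After distributing the expectation over $S_x$, the integrand $k(q-k)C_k + 2(k-q)y + k\lambda_x \E[\ell_{q+1}(S_x+y)]$ separates into two pieces. The piece $2(k-q)y + k\lambda_x \E[v_{q+1}(S_x+y)]$ is exactly the integrand appearing in the definition of $v_q$ in \Cref{eqn:isqk_affine}. The remaining piece $-k(k-q)C_k - k^2 \lambda_x C_k \E[u_{q+1}(S_x+y)]$ factors as $-kC_k \cdot \big((k-q) + k\lambda_x \E[u_{q+1}(S_x+y)]\big)$, which is $-kC_k$ times the integrand appearing in the definition of $u_q$ in \Cref{eqn:isqk_constant}. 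Since the common exponential kernel $e^{-kw\lambda_x/q} \cdot e^{k\lambda_x y/q}/q$ is shared across all three definitions, linearity of the integral immediately yields $\ell_q(w) = v_q(w) - kC_k(x,\lambda_x) u_q(w)$.

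To finish, I specialize this identity to $q=1$, replace $w$ by the random variable $S_x$, take expectations, and multiply through by $\lambda_x$ to obtain
\begin{align*}
    \lambda_x \E[\ell_1(S_x)] = \lambda_x \E[v_1(S_x)] - kC_k(x,\lambda_x)\, \lambda_x \E[u_1(S_x)].
\end{align*}
The definition of $C_k(x,\lambda_x)$ in \Cref{eqn:C_k} is equivalent to $\lambda_x \E[v_1(S_x)] = kC_k(x,\lambda_x)\bigl(1 + \lambda_x \E[u_1(S_x)]\bigr)$. Substituting this into the right-hand side cancels the $kC_k \lambda_x \E[u_1(S_x)]$ cross term, leaving $\lambda_x \E[\ell_1(S_x)] = kC_k(x,\lambda_x)$.

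I do not anticipate any substantive obstacle; the only thing that requires care is keeping track of the various factors of $k$, $q$, $k-q$, and $\lambda_x$ when recognizing the decomposition of the $\ell_q$ integrand into a $v_q$ piece and a $u_q$ piece, and in particular noticing that the product $k\lambda_x \cdot kC_k$ produces the required $k\lambda_x$ coefficient in front of $\E[u_{q+1}]$. An alternative route would be to apply the BAR relation of \Cref{lem:drift} to $h_{k,x}$ in the Poisson-only ISQ-$k$ system driven by $(\lambda_x, S_x)$, substitute the formulas for $\mathbb{P}(I=0)$ and $\E[W]$ coming from $g_k$ and $h_k$, and invoke the unused-capacity identity $\sum_{q=0}^{k-1}(k-q)\mathbb{P}(I=q/k) = k(1-\rho_x)$; but the induction above is strictly more elementary and bypasses the need to re-derive these BAR consequences.
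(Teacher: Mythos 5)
Your proof is correct, and it takes a genuinely different route from the paper's. The paper proves this identity probabilistically: it applies the BAR (\Cref{lem:drift}) to $h_{k,x}$ in the Poisson-only ISQ-$k$ system driven by $(\lambda_x, S_x)$, uses the unused-capacity identity $\sum_{i}\frac{k-i}{k}\mathbb{P}(I=i/k)=1-\rho_x$ to solve for $\E[W_x]$, compares the result with the independent expression for $\E[W_x]$ from \Cref{prop:isqk_work}, and concludes that $(\lambda_x\E[\ell_1(S_x)]-kC_k(x,\lambda_x))\mathbb{P}(I=0)=0$, finishing by noting $\mathbb{P}(I=0)>0$ via \Cref{eqn:isqk_prob0}. (This is precisely the ``alternative route'' you sketch and set aside at the end.) Your backward induction instead establishes the stronger pointwise identity $\ell_q(w)=v_q(w)-kC_k(x,\lambda_x)\,u_q(w)$ directly from the three recursive definitions, and I have checked the bookkeeping: the integrand of $\ell_q$ splits exactly as you claim into the $v_q$ integrand minus $kC_k$ times the $u_q$ integrand, and the final cancellation using $kC_k(1+\lambda_x\E[u_1(S_x)])=\lambda_x\E[v_1(S_x)]$ is exactly the definition \eqref{eqn:C_k}. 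What your approach buys: it is purely algebraic, needing neither the stationary distribution, nor the validity of the BAR for $h_{k,x}$ (\Cref{lem:isqk_modified_bound}), nor the positivity of $\mathbb{P}(I=0)$; and the structural identity $\ell_q=v_q-kC_k u_q$ is of independent interest (e.g., it makes the growth estimates on $\ell_q$, and hence the discussion of $J_x$, transparent given the corresponding facts for $u_q$ and $v_q$). What the paper's approach buys is that it never has to manipulate the recursions at all. One small point of care, shared with the paper: the $u_1,v_1$ appearing in \eqref{eqn:C_k} must be understood with parameters $(\lambda_x,S_x)$ rather than $(\lambda,S)$, which is the convention your induction implicitly (and correctly) adopts.
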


\begin{proof}
    By \Cref{lem:isqk_modified}, the drift of $h_{k,x}$ at speed $i>0$ is given by $G\circ h_{k,x}(w,i)=\lambda_x\E[S_x^2]+2w\left(\rho_x-1 \right)+C_k(x,\lambda_x)\frac{k-i}{k}$ and the drift at speed 0 is given by $G\circ h_{k,x}(0,0)=\lambda_x\E[S_x^2]+\lambda_x\E[\ell_1(S_x)]$. Therefore, taking expectation over $W_x$ and $I$, we have,
    \begin{align*}
        &\E[G\circ h_{k,x}(W_x,I)] =\lambda\E[S_x^2]+2\E[W_x]\left(\rho_x-1 \right)+k C_k(x)\sum_{i=1}^k\dfrac{k-i}{k}\mathbb{P}(I=i/k)+\lambda\E[\ell_1(S_x)]\mathbb{P}(I=0/k)\\
        &= \lambda\E[S_x^2]+2\E[W_x]\left(\rho_x-1 \right)+k C_k(x)\sum_{i=0}^k\dfrac{k-i}{k}\mathbb{P}(I=i/k)+(\lambda\E[\ell_1(S_x)]-kC_k(x,\lambda))\mathbb{P}(I=0/k).
    \end{align*}
    To simplify the above above equation, we apply \Cref{lem:drift} to the test function $g(w,i)=w$ for all $(w,i)$. We have, $1-\rho_{x}=\sum_{i=0}^k\frac{k-i}{k}\mathbb{P}(I=i/k)$. Therefore, 
    \begin{align*}
        \E[G\circ h_{k,x}(W,I)]=\lambda\E[S_x^2]+2\E[W_x]\left(\rho_x-1 \right)+k C_k(x)(1-\rho_{x})+ (\lambda\E[\ell_1(S_x)]-kC_k(x,\lambda))\mathbb{P}(I=0/k)
    \end{align*}
    
    By \Cref{lem:isqk_modified_bound}, the ISQ-$k$ modified affine-drift test function defined in \Cref{def:isqk_modified} satisfies the assumption of \Cref{lem:drift}. We have $\E[G\circ h_{k}(W_x,I)]=0$. Solving for $\E[W_x]$ we get,
    \begin{align}
    \label{eqn:isqk_work_modified}
        \E[W_x]=\dfrac{\lambda_x\E[S_x^2]}{2(1-\lambda_x\E[S_x])}+\dfrac{k}{2}\cdot C_k(x,\lambda_x)+\dfrac{(\lambda_x\E[\ell_1(S_x)]-kC_k(x,\lambda_x))\mathbb{P}(I=0/k)}{2(1-\rho_x)}.
    \end{align}
    Recall that $C_k(x,\lambda_x)$ is defined to be $C_k(x,\lambda_x)=\frac{2}{k}\cdot\frac{\lambda_x\E[v_1(S_x)]}{2+2\lambda_x\E[u_1(S_x)]}$.
    Recall also that by \Cref{prop:isqk_work}, we have the following alternative expression for mean relevant work for the same arrival rate $\lambda_x$ and truncated job size $S_x$:
    \begin{align}
         \E[W_x]=\dfrac{\lambda_x\E[S_x^2]}{2(1-\lambda_x\E[S_x])}+\dfrac{\lambda_x\E[v_1(S_x)]}{2+2\lambda_x\E[u_1(S_x)]}
         = \dfrac{\lambda_x\E[S_x^2]}{2(1-\lambda_x\E[S_x])}+\dfrac{k}{2}\cdot C_k(x,\lambda_x)
    \end{align}
    Thus, taking the difference of these two expressions for $\E[W_x]$, we find that
    the third term of \eqref{eqn:isqk_work_modified} is zero:
    \begin{align*}
        0 = \dfrac{(\lambda_x\E[\ell_1(S_x)]-kC_k(x,\lambda_x))\mathbb{P}(I=0/k)}{2(1-\rho_x)}
    \end{align*}
    
    Note that $\mathbb{P}(I=0/k)$ is positive, by \Cref{eqn:isqk_prob0}. Thus solving $0 = \lambda\E[\ell_1(S_x)]-kC_k(x,\lambda_x)$ we get
    $\lambda_x\E[\ell_1(S_x)] =kC_k(x,\lambda_x)$
    as desired.

\end{proof}

Therefore, we can write the drift at speed 0 as $G\circ h_{k,x}(0,0)=\lambda_x\E[S_x^2]+kC_k(x,\lambda_x)$. We are now ready to state the general version of the Rec-ISQ-$k$ lower bound. This is the second lower bound of \Cref{thm:main_k}.

\mainarisqk*

\begin{proof}
    The proof of this theorem follows the same argument as the proof of \Cref{thm:ar_2}. Let $\E_\mathfrak{r}[\cdot]$ denote the Palm expectation taken over the moments when the arbitrary arrival of jobs of size $x$ occurs according to the recycling stream.

    Let $J_x(w, i) := h_{k,x}(w+x,\min(i+1/k, 1)) - h_{k,x}(w,i)$
    denote the increase in the test function due to the arrival of a size-$x$ job.
    Then $\E_\mathfrak{r}[J_x(W_x,I)]$ denotes the mean size of the recycling jump in stationarity. Thus, by \Cref{lem:ar_drift}, $\E[G \circ h_{k,x}(W_x, I)] = \E[\text{Poisson drift}] + (\lambda - \lambda_x) \E_\mathfrak{r}[J_x(W_x,I)]$.
    By \Cref{lem:isqk_C} we can write the expected Poisson drift as,
    \begin{align*}
    \E[\text{Poisson drift}]&=\lambda_x\E[S_x^2]+2\E[W_x](-1+\rho_x)+kC_k(x,\lambda)\sum_{i=0}^k\frac{k-i}{k}\mathbb{P}(I^r=i/k)\\
    &=\lambda_x\E[S_x^2]+2\E[W_x](-1+\rho_x)+kC_k(x,\lambda)(1-\rho_{\bar{x}}).
    \end{align*}
    In the second equality, we used the fact that $1-\rho_{\overbar{x}}=\sum_{i=0}^k\frac{k-i}{k}\mathbb{P}(I^r=i/k)$, a result that can be derived similarly to \Cref{eqn:isq2_load} by applying \Cref{lem:ar_drift} to the test function $g(w,i)=w$ for all $(w,i)$.
    
    Therefore, substituting the above into $\E[G\circ h_{k,x}(W_x,I)]=0$ and solving for $\E[W_x]$, we have
    \begin{align*}
        \E[W_{x}^{\text{arb-}ISQ}]&=\frac{\lambda_x E[S_x^2]}{2 (1 - \rho_x)} + \dfrac{kC_k(x,\lambda_x)(1-\rho_{\overbar{x}})}{2(1-\rho_x)}+\frac{(\lambda - \lambda_x)\E_\mathfrak{r}[J_x(W_x,I)]}{2(1-\rho_x)}\\
        &=\dfrac{\lambda_x\E[S_x^2]}{2(1-\lambda_x\E[S_x])}+\dfrac{\lambda_x\E[v_1(S_x)]}{2+2\lambda_x\E[u_1(S_x)]}\cdot\dfrac{1-\rho_{\bar{x}}}{1-\rho_x}+\frac{(\lambda - \lambda_x)\E_\mathfrak{r}[J_x(W_x,I)]}{2(1-\rho_x)},
    \end{align*}
    where we use the definition of $C_k(x,\lambda)$ in the second equality. Recall the definition of an increasing speed queue. The current speed $i$ bounds the number of jobs in the system, which in turn bounds the total work, which consists of at most $ki$ jobs, each with size at most $x$. Therefore, for $0<i<1$, the system can only visit states $(w, i)$ where $w \le kix$.
    \begin{align*}
        J_x(w,i)=h_{k,x}(w+x,i+1/k)-h_{k,x}(w,i/k)\geq \inf_{w\in[0,k i x]}h_{k,x}(w+x,i+1/k)-h_{k,x}(w,i/k),
    \end{align*}
    along with $J_x(0,0)=h_{k,x}(x,1/k)$ and $J_x(w+x,1)\geq x^2$, we have $\E_\mathfrak{r}[J_x(W_x,I)]\geq J_x$.

\end{proof}

\section{List of ISQ-$k$ Test Functions}
\label{apdx:isq3}

In this appendix, we list the constant- and affine-drift test functions for the
ISQ-$k$ system with $k = 3, 4,$ and $5$.
The quantities $u_1$ and $v_1$ in \Cref{eqn:main_k_1,eqn:main_k_2} are obtained
directly from these test functions.
In particular,
\[
    u_1(w) = g_k(w,1/k) - w,
    \qquad
    v_1(w) = h_k(w,1/k) - w^2.
\]

\subsection{ISQ-3 Test Functions}

For the ISQ-3 system, the constant-drift test function $g_3$ defined in \Cref{def:isqk_contant} is
\begin{flalign*}
& g_3(0,0)=0, &&\notag\\
& g_3(w,1)=w, &&\notag\\
& g_3\!\left(w,\tfrac{2}{3}\right)
   = w+\frac{1}{3\lambda}\!\left(1-e^{-\frac{3}{2}w\lambda}\right), &&\notag\\
& g_3\!\left(w,\tfrac{1}{3}\right)
   = w+\frac{1}{\lambda}
     +\frac{\bigl(2\widetilde{S}(\tfrac{3}{2}\lambda)-3\bigr)e^{-3w\lambda}}{3\lambda}
     -\frac{2\widetilde{S}(\tfrac{3}{2}\lambda)e^{-\frac{3}{2}w\lambda}}{3\lambda}. &&
\end{flalign*}
The affine-drift test function $h_3$ defined in \Cref{def:isqk_affine} is
\begin{flalign*}
& h_3(0,0)=0, &&\notag\\
& h_3(w,1)=w^2, &&\notag\\
& h_3\!\left(w,\tfrac{2}{3}\right)
   = w^2+\frac{1}{9\lambda^2}\!\left(6w\lambda+4e^{-\frac{3}{2}w\lambda}-4\right) ,&&\notag\\
& h_3\!\left(w,\tfrac{1}{3}\right)
   = w^2+\frac{1}{9\lambda^2}\!\left(-10+10e^{-3w\lambda}-8\widetilde{S}\!\left(\tfrac{3}{2}\lambda\right)e^{-3w\lambda}+8\widetilde{S}\!\left(\tfrac{3}{2}\lambda\right)e^{-\frac{3}{2}w\lambda}\right)+\frac{1}{3\lambda}\!\left(2\E[S]-2e^{-3w\lambda}\E[S]+6w\right). &&
\end{flalign*}

\subsection{ISQ-4 Test Functions} For the ISQ-4 system, the constant-drift test function $g_4$ defined in \Cref{eqn:isqk_constant} is
\begin{flalign*}
& g_4(0,0)=0, &&\\
& g_4(w,1)=w, &&\\
& g_4(w,\tfrac{3}{4})
   = w+\frac{1}{4\lambda}\left(1-e^{-\frac{4w\lambda}{3}} \right), &&\\
& g_4(w, \tfrac{2}{4}) = w+\dfrac{3}{4\lambda}\left(1-e^{-2w\lambda}+\widetilde{S}(\tfrac{4}{3}\lambda)e^{-2w\lambda}-\widetilde{S}(\tfrac{4}{3}\lambda)e^{-\frac{4}{3}w\lambda}
\right),
\\
& g_4(w,\tfrac{1}{4}) = w + \dfrac{3}{2\lambda}\left(1-e^{-4w\lambda}+\widetilde{S}(2\lambda)e^{-4w\lambda}-\widetilde{S}(2\lambda)\widetilde{S}(\tfrac{4}{3}\lambda)e^{-4w\lambda}-\widetilde{S}(2\lambda)e^{-2w\lambda}+\widetilde{S}(\tfrac{4}{3}\lambda)\widetilde{S}(2\label{})e^{-2w\lambda}\right)  \\
&\qquad+\dfrac{9}{8\lambda}\left(\widetilde{S}(\tfrac{4}{3}\lambda)^2e^{-4w\lambda}-\widetilde{S}(\tfrac{4}{3}\lambda)^2e^{-\frac{4w\lambda}{3}} \right).
\end{flalign*}
The affine-drift test function $h_4$ defined in \Cref{def:isqk_affine} is 
\begin{flalign*}
    & h_4(0,0)=0, &&\\
    & h_4(w,1)=w^2,\\
    & h_4(w,\tfrac{3}{4})=w^2-\frac{3}{8\lambda^2}\left(e^{-\frac{4\lambda}{3}}\right)+\frac{w}{2\lambda},\\
    & h_4(w,\tfrac{2}{4})=w^2+\frac{9}{8\lambda^2}\left( -1+e^{-2w\lambda}-\widetilde{S}(\tfrac{4}{3}\lambda)e^{-2w\lambda}+\widetilde{S}(\tfrac{4}{3}\lambda)e^{-\frac{4w\lambda}{3}}\right)+\dfrac{3w+\E[S]-e^{-2w\lambda}E[S]}{2\lambda},\\
    &h(w,\tfrac{1}{4})=w^2+\frac{15}{8\lambda^2}\left(-1+e^{-4w\lambda} \right) +\frac{9}{4\lambda^2}\left(\widetilde{S}(2\lambda)e^{-4w\lambda}+9\widetilde{S}(\tfrac{4}{3}\lambda)\widetilde{S}(2\lambda)e^{-4w\lambda}+\widetilde{S}(2\lambda)e^{-2w\lambda}\right)\\
    &\qquad +\dfrac{27}{16\lambda^2}\left(\widetilde{S}(\tfrac{4}{3}\lambda)^2e^{-4w\lambda_x}+\widetilde{S}(\tfrac{4}{3}\lambda)^2e^{-\frac{4w\lambda}{3}} \right)+\dfrac{3w+2\E[S]-2e^{-4w\lambda}E[S]+\widetilde{S}(2\lambda)\E[S](e^{-4w\lambda}-e^{-2w\lambda})}{\lambda}
\end{flalign*}

\subsection{ISQ-5 Test Functions} For the ISQ-5 system, the constant-drift test function $g_5$ defined in \Cref{eqn:isqk_constant} is
\begin{flalign*}
    & g_5(0,0)=0, &&\\
    & g_5(w,1)=1,\\
    & g_5(w,\tfrac{4}{5})=w+\frac{1}{5\lambda}\left(1-e^{-\frac{5w\lambda}{4}} \right),\\
    & g_5(w,\tfrac{3}{5})=w+\frac{1}{5\lambda}\left(3-3e^{-\frac{5w\lambda}{3}}+4\widetilde{S}(\tfrac{5}{4}\lambda)e^{-\frac{5w\lambda}{3}}-4\widetilde{S}(\tfrac{5}{4}\lambda)e^{-\frac{5w\lambda}{4}} \right),\\
    & g_5(w,\tfrac{2}{5})=w+\frac{6}{5\lambda}\left(1-e^{-\frac{5w\lambda}{2}}\right)+\frac{8}{5\lambda}\left(\widetilde{S}(\tfrac{5}{4}\lambda)^2e^{-\frac{5w\lambda}{2}}-\widetilde{S}(\tfrac{5}{4}\lambda)^2e^{-\frac{5w\lambda}{4}} \right)+\frac{9}{5\lambda}\left(\widetilde{S}(\tfrac{5}{3}\lambda)e^{-\frac{5w\lambda}{2}}+\widetilde{S}(\tfrac{5}{3}\lambda)e^{-\frac{5w\lambda}{3}}\right),\\
    &\qquad +\frac{12}{5\lambda}\left(\widetilde{S}(\tfrac{5}{4}\lambda)\widetilde{S}(\tfrac{5}{3}\lambda)e^{-\frac{5w\lambda}{3}}-\widetilde{S}(\tfrac{5}{4}\lambda)\widetilde{S}(\tfrac{5}{3}\lambda)e^{-\frac{5w\lambda}{2}} \right)\\
    & g_5(w,\tfrac{1}{5})=w+\frac{2}{\lambda}\left(1-e^{-5w\lambda} \right)+\frac{32\widetilde{S}(\tfrac{5}{4}\lambda)^3}{15\lambda}\left(e^{-5w\lambda}-e^{-\tfrac{5w\lambda}{4}}\right)+\frac{27}{10\lambda}\left(\widetilde{S}(\tfrac{5}{3}\lambda)^2e^{-5w\lambda}+\widetilde{S}(\tfrac{5}{3}\lambda)^2e^{-\tfrac{5w\lambda}{3}} \right)\\
    &\qquad + \frac{18}{5\lambda}\left(\widetilde{S}(\tfrac{5}{3}\lambda)\widetilde{S}(\tfrac{5}{2}\lambda)e^{-\tfrac{5w\lambda}{2}}-\widetilde{S}(\tfrac{5}{3}\lambda)\widetilde{S}(\tfrac{5}{2}\lambda)e^{-5w\lambda}-\widetilde{S}(\tfrac{5}{4}\lambda)\widetilde{S}(\tfrac{5}{3}\lambda)^2e^{-5w\lambda}+\widetilde{S}(\tfrac{5}{4}\lambda)\widetilde{S}(\tfrac{5}{3}\lambda)^2e^{-\tfrac{5w\lambda}{3}} \right)\\
    &\qquad + \frac{12}{5\lambda}\left(\widetilde{S}(\tfrac{5}{2}\lambda)e^{-5w\lambda}-\widetilde{S}(\tfrac{5}{2}\lambda)e^{-\tfrac{5w\lambda}{2}} \right)+\frac{16}{5\lambda}\left(\widetilde{S}(\tfrac{5}{4}\lambda)^2\widetilde{S}(\tfrac{5}{2}\lambda)e^{-\tfrac{5w\lambda}{2}}-\widetilde{S}(\tfrac{5}{4}\lambda)^2\widetilde{S}(\tfrac{5}{2}\lambda)e^{-5w\lambda}\right)\\
    &\qquad + \frac{32}{15\lambda}\left(\widetilde{S}(\tfrac{5}{4}\lambda)^3e^{-5w\lambda}-\widetilde{S}(\tfrac{5}{4}\lambda)^3e^{-\tfrac{5w\lambda}{4}} \right)
\end{flalign*}
The affine-drift test function $h_5$ defined in \Cref{def:isqk_affine} is 
\begin{flalign*}
    & h_5(0,0)=0, &&\\
    & h_5(w,1)=w^2,\\
    & h_5(w,\tfrac{4}{5})=w^2+\dfrac{2w}{5\lambda}+\dfrac{8}{25\lambda^2}\left( 
    -1+e^{-\frac{5w\lambda}{4}}
    \right),\\
    & h_5(w,\tfrac{3}{5})=w^2+\dfrac{6w}{5\lambda}+\dfrac{2\E[S]}{5\lambda}-\dfrac{2\E[S]}{5\lambda}e^{-\frac{5w\lambda}{3}}+\dfrac{26}{25\lambda^2}\left(e^{-\frac{5w\lambda}{2}}-1 \right)+\dfrac{32\widetilde{S}(\tfrac{5}{4}\lambda)}{25\lambda^2}\left(e^{-\frac{5w\lambda}{4}}-e^{-\frac{5w\lambda}{3}} \right),\\
    & h_5(w,\tfrac{2}{5})=w^2+\dfrac{12w}{5\lambda}+\dfrac{8\E[S]}{5\lambda}\left( 
    1-e^{-\frac{5w\lambda}{2}}
    \right)+\dfrac{6\widetilde{S}(\tfrac{5}{3}\lambda)\E[S]}{5\lambda}\left(e^{-\frac{5w\lambda}{2}}-e^{-\frac{5w\lambda}{3}} \right)+\dfrac{2}{\lambda^2}\left(e^{-\frac{5w\lambda}{2}}-1\right)\\
    &\qquad+\dfrac{64\widetilde{S}(\tfrac{5}{4}\lambda)^2}{25\lambda^2}\left(e^{-\frac{5w\lambda}{4}}-e^{-\frac{5w\lambda}{2}} \right)+\dfrac{78\widetilde{S}(\tfrac{5}{3}\lambda)}{25\lambda^2}\left(e^{-\frac{5w\lambda}{3}}-e^{-\frac{5w\lambda}{2}}\right)+\dfrac{96\widetilde{S}(\tfrac{5}{4}\lambda)\widetilde{S}(\tfrac{5}{3}\lambda)}{25\lambda^2}\left(e^{-\frac{5w\lambda}{2}}-e^{-\frac{5w\lambda}{3}} \right),\\
    & h_5(w,\tfrac{1}{5})=w^2+\dfrac{4w}{\lambda}+\dfrac{4\E[S]}{\lambda}-\dfrac{4\E[S]e^{-5w\lambda}}{\lambda}+\dfrac{9\widetilde{S}(\tfrac{5}{3}\lambda)^2\E[S]}{\lambda}\left(e^{-5w\lambda}-e^{-\frac{5w\lambda}{3}} \right)\\
    &\qquad+\dfrac{12\widetilde{S}(\tfrac{5}{3}\lambda)\widetilde{S}(\tfrac{5}{2}\lambda)\E[S]}{5\lambda}\left(e^{-\frac{5w\lambda}{2}}-e^{-5w\lambda} \right)+\dfrac{16\widetilde{S}(\tfrac{5}{2}\lambda)\E[S]}{5\lambda}\left(e^{-5w\lambda}-e^{-\frac{5w\lambda}{2}} \right)\\
    &\qquad+\dfrac{14}{5\lambda^2}\left(e^{-5w\lambda}-1\right)+\dfrac{256\widetilde{S}(\tfrac{5}{4}\lambda)^3}{75\lambda^2}\left(e^{-\frac{5w\lambda}{4}}-e^{-5w\lambda} \right)+\dfrac{117\widetilde{S}(\tfrac{5}{3}\lambda)^2}{25\lambda^2}\left( e^{-\frac{5w\lambda}{2}}-e^{-5w\lambda}\right)\\
    &\qquad+\dfrac{144\widetilde{S}(\tfrac{5}{4}\lambda)\widetilde{S}(\tfrac{5}{3}\lambda)^2}{25\lambda^2}\left(e^{-5w\lambda}-e^{-\frac{5w\lambda}{3}} \right)+\dfrac{4\widetilde{S}(\tfrac{5}{2}\lambda)}{\lambda^2}\left(e^{-\frac{5w\lambda}{3}}-e^{-5w\lambda} \right)+\\
    &\qquad+\dfrac{128\widetilde{S}(\tfrac{5}{4}\lambda)^2\widetilde{S}(\tfrac{5}{2}\lambda)}{25\lambda^2}\left(e^{-5w\lambda}-e^{-\frac{5w\lambda}{2}} \right)\dfrac{156\widetilde{S}(\tfrac{5}{3}\lambda)\widetilde{S}(\tfrac{5}{2}\lambda)}{25\lambda^2}\left( e^{-5w\lambda}-e^{-\frac{5w\lambda}{2}}\right)\\
    &\qquad+\dfrac{192\widetilde{S}(\tfrac{5}{4}\lambda)\widetilde{S}(\tfrac{5}{3}\lambda)\widetilde{S}(\tfrac{5}{2}\lambda)}{25\lambda^2}\left(e^{-\frac{5w\lambda}{2}}-e^{-5w\lambda} \right).
\end{flalign*}
\section{SEK Policy}
\label{apdx:sek}

To compare the Uncertainty Improvement Ratio (UIR) of the SEK policy with our ISQ-Recycling lower bound relative to the MixEx bound, we evaluate the Practical-SEK policy (see Definition 7.1 of \citet{Grosof2026}). 
We consider the setting with $k=2$ and exponentially distributed $Exp(1)$ job sizes. 
To obtain an improved upper bound relative to SRPT-2, we run the Practical-SEK policy using multiple thresholds, $\epsilon \in \{0.1, 0.2, 0.5, 1.0\}$, and take, for each load $\rho$, the minimum of the four simulated mean response times. 

Given that the simulated Practical-SEK response time is a novel upper bound, the definition of UIR (\Cref{def:uir}) is given by,
\begin{align*}
    \mathrm{UIR} = \frac{\E[T_{\text{upper}}] - \E[T_{\text{novel-upper}}]}{\E[T_{\text{upper}}] - \E[T_{\text{lower}}]}.
\end{align*}

As shown in \Cref{fig:sek_uir}, the ISQ-Recycling bound achieves substantially higher UIR values across $\rho > 0.35$, attaining a maximum UIR of 34\%, while the Practical-SEK policy achieves only 3\% maximum UIR. 
Overall, ISQ-Recycling provides roughly an order-of-magnitude improvement in the maximum UIR compared to Practical-SEK.

\end{document}